\documentclass[a4paper,UKenglish,cleveref, autoref, thm-restate]{lipics-v2021}
\pdfoutput=1 %uncomment to ensure pdflatex processing (mandatatory e.g. to submit to arXiv)
\hideLIPIcs  %uncomment to remove references to LIPIcs series (logo, DOI, ...), e.g. when preparing a pre-final version to be uploaded to arXiv or another public repository

\usepackage{amsmath}
\usepackage{cite}

\usepackage{todonotes}

\usepackage{amsthm}
\usepackage{amssymb}
\usepackage{bbm}
\usepackage{amsfonts}
\usepackage{mathrsfs}
\usepackage{wrapfig} %wrap the text around the picture

\usepackage{cleveref}
\usepackage{nicefrac}
\usepackage{todonotes}
\usepackage{amsfonts}
\usepackage[capitalise]{cleveref} 
\newcommand{\Prob}{\mathbb{P}}
\newcommand{\Exp}{\mathbb{E}}
\newcommand{\Geom}{\ensuremath{\text{Geo}}}
\newcommand{\Bin}{\ensuremath{\text{Bin}}}
\newcommand{\Var}{\text{Var}}
\newcommand{\tmx}[1]{\ensuremath{\tau^{\text{max}}_{#1}}}
\newcommand{\tmn}[1]{\ensuremath{\tau^{\text{min}}_{#1}}}
\newcommand{\LabEx}{A}
\newcommand{\ContDel}[1]{\ensuremath{B^{(#1)}}}
\newcommand{\ContFull}[1]{\ensuremath{C^{(#1)}}}
\newcommand{\ContSum}[1]{\ensuremath{\mathbf{C}_{#1}}}
\newcommand{\DiscDel}[2]{\ensuremath{D^{(#1),#2}}}
\newcommand{\DiscFull}[2]{\ensuremath{F^{(#1),#2}}}
\newcommand{\DiscSum}[2]{\ensuremath{\mathbf{F}_{(#1),#2}}}
\newcommand{\ball}[3]{\mathcal{B}_{#1}\left(#2,#3\right)}
\newcommand{\distribution}{label multiplicity distribution}

\usepackage{thmtools} 
\usepackage{thm-restate}

\usepackage{caption} %for subfigure - join multiple figures and add captions
\usepackage{subcaption}
\usepackage{changepage} %inside figure we can put \begin{adjustwidth}{-1cm}{-1cm} \end{adjustwidth} and we can ignore the margins
\usepackage{mathtools} % write text [under]{over} arrow $\xrightarrow[\text{world}]{\text{hello}}$

\usepackage{graphicx}
\graphicspath{{Figures/}}

\usepackage{etoolbox}
\newcommand{\appsymb}{$\star$}
\newcommand{\appref}[1]{{\hyperref[proof:#1]{\appsymb}}}
\newcommand{\appLink}[1]{{\hyperref[#1]{\appsymb}}}

\title{Sharp Thresholds for Temporal Motifs and Doubling Time in Random Temporal Graphs} %TODO Please add

\titlerunning{Sharp Thresholds: Temporal Motifs and Doubling Time in Random Temporal Graphs} %TODO optional, please use if title is longer than one line

\author{Henry Austin}{Department of Computer Science, Durham University, UK}{henry.b.austin@durham.ac.uk}{https://orcid.org/0009-0007-5685-4944}{}

\author{George B. Mertzios}{Department of Computer Science, Durham University, UK}{george.mertzios@durham.ac.uk}{https://orcid.org/0000-0001-7182-585X}{Supported by the EPSRC grant EP/P020372/1.}

\author{Paul G. Spirakis}{Department of Computer Science, University of Liverpool, UK}{p.spirakis@liverpool.ac.uk}{https://orcid.org/0000-0001-5396-3749}{Supported by the EPSRC
grant EP/P02002X/1.}

\authorrunning{Henry Austin, George B. Mertzios, and Paul G. Spirakis} %TODO mandatory. First: Use abbreviated first/middle names. Second (only in severe cases): Use first author plus 'et al.'

\Copyright{Henry Austin, George B. Mertzios, and Paul G. Spirakis} %TODO mandatory, please use full first names. LIPIcs license is "CC-BY";  http://creativecommons.org/licenses/by/3.0/

\ccsdesc[500]{Theory of computation~Randomness, geometry and discrete structures}
\ccsdesc[500]{Mathematics of computing~Discrete mathematics}
\keywords{Random temporal graph, $\delta$-temporal motif, 
sharp upper and lower bounds, doubling time.} %TODO mandatory; please add comma-separated list of keywords

\category{} %optional, e.g. invited paper

\relatedversion{} %optional, e.g. full version hosted on arXiv, HAL, or other respository/website
\nolinenumbers %uncomment to disable line numbering
\hideLIPIcs

\EventEditors{John Q. Open and Joan R. Access}
\EventNoEds{2}
\EventLongTitle{42nd Conference on Very Important Topics (CVIT 2016)}
\EventShortTitle{CVIT 2016}
\EventAcronym{CVIT}
\EventYear{2016}
\EventDate{December 24--27, 2016}
\EventLocation{Little Whinging, United Kingdom}
\EventLogo{}
\SeriesVolume{42}
\ArticleNo{23}
\begin{document}
\maketitle

\begin{abstract}
In this paper we study two natural models of \textit{random temporal} graphs.
In the first, the \textit{continuous} model, each edge $e$ is assigned~$l_e$ labels, each drawn uniformly at random from $(0,1]$, where the numbers $l_e$ are independent random variables following the same discrete probability distribution. 
In the second, the \textit{discrete} model, 
the $l_e$ labels of each edge $e$ are chosen uniformly at random from a set $\{1,2,\ldots,T\}$. 
In both models we study the existence of \textit{$\delta$-temporal motifs}.
Here a $\delta$-temporal motif consists of a pair $(H,P)$, where $H$ is a fixed static graph and $P$ is a partial order over its edges. A temporal graph $\mathcal{G}=(G,\lambda)$ contains $(H,P)$ as a $\delta$-temporal motif if $\mathcal{G}$ has a simple temporal subgraph on the edges of $H$ whose time labels are ordered according to $P$, and whose life duration is at most $\delta$. 
We prove \textit{sharp existence thresholds} for all $\delta$-temporal motifs, and we identify a qualitatively different behavior from the analogous static thresholds in Erd\H{o}s-R\'{e}nyi random graphs. 
Applying the same techniques, we then characterize the growth of the largest $\delta$-temporal clique in the continuous variant of our random temporal graphs model. 
Finally, we consider the \textit{doubling~time} of the reachability ball centered on a small set of vertices of the random temporal graph as a natural proxy for temporal expansion. 
We prove \textit{sharp upper and lower bounds} for the maximum doubling time in the continuous model.\\

\end{abstract}

\section{Introduction}

% As the natural model for expressing binary relationships between objects, it is difficult to overstate the ubiquity of graphs throughout the study of algorithms, combinatorics and the modeling of real world systems.
% However, when we consider dynamic objects, where such relationships may change over time, the natural model is the much less well-studied ``temporal graph''.
% A temporal graph may be viewed as a static graph augmented with a \emph{labeling function} mapping each edge to a set of \emph{labels} each corresponding to a time when the edge is present.

% While temporal graphs have been studied in a variety of contexts ranging from sociology, to epidemiology, to even evolutionary game theory, the primary focus of most theoretical work concerns cataloging differences between the properties of static and temporal graphs. This program of research has been highly successful, uncovering both structural and algorithmic distinctions from the static case. 

A temporal (or dynamic) graph is a graph whose topology is subject to discrete changes over time. 
This paradigm reflects the structure and operation of a vast variety of modern real-life networks; 
social networks, wired or wireless networks whose links change dynamically, transportation networks, and several physical systems are only a few examples of networks that change over time 
\cite{michailCACM,Nicosia-book-chapter-13,holme2019temporal,KMS-MFCS23, Santoro11}. 
Embarking from the fundamental model for temporal graphs, introduced by Kempe et al.~\cite{kempe00} with one (integer) time-label per edge, and from its extension which allows multiple (integer) time-edges per edge~\cite{MertziosMCS13}, we consider here a relaxed version of the model where the time-labels are drawn from the positive real numbers in~$\mathbb{R}_{+}$ (see \cref{temp-graph-def}). 
This relaxation of the model allows us to randomly draw each time label uniformly at random from a time interval, e.g.~from $(0,1]$.

\begin{definition}[temporal graph]
\label{temp-graph-def} A \emph{temporal graph} is a pair $\mathcal{G}=(G,\lambda)$,
where $G=(V,E)$ is a footprint (static) graph with $n$ denoting the number of vertices in $V$, and $\lambda :E\rightarrow 2^{\mathbb{R}_{+}}$ is a \emph{labeling} function that assigns to every edge of $G$ a finite set of time labels, each being a positive real number. 
$\mathcal{G}$ is a \emph{simple} temporal graph if every edge has exactly one time label, i.e.~the labeling function is $\lambda :E\rightarrow \mathbb{R}_{+}$.
\end{definition}

Temporal graphs are traditionally used to model information flow which, due to causality, can only pass along a sequence of edges with increasing time labels. 
Motivated by this, the majority of work concerning temporal graphs is based 
on the notion of temporal paths and other
``path-related'' notions, such as temporal analogues of distance, diameter,
reachability, exploration, and centrality~\cite{AkridaGMS16,ErlebachHK15,MertziosMCS13,MichailS-TSP16,AkridaGMS-TOCS17,EnrightMMZ21,MolterRZ24,EnrightMM25,DogeasEKMM24,RymarMNN23,KlobasMMS25,KlobasMMNZ23}. 
Only relatively recently have attempts been made to define ``non-path'' temporal graph problems, such as temporal analogues of clique \cite{viardClique, viardCliqueTCS,neidermeier,DavotEMM25}, 
vertex cover \cite{AkridaMSZ18,HammKMS22}, 
coloring \cite{MertziosMZ21,MarinoS22,AdamsonMS26,IbiapinaNRR25}, 
matching \cite{MertziosMNZZ20,BasteBR20,CioniD0S025}, 
vertex separators \cite{MaackMNR23,FluschnikMNRZ20,HarutyunyanKP23}, 
and transitivity \cite{MertziosMRSZ25}. 

Continuing in this line of research on ``non-path'' temporal problems, we mainly focus in this paper on the existence of a given (arbitrary) \emph{temporal motif} in a temporal graph. 
A temporal motif consists of a pair $(H,P)$, where $H$ is a fixed (static) graph and $P$ is a partial ordering over the edges of $H$. 
A temporal graph $\mathcal{G}=(G,\lambda)$ is said to contain a temporal motif $(H,P)$, if $H$ is contained in the footprint $G$, and the edges of $H$ have time labels which are ordered in agreement with the partial order $P$. 
The existence and frequency of temporal motifs is a question of both theoretical and practical importance. On the one hand, theoretically, temporal motifs stand as an intermediate object containing less temporal information than a temporal subgraph, but more temporal information than a footprint subgraph. In particular, they are well suited for capturing notions of causality and are less sensitive to the precise numerical timings of labels. On the other hand, temporal motifs are widely used in the practical analysis of temporal networks, for both identifying significant interaction patterns and determining the mechanics of network evolution \cite{sariyuce2025powerful}. An especially common technique in this context is comparing the frequency of temporal motifs in a given empirical network with that of some appropriate (and frequently ad hoc) network\cite{sariyuce2025powerful, doi:10.1137/19M1242252, kovanen2011temporal}. 
A strengthening of this concept, which we study, is the notion of a \emph{$\delta$-temporal motif} (i.e.~one where all of its constituent edge labels must lie within some window of length~$\delta$). These, in turn, represent a middle ground between a temporal subgraph and a temporal motif. They have seen application both in fraud detection~\cite{FraudDetection} and as a tool for studying different interaction scales in empirical networks~\cite{DeltaTemporalMotifs}.

Additionally, the vast majority of theoretical work has focused on deterministic worst-case temporal graphs.
While this has led to a relatively good understanding of the worst-case temporal graph properties, the approach provides little information about the vast majority of temporal graphs.
This distinction was recently and dramatically demonstrated, by the discovery that almost all simple temporal graphs permit nearly optimal spanners \cite{SharpThresholds}, despite the existence of temporal graphs without sub-quadratic spanners \cite{axiotis_et_al:LIPIcs.ICALP.2016.149}. 
Results such as these raise the general question: to what extent do such unique properties hold for \textit{random temporal graphs} rather than only for carefully constructed ones?

\subsection{Our Contribution}

In this paper we focus on non-path problems on randomly generated temporal graphs. We expand on earlier work by \cite{Cliques,SharpThresholds}, by exploring the properties of two model variants (one continuous and one discrete model) for random temporal graphs, generalizing the Random Simple Temporal Graph Model (RSTG) \cite{SharpThresholds,Cliques}.
For ease of reading we defer the formal definition of the models (see \cref{continuous-def,discrete-def}).
Intuitively, in the continuous model we sample the time labels uniformly at random from $(0,1]$ for each edge, where the number of labels on each edge is sampled independently for each edge from some discrete distribution.
The discrete model is the natural discretization of the continuous one, where the labels of each edge are chosen uniformly at random from a set $\{1,2,\ldots,T\}$. 
In our first result, we explore the existence thresholds for all fixed $\delta$-temporal motifs, in terms of the length $\delta$ of the permitted time window.
Since the temporal graphs we consider here typically have dense foot-print graphs, but disparate time labels, this serves as a useful, if imperfect, analogue for the existence thresholds found in sparse static random graphs.
In fact we identify a behavior that is similar, but subtly different from the existence thresholds of fixed static graphs in the Erd\H{o}s-R\'{e}nyi random graph model. 
In particular, the threshold is determined \textit{not} by the reciprocal of the density of the densest subgraph, but instead by the following quantity which we call the \emph{sparsity}.

\begin{definition}
    For a graph $G=(V,E)$, the \emph{sparsity} $\rho_G$ of $G$ is defined as follows,
    \[\rho_G=\min_{H \sqsubseteq G}\frac{|V_H|}{|E_H|-1},\]
    where $H \sqsubseteq G$ denotes that $H$ is a subgraph of $G$ containing at least $2$ edges and $V_H$ (resp. $E_H$) are the set of vertices (resp. edges) comprising $H$.
\end{definition}

Intuitively, this stems from the fact that a random temporal graph has many different intervals which could potentially contain a given $\delta$-temporal motif.
Individually, these intervals contain motifs at thresholds roughly in accordance with the existence of their footprint in an Erd\H{o}s-R\'{e}nyi random graph and the number of such intervals needed to cover the lifetime of a given temporal graph is inversely proportional to the length of the intervals.
However, a direct proof along these lines is made challenging due to the dependency between these intervals, especially when the density of the graph is itself a random variable.

This seemingly minor change, from the reciprocal of density to sparsity, is sufficient to cause qualitatively different threshold behavior from the sparse static case. For example, $\delta$-temporal motifs built from cycles of different sizes have asymptotically distinct existence thresholds in our model, unlike their static counterparts \cite{frieze2015introduction} in the Erd\H{o}s-R\'{e}nyi random graph.
Another consequence of this result is that the asymptotic thresholds are independent of the edge ordering of a given motif. 
This implies, for example, that time-respecting paths of a given length have the same asymptotic threshold as non-time-respecting paths of the same length.
Furthermore, we extend this behavior beyond fixed motifs to consider the size of the largest temporal motif with a clique footprint, and obtain bounds generalizing the results of \cite{Cliques}.

Our second main result considers a different celebrated property of static random graphs: their higher order connectivity. 
Since the existence of a good analogue for expansion in temporal graphs remains an open problem, we use the maximum doubling time of the reachability ball centered on small sets of vertices as a natural proxy. 
In our continuous model, when the number of labels per edge is fixed, we provide sharp upper and lower bounds on the maximum doubling time. 
As a direct consequence, we immediately obtain a sharp threshold for finite maximum doubling time in the Random Simple Temporal Graph (RSTG) model of \cite{SharpThresholds}. 
Interestingly, this corresponds precisely to the threshold for the existence of temporal source and sink vertices in that model. In contrast, the doubling time of any specific large set of vertices corresponds to the threshold for both point-to-point reachability and the existence of a giant component in the RSTG.
\section{Related Work}

Most research concerning temporal motifs has focused on either algorithmic techniques for counting or estimating the number of temporal motifs, or the application of these methodologies for the analysis of real networks \cite{sariyuce2025powerful}. A frequent issue in the latter is the construction of a ``null'' or ``reference'' model to compare the relative frequency of motifs against \cite{sariyuce2025powerful, kovanen2011temporal, doi:10.1137/19M1242252}. Despite this there has been very limited research into the distribution of motifs in random temporal graphs. A notable exception to this is \cite{AnalyticalModels}, which derives algorithms for calculating the expectation and the variance of the number of motifs appearing in a proposed random temporal graph model. While closely related, their work differs from ours in several ways. Firstly, they focus primarily on the construction of a model that can be statistically fit to data, and so use a stochastic block model footprint graph with Poisson distributed inter-edge appearance times. This permits them a greater degree of topological richness than is obtained in our simpler model, however the distribution of the labels on each edge is more restricted. Secondly, they focus primarily on providing an algorithm for the calculation of the first and second moment, rather than their application to find existence thresholds. Thirdly, they work only in the continuous setting and do not consider anything comparable to our discrete model. An interesting open problem is closing this gap, in particular providing equivalent results in a unified model capturing both the topological expressivity of the stochastic block model and temporal expressivity of our \distribution{} based approach. While not presented as a result on temporal motifs, \cite{Cliques} investigates the growth of the largest $\delta$-temporal clique, which implicitly provides thresholds for motifs with a clique footprint and a trivial partial order. In this work, we strictly generalize these results to the more powerful random graph models we consider.

Until recently, as discussed in \cite{SharpThresholds}, reachability problems in random temporal graphs have primarily been studied in other fields, such as population protocols, the gossip model or the random edge ordering model. For example, in population protocols the reachability properties of the underlying temporal graph underpin two of the model's most fundamental lower bounds: the time complexity of broadcast \cite{DBLP:conf/nca/MocquardSRA16} and leader election \cite{DBLP:journals/corr/abs-1906-11121}. However, recently, in part due to the seminal work of \cite{SharpThresholds}, there has been growing interest in the reachability properties of random temporal graphs from both a structural and algorithmic perspective. While other models have been considered (see for example \cite{baguley2026temporal,https://doi.org/10.1002/rsa.70040, brandenberger2025temporal}), there has been a focus on establishing the properties of the so-called ``Random Simple Temporal Graph'' defined as an Erd\H{o}s-R\'{e}nyi random graph with a linear order over its edges. Sharp thresholds have since been established at: $\frac{\log{n}}{n}$ for point-to-point reachability \cite{SharpThresholds} and the existence of a giant \cite{ReviewerSpecialty} (or even super constant \cite{atamanchuk2024size}) connected-component; $\frac{2\log{n}}{n}$ for the existence of temporal sources and sinks; $\frac{3\log{n}}{n}$ for full temporal connectivity; and $\frac{4\log{n}}{n}$ for the existence of pivotal (and non-sharply optimal) spanners \cite{SharpThresholds}. Additionally, there have been several results concerning various characteristics of the length of time-respecting paths in the RSTG and their connections to connectivity \cite{angel2020long, broutin2024increasing}. Our treatment of doubling time shares the most similarity with \cite{ReviewerSpecialty} as, similarly to there, we require bounds on the growth of reachability forests, however rather than using a foremost forest (as in \cite{ReviewerSpecialty, SharpThresholds}) or the random recursive tree (as in \cite{broutin2024increasing}), we instead derive the times directly via a simple coupling with a sum of geometric random variables. An additional distinction is our choice of models which are not necessarily simple and have asymptotically constant density. Despite this, when restricted to the interval $[a,a+p]$ with a single label per edge, sampling from our continuous model is equivalent to sampling from the RSTG with parameter $p$.
\section{Model and Definitions}

As a natural generalization of an important tool for the analysis of static graphs, some notion of a ``temporal motif'' has been studied in a variety of contexts and with a variety of definitions (see \cite{sariyuce2025powerful} for a survey). Particular points of distinction relevant to this work include whether an occurrence of a temporal motif must represent an induced subgraph (see \cite{DeltaTemporalMotifs} and \cite{AnalyticalModels} for a difference in opinion) and whether the associated order must be partial \cite{kovanen2011temporal}, or total \cite{DeltaTemporalMotifs, AnalyticalModels,SamplingMethodsForCountingTemporalMotifs, 10.1093/bioinformatics/btv227}. We make use of the following definition for (simple) $\delta$-temporal motifs.

\begin{definition}[$\delta$-temporal motif]
\label{motif-def}
    Let $H=(V_H,E_H)$ be a finite static graph, $\delta \in \mathbb{R}$ be a number and $P=(E_H,\prec)$ be a partial order over the edges of $H$. Then a temporal graph $\mathcal{G}=(V,E,\lambda)$ contains $(H,P)$ as a \emph{$\delta$-temporal motif} if and only if there exists a temporal subgraph $\mathcal{I}=(V',E',\lambda')$ and a mapping $\Phi: V_H\to V'$ such that:
    \begin{itemize}
        \item $\mathcal{I}$ \textbf{is a simple temporal subgraph of} $\mathcal{G}$: $V'\subseteq V$, $E'\subseteq E[V']$ and $\forall e \in E': \lambda'(e) \in \lambda(e)$
        \item $\Phi$ \textbf{is an isomorphism of} $H$ \textbf{and realizes the order over the labels}: $\Phi$ is an isomorphism of $H$, $\forall uv,wx \in E_H: uv \prec_{P} wx \implies \lambda'(\Phi(u)\Phi(v))< \lambda'(\Phi(w)\Phi(x))$. \footnote{\cref{motif-def} is given under the assumption that $P$ is a strict partial order. Alternatively a $\delta$-temporal motif could be defined with respect to a \textit{non-strict} partial order $P$, that is, $uv \prec_{P} wx \implies \lambda'(\Phi(u)\Phi(v))\leq \lambda'(\Phi(w)\Phi(x))$. With this alternative definition, we believe, but have not verified, that all results of the paper should remain the same, excluding the case where two labels are required to be equal by the order.}
        \item $\mathcal{I}$ \textbf{has a life duration of at most} $\delta$: $\max_{e_1,e_2 \in E'}\lambda'(e_1)-\lambda'(e_2)<\delta$.
    \end{itemize}
\end{definition}

To avoid trivialities we restrict ourselves to considering motifs containing at least $2$ edges, and describe any graph containing $1$ or fewer edges as \emph{trivial} \footnote{We note that the empty graph is a $\delta$-temporal motif of any sufficiently large graph. Furthermore, all simple motifs containing only a single edge appear as $\delta$-temporal motifs with thresholds $0$ and $1$ in the continuous and discrete models respectively.}. 
For two graphs, $G$ and $H$ we use $G \subseteq H$ to indicate that $G$ is a (non-empty) subgraph of $H$ and $G\sqsubseteq H$ to indicate that $G$ is a non-trivial subgraph of $H$.

In this paper we investigate two models of a random temporal graph. The first is a continuous model, while the second is a natural discretization of it. In both models, there are two forms of randomness: the \textit{number of labels} on each edge (which is sampled identically and independently at random from the ``\distribution{}'') and the \textit{labels themselves} (which are sampled uniformly at random from some interval depending on the model).
\begin{definition}
    We define the \emph{label multiplicity distribution} %%\distribution{} 
    $\psi$ to be any distribution over the non-negative integers with a positive and finite second moment.
    %such that $\psi$ is non-zero with positive probability and both the first and second moments exist.
\end{definition}
For a random variable $Q\sim \psi$, we denote $r=\Exp[Q]$ and $r_2=\Exp[Q^2]$. Furthermore, we denote $q=\Prob[Q>0]$.
We can now define both our continuous and discrete models.
\begin{definition}[Doubly Random Temporal Graph; continuous case]
\label{continuous-def}
    We define the distribution $\Gamma_{V}(\psi)$ of random temporal graphs on the vertex set $V$ with \distribution{} $\psi$. A sample $\mathcal{G}=(V,E,\lambda)$ from $\Gamma_V(\psi)$ is obtained as follows: 
    For each edge $e \in \binom{V}{2}$, we sample the set of edge occurrences $\lambda(e)$ by drawing $l_e \sim \psi$ samples uniformly at random from %%%the interval 
    $(0,1]$.
\end{definition}

\begin{definition}[Doubly Random Temporal Graph; discrete case]
\label{discrete-def}
    We define the distribution $\Gamma_{V}(\psi,T)$ of random temporal graphs on the vertex set $V$ with \distribution{}~$\psi$ and largest possible time-label $T$. A sample $\mathcal{G}=(V,E,\lambda)$ from $\Gamma_V(\psi,T)$ is obtained as follows:
    For each edge $e \in \binom{V}{2}$, we sample the set of edge occurrences $\lambda(e)$ by drawing $l_e \sim \psi$ samples uniformly at random from the set $\{1,...,T\}$.
\end{definition}
Throughout the paper,  
in the discrete model, we will only consider $\beta$-temporal $(H,P)$ motifs, where $\beta$ is at least $\text{height}(P)$, the length of the longest chain in $P$. 
The reason for this is that, if $\beta$ is smaller than the longest chain in $P$, then trivially a $\beta$-temporal motif $(H,P)$ does not exist. 
There is a natural relationship between the two models, based on the following notion.

\begin{definition}
    For a continuous temporal graph $\mathcal{G}=(V,E,\lambda)$ with all labels restricted to $(0,1]$, its \emph{$T$-discretization} is given by $\mathcal{H}=(V,E,\lambda')$ where $\lambda'(e)=\{\lceil T l\rceil : l \in \lambda(e)\}$.
\end{definition}
We immediately have the following observation connecting the two models.
\begin{observation}
    \label{obs: bridge 1}
    For all vertex sets $V$ and \distribution s $\psi$ the distribution $\Gamma_V(\psi,T)$ is identical to the distribution of $T$-discretizations of a random sample from~$\Gamma_V(\psi)$.
\end{observation}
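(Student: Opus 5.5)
We prove \cref{obs: bridge 1} by coupling the two sampling procedures edge by edge and checking that the joint laws coincide. In both $\Gamma_V(\psi)$ and $\Gamma_V(\psi,T)$, the labels on different edges are generated independently. Moreover, discretization acts on each edge separately, since $\lambda'(e)$ depends only on $\lambda(e)$. So it suffices to fix a single edge $e\in\binom{V}{2}$ and show that the law of $\lambda'(e)=\{\lceil Tl\rceil : l\in\lambda(e)\}$ equals the law of $\lambda(e)$ under $\Gamma_V(\psi,T)$. Independence across edges then gives equality of the full distributions, as product measures.

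For the single edge, we condition on the number of draws $l_e=k$. This number has law $\psi$ in both models. In the continuous model, the labels are produced by $k$ i.i.d. draws $U_1,\dots,U_k$, each uniform on $(0,1]$. For each $j\in\{1,\dots,T\}$ we have
\[
\Prob\bigl[\lceil TU_i\rceil = j\bigr]=\Prob\bigl[U_i\in((j-1)/T,\,j/T]\bigr]=1/T,
\]
so each $\lceil TU_i\rceil$ is uniform on $\{1,\dots,T\}$. Because $x\mapsto\lceil Tx\rceil$ is a deterministic map, the values $\lceil TU_1\rceil,\dots,\lceil TU_k\rceil$ are still i.i.d. Hence, conditional on $k$, the sequence of discretized draws has exactly the law of $k$ i.i.d. uniform samples from $\{1,\dots,T\}$, which is how $\Gamma_V(\psi,T)$ generates labels. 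Mixing over $k\sim\psi$ finishes the argument for the edge.

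The only point needing care is how collisions are represented. Several continuous labels can round to the same integer, and the discrete model can draw the same integer more than once. We handle this by carrying out the coupling at the level of the ordered sequence, or equivalently the multiset, of the $k$ draws. The set $\lambda(e)$ in either model is obtained from that multiset by one and the same deterministic function, so equal laws for the draw sequences give equal laws for the label sets. No other step poses a real obstacle.
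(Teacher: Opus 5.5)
Your proof is correct and is exactly the routine verification the paper leaves implicit; the paper states the observation as immediate and gives no proof. The ingredients are independence across edges, the fact that $\lceil TU\rceil$ is uniform on $\{1,\dots,T\}$ for $U$ uniform on $(0,1]$, and mixing over $l_e\sim\psi$. Your handling of collisions is also right: the observation only holds if the discrete model draws its $l_e$ labels i.i.d.\ with replacement, which matches the paper later treating $\lambda(e)$ as a tuple of sampled labels.
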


Furthermore, we consider the higher order connectivity of these temporal graphs. 
Due to the lack of a good definition for the notion of expansion for temporal graphs, we choose to use the \textit{doubling time} of reachability balls as a proxy for expansion.
\begin{definition}
    \label{def: reach}
        Let $\mathcal{G}=(V,E,\lambda)$ be a temporal graph and $S\subset V$ be a set of vertices such that $|S|\leq \frac{|V|}{2}$. We define the following notions:
        \begin{enumerate}
            \item The $t$-step reachability ball $\ball{\mathcal{G}}{S}{t}$ of $S$ on $\mathcal{G}$ to be all vertices reachable via a time-respecting path initiated from a vertex in $S$ and arriving by time $t$,
            \item The doubling time $\text{Double}_{\mathcal{G}}(S)=\min\{t \in \mathbb{N}_0:|\ball{\mathcal{G}}{S}{t}|\geq 2|S|\}$ of a set of vertices to be the minimum time required for the reachability ball to double in size. If $\{t \in \mathbb{N}_0:|\ball{\mathcal{G}}{S}{t}|\geq 2|S|\}=\emptyset$, we define $\text{Double}_{\mathcal{G}}(S)=\infty$,
            \item The doubling time $\textbf{Double}(\mathcal{G})=\max_{S \subset V: |S|\leq \frac{|V|}{2}} \text{Double}_{\mathcal{G}}(S)$ of a temporal graph to be the maximum doubling time over all small sets of vertices.
        \end{enumerate}
    \end{definition}
Finally, we say that an event concerning a temporal graph occurs "with high probability" if the probability of that event approaches $1$ as the size of the temporal graph grows. 
\section{Results}
Our first result is on existence thresholds for fixed $\delta$-temporal motifs in the continuous variant of our model.
\begin{theorem}
    \label{thm: Continuous Existence}
    Let $H=(V_H,E_H)$ be a non-trivial static graph, $P=(E_H,\prec)$ be any partial order over the edges of $H$ and $\rho_H=\min_{I \sqsubseteq H}\frac{V_I}{E_I-1}$. A random temporal graph drawn from $\Gamma_{[n]}(\psi)$:
    \begin{itemize}
        \item Contains $(H,P)$ as a $\delta(n)$-temporal motif with high probability if $\delta(n)=\omega(n^{-\rho_H})$.
        \item Does not contain $(H,P)$ as a $\delta(n)$-temporal motif with high probability if $\delta(n)=o(n^{-\rho_H})$.
    \end{itemize}
\end{theorem}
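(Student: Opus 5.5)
We prove both directions with moment arguments over copies of subgraphs of $H$. The key observation is that a set of $m$ labels lies in a window of length $\delta$ with probability $\Theta(\delta^{m-1})$, not $\delta^m$: the first label is free and the others must fall near it. For a fixed copy of a subgraph $I$ of $H$, with one chosen label per edge, the probability that the labels respect a fixed linear extension of $P$ restricted to $I$ and span less than $\delta$ is $\Theta(\delta^{|E_I|-1})$. The ordering only costs a constant factor $1/|E_I|!$ or more, which is why the threshold ignores $P$. There are $\Theta(n^{|V_I|})$ copies of $I$ in $K_n$. Hence the expected number of $\delta$-temporal copies of $(I,P|_I)$ is $\Theta(n^{|V_I|}\delta^{|E_I|-1})$, up to constants depending on $\psi$. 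To handle the random multiplicity, we count pairs (copy, choice of one label per edge). By independence across edges, the expected number of label choices on a fixed edge is $r=\Exp[Q]$, so the first moment becomes $\Theta(r^{|E_I|} n^{|V_I|}\delta^{|E_I|-1})$.

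\textbf{$0$-statement.} Let $I\sqsubseteq H$ attain $\rho_H$. Any $\delta$-temporal occurrence of $(H,P)$ contains a $\delta$-temporal occurrence of $I$ with some partial order, in particular with the empty one. By Markov's inequality on the number $X_I$ of such labelled occurrences,
\[
\Prob[X_I>0]\le \Exp[X_I]=O\!\left(n^{|V_I|}\delta^{|E_I|-1}\right)=O\!\left((n^{\rho_H}\delta)^{|E_I|-1}\right)=o(1),
\]
using $|V_I|=\rho_H(|E_I|-1)$ and $\delta=o(n^{-\rho_H})$.

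\textbf{$1$-statement.} By monotonicity in $\delta$, we may assume $\delta=\omega(n^{-\rho_H})$ but $\delta\to 0$ slowly, say $\delta\le n^{-\rho_H}g(n)$ for a slowly growing $g$. Let $X$ count labelled $\delta$-temporal copies of $(H,P)$; to keep dependence manageable, count only copies where each edge uses its first label. We apply the second moment method, $\Prob[X=0]\le \Var[X]/\Exp[X]^2$. Group pairs of copies by their intersection $J\subseteq H$:
\begin{itemize}
\item If $J$ has no edges, the two copies are independent, apart from negligible corrections when they share vertices only, which contribute $O(1/n)$ relatively.
\item If $J$ has exactly one edge, the shared label pins both windows near the same point. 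The joint probability is $\Theta(\delta^{2|E_H|-2})$, which equals the product of the marginals, so this term contributes $O(\Exp[X]^2/n^2)$ relatively because of the two lost vertices.
\item If $J$ is non-trivial, the joint probability is $\Theta(\delta^{2|E_H|-|E_J|-1})$ and the number of pairs is $n^{2|V_H|-|V_J|}$. The ratio to $\Exp[X]^2$ is therefore $\Theta\big(1/(n^{|V_J|}\delta^{|E_J|-1})\big)$, which is $o(1)$ precisely because $n^{\rho_J}\delta\ge n^{\rho_H}\delta\to\infty$ for every $J\sqsubseteq H$.
\end{itemize}
The probability that a given edge has at least one label is $q>0$, which only affects constants.

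\textbf{Main obstacle.} The joint window probability for overlapping copies is the delicate step: the union of the two label sets must be shown to have total "spread cost" $\delta^{2|E_H|-|E_J|-1}$. The shared labels lie in one window and each copy's private labels lie within $\delta$ of it, so the union spans less than $2\delta$. That gives the bound $O(\delta^{(2|E_H|-|E_J|)-1})$, since all labels of the union lie within $2\delta$ of one label. I would state this as a lemma: $m$ i.i.d. uniforms on $(0,1]$ lie in a common window of length $c\delta$ with probability at most $m(c\delta)^{m-1}$, with a matching lower bound $\Omega(\delta^{m-1})$ when the order is fixed. The rest is bookkeeping over the finitely many intersection patterns $J$.
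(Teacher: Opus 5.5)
Your proposal is correct and follows the paper's proof. The $0$-statement is a first-moment (Markov) bound on the sparsity-attaining subgraph $I$; the paper itself notes this suffices in place of \cref{lem: motif-upper}. The $1$-statement is a second-moment computation grouped by the shared subgraph $J$, with overlapping pairs bounded by confining all $2|E_H|-|E_J|$ labels to a window of length $2\delta$, exactly as in \cref{lem: variance}. Counting only first labels is a more direct form of the paper's reduction to a Bernoulli label multiplicity distribution via \cref{lem: domination}, since the first labels alone form a sample from the Bernoulli($q$) model.
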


For the discrete model, recall that, for any choice of a fixed graph $H$ and a partial order~$P$ over its edges, a sample from $\Gamma_{[n]}(\psi,T)$ cannot contain $(H,P)$ as a $\beta$-temporal motif, if~$\beta < \text{height}(P)$. 
Therefore we assume below without loss of generality that $\beta \geq \text{height}(P)$. 
We obtain the following equivalent result for the discrete model.
\begin{theorem}
    \label{thm: Discrete Existence}
    Let $H=(V_H,E_H)$ be a non-trivial static graph, $P=(E_H,\prec)$ be any partial order over the edges of $H$ and $\rho_H=\min_{I \sqsubseteq H}\frac{V_I}{E_I-1}$. Suppose that there exists $n_0>0$ such that $\min_{n\geq n_0}\beta(n)\geq \text{height}(P)$. Then, for $T(n)\geq\beta(n)$, a random temporal graph drawn from $\Gamma_{[n]}(\psi,T(n))$
    \begin{itemize}
        \item Contains $(H,P)$ as a $\beta(n)$-temporal motif with high probability if $\frac{\beta(n)}{T(n)}=\omega(n^{-\rho_H})$.
        \item Does not contain $(H,P)$ as a $\beta(n)$-temporal motif with high probability if $\frac{\beta(n)}{T(n)}=o(n^{-\rho_H})$.
    \end{itemize}
\end{theorem}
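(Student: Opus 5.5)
The plan is to deduce \cref{thm: Discrete Existence} from \cref{thm: Continuous Existence} via \cref{obs: bridge 1}. We realise $\mathcal{H}\sim\Gamma_{[n]}(\psi,T)$ as the $T$-discretization of $\mathcal{G}\sim\Gamma_{[n]}(\psi)$. The first step is to compare motif occurrences in $\mathcal{G}$ and $\mathcal{H}$, separately for the non-existence and the existence directions.

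For the non-existence direction, suppose $\mathcal{H}$ contains $(H,P)$ as a $\beta$-temporal motif. Its discrete labels lie in a window $\{s,\dots,s+\beta-1\}$ (strict difference $<\beta$). The corresponding continuous labels in $\mathcal{G}$ therefore lie in an interval of length at most $\beta/T$, and the footprint copy is the same. A strict order between discrete labels $\lceil Tl\rceil<\lceil Tl'\rceil$ forces $l<l'$, so the order $P$ is realised in $\mathcal{G}$. Hence $\mathcal{G}$ contains $(H,P)$ as a $\delta$-temporal motif with $\delta=2\beta/T$, where the factor $2$ absorbs the strict inequality. If $\beta/T=o(n^{-\rho_H})$, then $\delta=o(n^{-\rho_H})$, and \cref{thm: Continuous Existence} says this happens with probability $o(1)$.

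For the existence direction, the subtlety is that continuous labels which are distinct but fall into the same discrete slot lose their strict order. I handle this by turning the order into spacing. Let $h=\text{height}(P)\le\beta$. Refine $P$ to a linear extension, or simply use a level function $\ell:E_H\to\{1,\dots,h\}$ compatible with $P$. Set $\delta=\beta/T$ and split $[a,a+\delta)$ into $h$ consecutive subintervals of length $\delta/h$. It then suffices that $\mathcal{G}$ contains a copy of $H$ in which each edge $e$ has a label in the $\ell(e)$-th subinterval, with those subintervals aligned to discrete slot boundaries. This needs $\delta/h\ge 1/T$, i.e. $\beta\ge h$, which holds by assumption. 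The resulting discrete labels are then strictly ordered across levels and have span less than $\beta$.

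However, \cref{thm: Continuous Existence} does not directly give this ``banded'' version. So the key step is to reprove its first-moment/second-moment upper-bound argument for the banded event, and I expect this to be the main obstacle. Concretely, I cover $(0,1]$ by about $T/\beta$ disjoint windows, each a union of $\lfloor\beta/h\rfloor\cdot h$ slots. In each window I ask for a copy of $H$ whose edges carry labels in prescribed bands of width roughly $\beta/(hT)$. The expected count is of order $n^{|V_I|}(\beta/T)^{|E_I|}$ per window for every subgraph $I$. Summed over disjoint windows this gives order $n^{|V_I|}(\beta/T)^{|E_I|-1}$, which tends to infinity exactly when $\beta/T=\omega(n^{-\rho_H})$. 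The constants $h$ and the label rate $r$ only change constant factors, and the non-integer rounding of $\beta/h$ costs at most a factor $2$ because $\beta\ge h$.

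Independence between disjoint windows is not exact, since the number of labels on an edge is shared across windows. I would therefore run a second-moment argument on the total count $X$ directly. Pairs of copies overlapping in a subgraph $J$ contribute at most $\Exp[X]^2/(n^{|V_J|}(\beta/T)^{|E_J|-1})$. This uses $r_2<\infty$ to bound the contribution of edges that carry two labels in different windows, and the factor $(\beta/T)^{|E_J|-1}$ rather than $(\beta/T)^{|E_J|}$ comes from summing over windows. Every such term is $o(\Exp[X]^2)$ by the definition of $\rho_H$, so Chebyshev's inequality completes the proof. The care needed is in bounding the overlap terms when a shared edge uses the same label versus distinct labels; the latter is controlled by $\Exp[l_e(l_e-1)]\le r_2$.
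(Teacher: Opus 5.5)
Your proposal is correct. Your existence argument takes a genuinely different route from the paper's, while your non-existence argument is the paper's own.

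\textbf{Non-existence.} This half is exactly the paper's argument. \cref{obs: bridge 2} lifts a discrete $\beta$-occurrence to a continuous occurrence inside a half-open window of length $\beta/T$. Your factor $2$ is therefore harmless but unnecessary. The continuous bound then gives \cref{corr: DISC-LOWER}.

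\textbf{Existence: the paper's route.} The paper never restricts to banded configurations. It proceeds in three steps:
\begin{itemize}
\item It lower-bounds $\Prob[D^{(\beta),T}_{S,\tau}=1]$ through the sliding-window formula of \cref{lem: algowin} at scale $\beta/(2T)$, together with the linear-extension estimate $R(P,\beta)\beta^{-|E_H|}\ge |E_H|^{-|E_H|}$ (\cref{lem: Poset-Extension,lem: bridge 3}).
\item It runs Chebyshev only for Bernoulli $\psi$ (\cref{lem: variance}), where a shared edge necessarily shares its label.
\item It transfers to arbitrary $\psi$ via the monotone coupling with $\mathrm{Ber}(q)$ in \cref{lem: domination}.
\end{itemize}
The paper deliberately avoids a direct second moment for general $\psi$, because its uniform overlap bound (\cref{lem: variance-full}) loses a factor $1/\delta$.

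\textbf{Existence: your route.} Your level-function banding inside disjoint slot-aligned windows makes the order constraint automatic and the window count explicit. Your split of overlaps then shows that the $1/\delta$ loss is avoidable. Let $E_s\subseteq E_J$ be the shared edges that use the same label, necessarily in a common window. These contribute $O\bigl(n^{-|V_J|}(\beta/T)^{1-|E_s|}\bigr)$ relative to $\Exp[X]^2$. Each such edge costs a factor of order $T/\beta$ over independence, but forcing a common window gives one such factor back. Shared edges with distinct labels cost only the constant $(r_2-r)/r^2$ each. Pairs in different windows therefore contribute just $O(n^{-|V_J|})$.

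\textbf{What each buys.} The paper's route reuses machinery shared with the continuous and clique results, and needs only $q>0$ for the transfer. Yours is self-contained and needs no poset counting or sliding-window formula. It also handles every $\psi$ with $r_2<\infty$ directly, at the price of the overlap bookkeeping you correctly identify as the crux.
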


% We would like to emphasize the distinction between $\rho_H$ and the $\min_{I \subseteq H}\frac{V_I}{E_I}$ exponent seen in the roughly analogous sparse static graph results~\cite{frieze2015introduction}. 
% Although this is a seemingly small change in the form of $\rho_{H}$, it leads to a \textit{qualitatively different behavior} of the threshold that we obtain.
% For example, in the continuous model, the next result follows immediately from \cref{thm: Continuous Existence}.

% \begin{corollary}
%     Let $C_i$ be the cycle on $i$ vertices and $P_i$ be any partial order over its edge set. A random temporal graph drawn from $\Gamma_{[n]}(\psi)$:
%     \begin{itemize}
%         \item Contains $(C_i,P_i)$ as a $\delta(n)$-temporal motif with high probability if $\delta(n)=\omega(n^{-\frac{i}{i-1}})$.
%         \item Does not contain $(C_i,P_i)$ as a $\delta(n)$-temporal motif with high probability if $\delta(n)=o(n^{-\frac{i}{i-1}})$. 
%     \end{itemize}
% \end{corollary}

% Thus each finite cycle appears as a motif at a unique asymptotic threshold in the size of the window. However, every finite cycle appears at the same asymptotic threshold in the density of an Erd\H{o}s-R\'{e}nyi random graph \cite{frieze2015introduction}.
We also apply our techniques to generalize the previous work of \cite{Cliques} and investigate the growth of the largest $\delta$-temporal clique in the continuous variant of our model. We find that the introduction of more labels produces a subtly different behavior, although still close to the clique number of Erd\H{o}s-R\'{e}nyi random graphs. 
\begin{theorem}
    \label{thm: Cliques}
    Let $K_i$ be the clique on $i$ vertices and $P^{\emptyset}_i$ be the partial order over its edges such that they are all mutually incomparable. Further, let $0<\delta<1$ be a constant, $\mathcal{G}=(V,E,\lambda)$ be a sample from $\Gamma_{[n]}(\psi)$ and define $w_\delta(\mathcal{G})$ to be the maximum value such that $\mathcal{G}$ contains $(H_{w_{\delta}(\mathcal{G})},P^{\emptyset}_{w_\delta(\mathcal{G})})$ as a $\delta$-temporal motif. Then for any $\epsilon>0$, the following holds with high probability,
    \[(1-\epsilon)\frac{2\log{n}}{\log{\left(\frac{\delta r_2-\delta r +r}{\delta r^2}\right)}}\leq w_{\delta}(\mathcal{G})\leq(1+\epsilon)\frac{2\log{n}}{\log{\left(\frac{\delta r+(1-\delta)q}{\delta q r}\right)}} .\]
\end{theorem}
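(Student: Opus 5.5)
The plan is to prove the two inequalities separately: a first-moment argument for the upper bound and a second-moment argument on a weighted count for the lower bound. In both, the per-edge behaviour inside a window of length $\delta$ plays the role that the edge probability $p$ plays in the classical proof that the clique number of $G(n,p)$ is $(2+o(1))\log_{1/p} n$. Throughout, $k$ denotes the candidate clique size.

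\textbf{Lower bound.} I fix one window, say $W=(0,\delta]$. For an edge $e$ let $X_e$ be the number of labels of $e$ lying in $W$. Conditionally on $l_e$, $X_e\sim\Bin(l_e,\delta)$, so
\[
\Exp[X_e]=\delta r,\qquad \Exp[X_e^2]=\delta r+\delta^2(r_2-r).
\]
For a $k$-set $S$ put $Z_S=\prod_{e\in\binom{S}{2}}X_e$, which counts the simple temporal cliques on $S$ with all labels in $W$. Let $Z=\sum_{|S|=k}Z_S$. If $Z>0$ then $\mathcal{G}$ contains $(K_k,P^{\emptyset}_k)$ as a $\delta$-temporal motif; a strict-inequality issue at the window boundary is harmless by continuity.

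The edges are independent, so $\Exp[Z]=\binom{n}{k}(\delta r)^{\binom{k}{2}}$. For two $k$-sets sharing $s$ vertices,
\[
\Exp[Z_SZ_{S'}]=\Exp[Z_S]\Exp[Z_{S'}]\,\gamma^{\binom{s}{2}},\qquad \gamma=\frac{\Exp[X^2]}{\Exp[X]^2}=\frac{\delta r_2-\delta r+r}{\delta r^2}.
\]
Hence
\[
\frac{\Exp[Z^2]}{\Exp[Z]^2}=\sum_{s=0}^{k}\frac{\binom{k}{s}\binom{n-k}{k-s}}{\binom{n}{k}}\,\gamma^{\binom{s}{2}},
\]
which is exactly the sum appearing in the $G(n,p)$ clique computation with $p=1/\gamma$. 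Taking $k=(1-\epsilon)2\log n/\log\gamma$, the standard estimate shows the $s=0$ term is $1-o(1)$ and the remaining terms sum to $o(1)$. By Paley--Zygmund, or Chebyshev, $\Prob[Z>0]\to1$.

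\textbf{Upper bound.} I use a union bound over $k$-sets and over anchors. If $S$ spans a $\delta$-temporal clique, let $e_0$ be the edge carrying the earliest chosen label $t$. Every other edge of $S$ then has a label in $[t,t+\delta)$. For a fixed edge $e_0$, condition on its labels; there are $l_{e_0}$ candidate anchors, and this factor contributes only $O(k^2 r)$ in expectation. Given an anchor, the remaining $\binom{k}{2}-1$ edges are independent of each other and of $e_0$. Each independently has a label in the window with probability $\pi=1-\Exp[(1-\delta)^Q]$, which I would bound by $\pi\le p_+:=\frac{\delta q r}{\delta r+(1-\delta)q}$.

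I expect this bound to be the main obstacle. The target constant suggests proving it by a convexity or Jensen-type inequality applied to $(1-\delta)^Q$ on the event $\{Q>0\}$, using $\Exp[Q\mid Q>0]=r/q$. I would first try the elementary bound $1-(1-\delta)^m\le \frac{m\delta}{1+(m-1)\delta}$, note that the right-hand side is concave in $m$, and then apply Jensen conditionally on $Q>0$. If this fails, the fallback is to verify $\pi\le p_+$ directly; otherwise I would have to adjust the constant.

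Given $\pi\le p_+$, the expected number of such configurations is at most
\[
\binom{n}{k}\,\mathrm{poly}(k)\,p_+^{\binom{k}{2}-1}\le \exp\!\Big(k\log n-\tfrac{k^2}{2}\log(1/p_+)+O(k\log k)\Big),
\]
which is $o(1)$ for $k=(1+\epsilon)2\log n/\log(1/p_+)$. Markov's inequality then gives the upper bound with high probability.
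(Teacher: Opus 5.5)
Your lower bound is correct, and it takes a cleaner route than the paper. The paper counts all pairs $(S,\tau)$ whose labels fit in \emph{some} window of length $\delta$. It can only control the variance for almost surely bounded $\psi$ (\cref{lem: variance-full}), and so needs a truncation-and-domination step (\cref{lem: cliques lower}). Your single window $(0,\delta]$ with $Z_S=\prod_e X_e$ keeps the edge factors independent and gives the exact ratio $\gamma^{\binom{s}{2}}$ with the same constant $\gamma$. It also needs only $r_2<\infty$.

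The gap is in the upper bound, at exactly the step you flagged: $\pi\le p_+$ is false in general, and your elementary inequality runs the wrong way. Since $(1-\delta)^{m-1}(1+(m-1)\delta)\le(1-\delta^2)^{m-1}\le 1$, one has $1-(1-\delta)^m\ge\frac{m\delta}{1+(m-1)\delta}$, with strict inequality for $m\ge2$. So for $\psi$ degenerate at $m\ge2$ (where $q=1$, $r=m$ and $p_+=\frac{m\delta}{1+(m-1)\delta}$) you get $\pi>p_+$; for example, $m=2$, $\delta=\frac12$ gives $\pi=\frac34>\frac23=p_+$.

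This cannot be repaired, because the stated upper bound is then false. Replace $X_e$ by $\mathbf{1}[X_e\ge1]$ in your own lower-bound argument. The edges with a label in $(0,\delta]$ then form exactly $G(n,\pi)$, and every clique there is a $\delta$-temporal clique. Hence w.h.p.\ $w_\delta\ge(2-o(1))\log n/\log(1/\pi)$. For $m=2$, $\delta=\frac12$ this is about $6.95\log n$, whereas the statement claims $w_\delta\le(1+\epsilon)\,4.93\log n$.

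Your anchor argument is sound up to $\pi$. Together with the indicator lower bound it gives the sharp value $w_\delta=(1+o(1))\,2\log n/\log\bigl(1/(1-\Exp[(1-\delta)^Q])\bigr)$. Both stated constants reduce to this only in the Bernoulli case, where $\pi=p_+=\delta q=1/\gamma$.

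The paper obtains $p_+$ via \cref{lem: motif-upper}. Its first claim, $\Exp[\ContSum{\delta,S}\mid\ContSum{\delta,S}>0]\ge\Exp[\ContSum{\delta,S}\mid\ContFull{\delta}_{S,\tmn{S}}=1]$, reverses the size-biasing inequality. When all $\Exp[\ContSum{\delta,S}\mid\ContFull{\delta}_{S,\tau}=1]$ coincide (as they do by symmetry for degenerate $\psi$), their common value is $\Exp[\ContSum{\delta,S}^2]/\Exp[\ContSum{\delta,S}]$. By Cauchy--Schwarz this is at least $\Exp[\ContSum{\delta,S}]/\Prob[\ContSum{\delta,S}>0]=\Exp[\ContSum{\delta,S}\mid\ContSum{\delta,S}>0]$. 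So the fix belongs in the statement (replace $p_+$ by $\pi$, or add a hypothesis ensuring $\pi\le p_+$), not in your argument.
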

Finally, we consider the doubling time of a random sample from $\Gamma_{[n]}(\psi)$ in the special case where $\psi$ is a degenerate distribution, i.e.~where every edge has the same number of labels. 
% We prove the following.
\begin{theorem}
    \label{thm: continuous double}
    Let $\mathcal{G}=(V,E,\lambda)$ be a sample from $\Gamma_{[n]}(\psi)$, with $\psi$ a degenerate distribution. For any $\alpha>0$,
    \[\Prob\left[\frac{(2-\alpha)\log{n}}{rn}\leq \mathbf{Double}(\mathcal{G})\leq \frac{(2+\alpha)\log{n}}{rn}\right]=1-o(1).\]
\end{theorem}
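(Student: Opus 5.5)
We treat $t$ in \cref{def: reach} as ranging over $[0,1]$. Write $\psi\equiv r$, so every edge carries exactly $r$ independent uniform labels. The basic tool is a coupling for the growth of reachability sets. Suppose a reached set of size $m$ has been built by time $s$. Each of the $m(n-m)$ crossing edges carries $r$ labels, uniform on $(0,1]$. The labels on crossing edges that have not yet been examined are still uniform on $(s,1]$ conditionally, because we only inspect a label when it fires.

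Hence, for $t=O(\log n/n)$, the waiting time until the next vertex is added is sandwiched between $\mathrm{Exp}((1\pm o(1))\,r\,m(n-m))$ variables. (Equivalently one can use sums of geometric variables in a fine discretisation.) Exactly the same coupling applies to the reverse ball $R_w(s,t)$: the set of vertices that can reach $w$ by a time-respecting path using only labels in $(s,t]$, explored backwards from time $t$. I would state this domination once, as a lemma, with explicit error factors, and use it in both directions.

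\textbf{Lower bound.} Set $t=(2-\alpha)\log n/(rn)$. Let $\tau_w$ be the latest label-free time at $w$, namely $\max\{\tau:\ w \text{ has no label in } (t-\tau,t]\}$. Then $\Prob[\tau_w\ge (1-\alpha/3)\log n/(rn)]\approx n^{-(1-\alpha/3)}$. These events are nearly independent over $w$ (a second-moment argument; any two vertices share only one edge). So whp some $w$ satisfies this.

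Condition on that $w$. By the coupling, the reverse ball of $w$ grows from size $1$ starting at time $t-\tau_w$. It needs time $\sum_{m=1}^{n/2}\mathrm{Exp}(rm(n-m))\approx \log n/(rn)$ to reach size $n/2$. The lower tail of this sum is exponentially concentrated, so whp $|R_w(0,t)|<n/2$.

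Now take $S$ to be any $n/2$ vertices disjoint from $R_w(0,t)\cup\{w\}$. Then $w\notin\ball{\mathcal{G}}{S}{t}$, so the ball has size less than $n=2|S|$. Hence $\textbf{Double}(\mathcal{G})>t$.

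\textbf{Upper bound.} Set $t=(2+\alpha)\log n/(rn)$. Split by $k=|S|$.

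\emph{Case $k\le n/3$.} Here every step from size $m\in[k,2k)$ has rate at least $rmn/3\ge rkn/3$. So $\text{Double}(S)$ is dominated by a sum of $k$ exponentials with rates at least $rkn/3$. A Chernoff bound gives $\Prob[\text{Double}(S)>t]\le n^{-(1+c\alpha)k\cdot\Omega(1)}$. This is small enough to beat the $\binom nk\le n^k$ sets in a union bound. One checks that the exponent exceeds $k$ precisely because $t$ exceeds a constant multiple of $\log n/(rn)$; for tiny $k$ this is just the bound $n\cdot e^{-rn t}$ on some vertex staying isolated.

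\emph{Case $n/3<k\le n/2$.} The union bound over exponentially many sets fails, so I would argue structurally. Let $t_1=\alpha\log n/(4rn)$.

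First, whp every $S$ with $|S|\ge n/3$ has forward ball $\ball{\mathcal{G}}{S}{t_1}$ of size greater than $n/2$. The rates here are order $n^2$, so the failure probability is $e^{-\Omega(n\log n)}$, which beats $2^n$.

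Second, whp every vertex $w$ has $|R_w(t_1,t)|\ge n/2$. The first backward step from $w$ takes at most $(1+\alpha/4)\log n/(rn)$ for all $w$ simultaneously, by a union bound on $e^{-rn\tau}$. The remaining growth to $n/2$ takes at most $(1+\alpha/4)\log n/(rn)$ with failure probability $o(1/n)$, using the upper-tail concentration of the sum of exponentials.

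The two sets then intersect. So every $w$ is reached from $S$ by time $t$, giving $|\ball{\mathcal{G}}{S}{t}|=n\ge 2k$.

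\textbf{Main obstacle.} I expect the hardest step to be making the exponential coupling rigorous under the exploration conditioning. Two points need care. The lower bound conditions on the emptiness of $w$'s edges, and the upper bound combines a forward and a backward exploration, so in both we must show that the conditioning only perturbs the rates by $1\pm o(1)$. We also need tail bounds for $\sum_m \mathrm{Exp}(rm(n-m))$ that are sharp in the leading constant. I would handle both by exposing labels in time order and using the memorylessness of the order statistics of uniform labels.
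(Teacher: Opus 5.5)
Your route is genuinely different from the paper's. You stay in continuous time rather than passing to the order discretization. You treat small sets by a direct union bound rather than the reversal \cref{lem: Timewalk}. You treat large sets by letting a forward ball from $S$ meet a reverse ball into $w$, rather than by Markov's inequality plus a second growth phase. And you get the lower bound from one vertex that is isolated in $(t-\tau_0,t]$ and has a small reverse ball, rather than from blocker counting and drift. The lower bound is sound, and your ``main obstacle'' is milder than you expect. A crossing edge that has been inspected and found empty over a time-length $\ell$ has hazard exactly $r/(1-\ell)\ge r$, so the upper-bound explorations are dominated by independent exponentials with no analogue of \cref{lem: no small cuts}. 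In the lower bound the extra conditioning only raises hazards to at most $r/(1-O(t))$.

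The real gap is in the case $n/3<k\le n/2$. Your claim that $R_w$ grows from size $2$ to $n/2$ within $(1+\alpha/4)\log n/(rn)$ except with probability $o(1/n)$ is false. That growth has mean about $\log n/(rn)$, so the slack is only about $\alpha\log n/(4rn)$. The single step from size $2$ to size $3$ has rate about $2rn$, so it alone overshoots this slack with probability about $n^{-\alpha/2}$. In expectation about $n^{1-\alpha/2}$ vertices then violate your criterion, and the union bound over $w$ collapses. The conclusion that every $R_w(t_1,t)$ reaches size $n/2$ is still true, but the excess must be charged to the whole sum at once. Take $X=\sum_{m=1}^{n/2}\mathrm{Exp}(rm(n-m))$ and $\theta=(1-\eta)r(n-1)$. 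The first factor of $\Exp[e^{\theta X}]$ is $1/\eta$ and the remaining factors multiply to $O(n^{1-\eta})$. Hence $\Prob[X>(2+3\alpha/4)\log n/(rn)]=O(\eta^{-1}n^{-(1-\eta)(1+3\alpha/4)})=o(1/n)$ for small $\eta$. Splitting at a constant size $K>4/\alpha$ instead of at $2$ also works. A bound that sees only the minimal rate, such as \cref{lem: Janson}, gives merely about $n^{-(1-\log 2)}$ here. The paper sidesteps this: it applies Markov's inequality to the number of poorly reachable vertices, then runs one further exploration from the single set of well-reachable ones.

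A second slip is in the case $k\le n/3$. Lowering every rate to $rkn/3$ gives only the bound $\Prob[\text{Double}(S)>t]\le n^{-(2+\alpha)k/3+o(k)}$. This loses to $\binom{n}{k}\approx n^k$ for every $\alpha<1$ once $k\ge 2$ is bounded; you use the correct rate only for $k=1$. The constant matters here. Keep $\min_{k\le m<2k}rm(n-m)\ge(1-o(1))\frac{2}{3}rkn$ and take $\theta$ to be $(1-\eta)$ times this minimum. Then every factor of the moment generating function is at most $1/\eta$, and $\Prob[\text{Double}(S)>t]\le\eta^{-k}n^{-(1-\eta)\frac{4}{3}(1+\alpha/2)k(1-o(1))}$. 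This beats $n^k$ and sums to $o(1)$ over $k$. With these two repairs your argument is a complete proof. It is somewhat more elementary than the paper's, since it needs neither the cut lemma nor the conversion back from the order discretization.
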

We note that when $\psi$ is the degenerate distribution on $1$, this statement is equivalent to the following statement about the RSTG model,
\begin{corollary}
    Let $\mathcal{H}=(V,E,\lambda)$ be a sample from the RSTG distribution with parameter $\beta(n)$. Then the doubling time of $\mathcal{H}$ is asymptotically almost surely:
    \begin{itemize}
    \item finite, if $\beta(n)>(2+\Omega(1))\frac{\log{n}}{n}$,
    \item infinite, if $\beta(n)<(2-\Omega(1))\frac{\log{n}}{n}$.
    \end{itemize}
\end{corollary}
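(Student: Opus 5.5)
The plan is to derive the corollary from \cref{thm: continuous double} with $\psi$ the degenerate distribution on $1$, so $r=1$, by coupling the RSTG with a truncation of the continuous model. Let $\mathcal{G}=(V,E,\lambda)$ be a sample from $\Gamma_{[n]}(\psi)$; every edge carries exactly one label, uniform on $(0,1]$. For $\beta=\beta(n)\in(0,1]$, let $\mathcal{G}_\beta$ keep only the edges whose label lies in $(0,\beta]$, with their labels.

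\emph{Step 1 (coupling).} Each edge survives in $\mathcal{G}_\beta$ independently with probability $\beta$, so the footprint of $\mathcal{G}_\beta$ is distributed as $G(n,\beta)$. Conditioned on the surviving edge set, the surviving labels are i.i.d.\ uniform on $(0,\beta]$. Hence they are almost surely distinct and induce a uniformly random linear order on the surviving edges. Therefore $\mathcal{G}_\beta$, with labels replaced by their ranks, is distributed exactly as the RSTG $\mathcal{H}$ with parameter $\beta$. Time-respecting paths depend only on the relative order of labels, so reachability in $\mathcal{H}$ coincides with reachability in $\mathcal{G}_\beta$. For every $S$, the full reachability set of $S$ in $\mathcal{H}$ thus equals $\ball{\mathcal{G}}{S}{\beta}$. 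Here, as in \cref{thm: continuous double}, the time $t$ in \cref{def: reach} is read as a real parameter.

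\emph{Step 2 (translation).} The ball $\ball{\mathcal{G}}{S}{t}$ is monotone non-decreasing in $t$. Hence $\text{Double}_{\mathcal{H}}(S)<\infty$ if and only if $|\ball{\mathcal{G}}{S}{\beta}|\ge 2|S|$, which holds if and only if $\text{Double}_{\mathcal{G}}(S)\le\beta$. Taking the maximum over all $S$ with $|S|\le n/2$ gives
\[
\mathbf{Double}(\mathcal{H})<\infty \iff \mathbf{Double}(\mathcal{G})\le\beta .
\]

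\emph{Step 3 (apply the theorem).}
\begin{itemize}
\item If $\beta>(2+c)\frac{\log n}{n}$ for some constant $c>0$, apply \cref{thm: continuous double} with $\alpha=c$. With probability $1-o(1)$, $\mathbf{Double}(\mathcal{G})\le(2+c)\frac{\log n}{n}<\beta$, so $\mathcal{H}$ has finite doubling time.
\item If $\beta<(2-c)\frac{\log n}{n}$, apply \cref{thm: continuous double} with $\alpha=c$. With probability $1-o(1)$, $\mathbf{Double}(\mathcal{G})\ge(2-c)\frac{\log n}{n}>\beta$. By Step 2, some small $S$ has reachability set in $\mathcal{H}$ of size less than $2|S|$, so the doubling time is infinite.
\end{itemize}
If $\beta\ge 1$ is allowed, the first case is trivial because the footprint is then complete.

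The main obstacle is the care needed in Step 1. We must check that truncation preserves the joint law, namely independent edge presence together with a uniform random order independent of the footprint. We must also check that the time-step convention of \cref{def: reach} is consistent with the real-valued threshold in \cref{thm: continuous double}. Once the equivalence in Step 2 holds, the rest is immediate.
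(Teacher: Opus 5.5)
Your proposal is correct and follows the paper's own route. The paper derives the corollary directly from \cref{thm: continuous double} with $\psi$ degenerate on $1$ (so $r=1$), using the observation that the single-label continuous model restricted to $(0,\beta]$ is exactly the RSTG with parameter $\beta$. Your Steps 1--2 spell out the coupling and the equivalence $\mathbf{Double}(\mathcal{H})<\infty \iff \mathbf{Double}(\mathcal{G})\le\beta$, which the paper leaves implicit, including the necessary reading of $t$ in \cref{def: reach} as a real parameter.
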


\section{Distribution of Temporal Motifs}
In this section we outline the proofs of our results (see \cref{thm: Continuous Existence,thm: Discrete Existence,thm: Cliques}) on the thresholds for the existence of motifs in temporal graphs. The results are obtained via a modification of a first and second moment argument, loosely following the treatment of their static analogues in \cite{frieze2015introduction}. Since the proofs of the results concerning both growing motifs and those of constant size share a lot of technical details (simply optimizing for different factors in the associated bounds), we present their shared components together.

\subsection{Technical Preliminaries}
Before we present our proofs, we will need a few preliminary results and definitions.
Our first preliminary is a short result concerning the probability of a given collection of labels obeying the partial order. Here it is important to distinguish between the continuous case, where labels are totally ordered almost surely, and the discrete case, where multiple labels may share the same time stamp.
\begin{restatable}{definition}{DefRPosetFunction}
    For a poset $P=(A,\prec)$ and $\beta \in [n]$, we take $R(P)$ to be the number of linear extensions of $P$ and $R(P,\beta)$ to be the number of mappings from $A$ to $[\beta]$ such that $P$ is satisfied on $[\beta]$.
\end{restatable}
\begin{restatable}{lemma}{LemRPosetExtension}
    \label{lem: Poset-Extension}
    For a finite non-empty set $A$ and a partial order $P=(A,\prec)$,
    \[\inf_{\beta\geq \text{height(P)}} \frac{R(P,\beta)}{\beta^{|A|}}\geq|A|^{-|A|}\]
    Similarly,
    \[\frac{R(P)}{|A|!}\geq |A|^{-|A|}\]
\end{restatable}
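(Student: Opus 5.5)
The plan is to prove both bounds with one explicit construction: exhibit, inside $[\beta]^{A}$ (respectively inside the set of total orders of $A$), a family of at least $(\beta/|A|)^{|A|}$ (respectively $|A|!/|A|^{|A|}$) assignments that respect $P$. Let $k=|A|$ and $h=\text{height}(P)$. Define the level function $\ell(a)$ as the length of the longest chain in $P$ ending at $a$, so $\ell(a)\in\{1,\dots,h\}$. It has the key property that $a\prec b$ implies $\ell(a)<\ell(b)$.

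\textbf{Discrete bound.} First assume $\beta\geq h$. Split $[\beta]$ into $h$ consecutive disjoint blocks $I_1<I_2<\dots<I_h$, each of size at least $\lfloor \beta/h\rfloor\geq 1$. Any map sending each $a$ to an element of $I_{\ell(a)}$ satisfies $P$. Indeed, if $a\prec b$ then $\ell(a)<\ell(b)$, so every element of $I_{\ell(a)}$ is strictly smaller than every element of $I_{\ell(b)}$. Hence
\[R(P,\beta)\geq \lfloor \beta/h\rfloor^{k}.\]
It remains to show $\lfloor \beta/h\rfloor\geq \beta/k$, or at least that the product is at least $(\beta/k)^k$.
\begin{itemize}
\item If $h<k$, write $\beta=mh+s$ with $m\geq 1$ and $0\le s<h$. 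Then $\beta/k\le \beta/(h+1)\le (m+1)h/(h+1)$, and this is at most $m$ whenever $m\ge h\cdot\frac{1}{h+1}\cdot(m+1)-\ldots$; the floor loss must be checked carefully here.
\item To avoid the loss entirely, use blocks of unequal sizes $\lfloor\beta/h\rfloor$ or $\lceil\beta/h\rceil$ and choose them to make the product as large as possible. Alternatively, since $h\le k$, use the crude bound $\lfloor \beta/h\rfloor\ge \beta/(2h)$ when $\beta\ge h$. That gives $(\beta/(2h))^k$, which is only good enough if $2h\le k$.
\end{itemize}
This rounding issue is the main obstacle.

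The cleanest fix I would try first is a different choice of blocks. Let $c=\lfloor \beta/k\rfloor$.
\begin{itemize}
\item If $c\geq 1$, take $h$ blocks of size exactly $c$ each. This is possible because $hc\le kc\le\beta$. It gives $R(P,\beta)\ge c^k$. Since $c\ge \beta/k-1$, and $c\ge1$ gives $c\ge \beta/(2k)$, this is still off by a constant, so refine as follows. Let the last block absorb the slack: $I_1,\dots,I_{h-1}$ of size $\lfloor\beta/h\rfloor$, and $I_h$ the rest.
\item If $\beta<k$, then $(\beta/k)^k<1\le R(P,\beta)$. Here $R(P,\beta)\ge1$ because $\ell$ itself is a valid map into $[h]\subseteq[\beta]$.
\end{itemize}

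In the end I expect the argument to reduce to verifying $\lfloor\beta/h\rfloor\ge\beta/k$ when $h<k$ and $\beta\ge k$. Writing $\beta=mh+s$ with $s<h$, we have $\beta/k\le(mh+h-1)/(h+1)<m+1$. So $\beta/k\le m$ unless $m<(mh+h-1)/(h+1)$, i.e.\ unless $m<h-1$.

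The remaining regime is small $m$, which I would handle with the unequal-block-size product bound
\[\prod_i|I_i|^{n_i}, \qquad n_i=|\ell^{-1}(i)|.\]
Assign sizes to the levels in proportion to $n_i$: take $|I_i|\approx \beta n_i/k$, which sums to $\beta$. Then the product is at least $\prod_i(\beta n_i/k)^{n_i}\ge(\beta/k)^k$ up to rounding, and the rounding is controlled because $n_i\ge1$ and $\beta\ge k$. This weighted choice is what I would actually write up. Its delicate step is showing that rounding each $\beta n_i/k$ down still leaves $\prod\lfloor\beta n_i/k\rfloor^{n_i}\ge(\beta/k)^k$. That is plausible since $\lfloor\beta n_i/k\rfloor\ge n_i\lfloor\beta/k\rfloor$, but it still needs care at the boundary.

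The case $h=k$ (that is, $P$ is a chain) is separate: there $R(P,\beta)=\binom{\beta}{k}\ge(\beta/k)^k$ directly.

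\textbf{Continuous bound.} This follows from the discrete one by a limit. We have $R(P,\beta)/\beta^k\to R(P)/k!$ as $\beta\to\infty$, because the fraction of maps with a repeated value vanishes and every injective valid map induces one linear extension. Taking the limit of the first inequality gives $R(P)/k!\ge k^{-k}$.
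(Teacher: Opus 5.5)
\paragraph{There is a genuine gap in your discrete bound.} It sits in the regime you leave open: $h<k\le\beta$ with $\lfloor\beta/h\rfloor<h-1$. The following parts are correct:
\begin{itemize}
\item equal blocks of size $\lfloor\beta/h\rfloor$ suffice once $\lfloor\beta/h\rfloor\ge h-1$;
\item the case $\beta<k$;
\item the chain case $h=k$.
\end{itemize}
The weighted construction you say you would write up does not close the gap, because the inequality $\prod_i\lfloor\beta n_i/k\rfloor^{n_i}\ge(\beta/k)^k$ is false. Take the poset $a_1\prec a_2\prec a_3\prec a_4\prec b,c$ with $b,c$ incomparable. Then $k=6$, $h=5$, and the level sizes are $(1,1,1,1,2)$. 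Take $\beta=11$, which lies in your open regime since $\lfloor 11/5\rfloor=2<4$. Then $\prod_i\lfloor 11n_i/6\rfloor^{n_i}=1^4\cdot 3^2=9$, whereas $(11/6)^6\approx 38$.

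The estimate $\lfloor\beta n_i/k\rfloor\ge n_i\lfloor\beta/k\rfloor$ only yields $\lfloor\beta/k\rfloor^k\prod_i n_i^{n_i}$. This loses a factor exponential in $k$ when most levels are singletons and $\beta/k$ sits just below an integer. The problem is structural: confining each level to a prescribed block undercounts chain-like posets. Even for a chain with $\beta=2k-1$ (and $k\ge 2$), the best integer block sizes give only $2^{k-1}<(2-1/k)^k$. So no block construction can replace the count $\binom{\beta}{k}$ of all increasing maps that you use when $h=k$.

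\paragraph{The missing idea is that the chain is the worst case.} Fix any linear extension $P'$ of $P$. Every map $A\to[\beta]$ that is strictly increasing along $P'$ also satisfies $P$. Hence for $\beta\ge k$,
\[
R(P,\beta)\ge R(P',\beta)=\binom{\beta}{k}\ge(\beta/k)^k,
\]
which is exactly the computation you already did for $h=k$. Combined with your $\beta<k$ case (where $R(P,\beta)\ge 1$ via the level map), this is the paper's entire proof. The level function and the blocks become unnecessary.

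Your limit argument for the continuous bound is valid, since it only needs large $\beta$, where your equal-block argument does go through. Still, $R(P)\ge 1$ together with $k!\le k^k$ gives that bound in one line.
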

\begin{proof}
    The result for $R(P)$ holds trivially as $R(P)\geq 1$ and $|A|!\leq |A|^{|A|}$. 
    For the other inequality,
    we will separate into two cases, and show that the minimum in each is at least constant.
    Since $\beta>\text{height}(P)$, $R(P,\beta)>1$ (as the assignment of each element to its height trivially satisfies $P$).
    Thus, if $|A|>\beta$, \[\inf_{\text{height(P)}\leq\beta<|A|} \frac{R(P,\beta)}{\beta^{|A|}}\geq |A|^{-|A|}.\]

    Secondly, we must consider the case where $|A|\leq \beta$. 
    In order to obtain a bound, we will consider the strict total order $P'$ produced by some arbitrary linear extension of $P$.
    It follows that $R(P,\beta)>R(P',\beta)$ for all $\beta$, as any assignment satisfying $P'$ must satisfy $P$.
    However, $R(P',\beta)=\binom{\beta}{|A|}$ and thus for $|A|\leq \beta$, 
    \[\inf_{\beta\geq |A|} \frac{R(P,\beta)}{\beta^{|A|}}\geq \frac{\binom{\beta}{|A|}}{\beta^{|A|}}\geq |A|^{-|A|}.\]
    Thus, the inequality holds in either case.
\end{proof}
We will also need the following notion of automorphism for temporal motifs. In particular, we require that the mapping over the edges produced by the automorphism of the vertices corresponds to an automorphism of the associated partial order. The number of such automorphisms will play a crucial role in preventing the over-counting of the number of occurrences of temporal motifs when multiple occurrences share the same labels.
% \begin{restatable}{definition}{DefAutomorphism}
%     For a fixed graph $H=(V,E)$ and a partial order $P=(E,\prec)$, $\phi:V\to V$ is an automorphism of $(H,P)$ if $\phi$ is an automorphism of $H$ and $\Phi:\binom{V}{2}\to \binom{V}{2}$, such that $\Phi(\{a,b\})=\{\phi(a),\phi(b)\}$, is an automorphism of $P$. We denote the number of such automorphisms by $\#\text{Aut}(H,P)$.
% \end{restatable}
\begin{restatable}{definition}{DefAutomorphism}
    For a fixed graph $H=(V,E)$ and a partial order $P=(E,\prec)$, $\phi:V\to V$ is an automorphism of $(H,P)$ if $\phi$ is an automorphism of $H$ preserving the ordering of $P$.
    Explicitly, we require that (i) for all $u,v$ we have $\{u,v\} \in E_H\iff \{\phi(u),\phi(v)\} \in E_H$ and (ii) for all pairs of edges $\{u,v\}, \{w, x\} \in \binom{V}{2}$ we have $\{u,v\} \prec \{w,x\} \iff \{\phi(u)
    ,\phi(v)\} \prec \{\phi(w),\phi(x)\}$. We denote the number of such automorphisms by $\#\text{Aut}(H,P)$.
\end{restatable}
Additionally, we require the following simple domination result, which relates the probability of a temporal motif appearing in two instances of our random graph models based on stochastic domination between their \distribution s. This will serve as an extremely useful tool which allows us to reduce the problem of obtaining certain bounds for arbitrary \distribution s to the case of Bernoulli or otherwise bounded \distribution s.
\begin{restatable}{lemma}{LemDomination}
    \label{lem: domination}
    Let $I=(V_I,E_I)$ be a fixed graph, $P=(E_I,\prec)$ a partial order over its edges and $\psi$ and $\chi$ two valid \distribution s, such that $\psi$ is stochastically dominated by $\chi$. Further let $\mathcal{G}=(V,E,\lambda)$ be a temporal graph sampled from $\Gamma_V(\psi)$ and $\mathcal{H}=(V,E,\lambda')$ be a temporal graph sampled from $\Gamma_V(\chi)$. Then,
    \begin{itemize}
        \item For any $0<\delta<1$, the probability $(I,P)$ appears as a $\delta$-temporal motif in $\mathcal{G}$ is at most the probability that $(I,P)$ appears as a $\delta$-temporal motif in $\mathcal{H}$.
        \item For any $0<\beta<T$, the probability $(I,P)$ appears as a $\beta$-temporal motif in the $T$ discretization of $\mathcal{G}$ is at most the probability that $(I,P)$ appears as a $\beta$-temporal motif in the $T$ discretization of $\mathcal{H}$. \medskip
    \end{itemize}
\end{restatable}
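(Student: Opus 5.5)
We prove \cref{lem: domination} by coupling: we construct $\mathcal{G}$ and $\mathcal{H}$ on a common probability space so that every label set of $\mathcal{G}$ is contained in the corresponding label set of $\mathcal{H}$. Since $\psi$ is stochastically dominated by $\chi$, the standard monotone coupling (quantile coupling with a shared uniform variable $U_e$ for each edge, $l_e=F_\psi^{-1}(U_e)$ and $l'_e=F_\chi^{-1}(U_e)$) gives, independently for each edge $e$, a pair $(l_e,l'_e)$. The marginals of this pair are $\psi$ and $\chi$, and $l_e\le l'_e$ almost surely.

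Independently of these pairs, for each edge $e$ we draw an infinite i.i.d.\ sequence $X_{e,1},X_{e,2},\dots$ of uniform samples from $(0,1]$. We set $\lambda(e)=\{X_{e,1},\dots,X_{e,l_e}\}$ and $\lambda'(e)=\{X_{e,1},\dots,X_{e,l'_e}\}$. This construction has three properties.
\begin{itemize}
\item $\lambda(e)$ consists of $l_e\sim\psi$ independent uniform labels, independently across edges, so $\mathcal{G}$ has the law $\Gamma_V(\psi)$.
\item By the same argument, $\mathcal{H}$ has the law $\Gamma_V(\chi)$.
\item Since $l_e\le l'_e$, we have $\lambda(e)\subseteq\lambda'(e)$ for every edge $e$, almost surely.
\end{itemize}
Because edges with $l_e=0$ carry no labels, the footprint of $\mathcal{G}$ is then contained in the footprint of $\mathcal{H}$.

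Next we show that containing $(I,P)$ as a $\delta$-temporal motif is monotone under label inclusion. Suppose $\mathcal{G}$ contains $(I,P)$ as a $\delta$-temporal motif, witnessed by a simple temporal subgraph $\mathcal{I}=(V',E',\lambda^*)$ and a map $\Phi$ as in \cref{motif-def}. For each $e\in E'$ we have $\lambda^*(e)\in\lambda(e)\subseteq\lambda'(e)$, so $\mathcal{I}$ is also a simple temporal subgraph of $\mathcal{H}$. The order condition and the life-duration condition involve only $\lambda^*$ and $\Phi$, so they continue to hold. Hence the event for $\mathcal{G}$ is contained in the event for $\mathcal{H}$ on the coupled space, and taking probabilities proves the first item.

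For the second item we apply the same coupling and then discretize both graphs, using \cref{obs: bridge 1} to connect the discretizations to the discrete model. The $T$-discretization acts labelwise via $l\mapsto\lceil Tl\rceil$. Since $\lambda(e)\subseteq\lambda'(e)$, the discretized label sets satisfy $\{\lceil Tl\rceil: l\in\lambda(e)\}\subseteq\{\lceil Tl\rceil: l\in\lambda'(e)\}$. The monotonicity argument of the previous paragraph applies verbatim to $\beta$-temporal motifs in the discretized graphs, and taking probabilities proves the second item.

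There is no real obstacle here. The only point needing care is verifying that the coupled objects have the correct marginal laws. This holds because the label sequences are drawn independently of the multiplicities, so conditioning on $l_e$ (respectively $l'_e$) leaves the first $l_e$ (respectively $l'_e$) labels i.i.d.\ uniform.
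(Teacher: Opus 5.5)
Your proof is correct and follows the same route as the paper. The paper also uses a monotone coupling of the multiplicities, so that $l_e\le l'_e$, together with shared label values on each edge to get $\lambda(e)\subseteq\lambda'(e)$, and then observes that a motif occurrence survives label inclusion; it calls the discrete case identical. Your version is somewhat more explicit: you spell out the quantile coupling, check the marginals, correctly use the non-strict $l_e\le l'_e$ where the paper writes $|\lambda(e)|<|\lambda'(e)|$, and check that $T$-discretization preserves inclusion.
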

\begin{proof}
    We present only the proof for the continuous case, as the discrete case is identical up to a change in notation, and begin with following simple observation.
    \begin{observation}
        Let $\mathcal{J}=(V,E,\lambda'')$ and $\mathcal{K}=(V,E,\lambda''')$ such that for all $e \in E$, $\lambda'''(e)\subseteq \lambda''(e)$. Any occurrence of $(I,P)$ as a $\delta$-temporal motif in $\mathcal{K}$ must also exist in $\mathcal{J}$.
    \end{observation}
    We will construct a coupling between $\mathcal{G}$ and $\mathcal{H}$ to ensure such monotonicity exists between them. Since $\chi$ stochastically dominates $\psi$, there exists a monotone coupling between them (i.e. one such that for $X\sim \psi$ and $Y\sim \chi$, $\Prob{[X\leq Y]}=1$). We can therefore couple $\mathcal{G}$ and $\mathcal{H}$ such that for all $e \in E$, $|\lambda(e)|< |\lambda'(e)|$ by adopting this monotone coupling for the distribution of the number of labels on each edge individually. Then by sampling the same set of label times for the labels on each edge, we obtain a coupling such that $\lambda(e)\subseteq \lambda'(e)$ for all $e \in E$.
    Taking $A$ to be the indicator for the event that $(I,P)$ exists as a $\delta$-temporal motif in $\mathcal{G}$ and $B$ the same for $\mathcal{H}$, we find that under our coupling,
    \[\Prob{[A\leq B]}=1.\]
    Thus, we have a stochastic order between the distribution of $A$ and $B$ and our claim.
\end{proof}
We will now construct the random variables of interest.
For any $S\subseteq (V,\binom{V}{2})$ we define $\mathcal{T}_S$ to be the set of all mappings from $E_S\to \mathbb{N}_0$.
Furthermore, for $\tau_S \in \mathcal{T}_S$ and $\tau_T \in \mathcal{T}_T$, where $S \subseteq T$, we define the order $\tau_S \leq \tau_T$ if and only if $\forall e \in E_{S}$ we have that $\tau_S(e)\leq \tau_T(e)$.
% Furthermore, we define the following order over $\cup_{I \subseteq G}\mathcal{T}_I$ where for $a \in \mathcal{T}_S$ and $b \in \mathcal{T}_T$ we describe $a\geq b$ if and only if $\forall e \in E_{S\cap T}$ $a(e)\geq b(e)$ and $S \cap T\neq \emptyset$.
We define $\tmx{S}$ to be the random function such that $\forall e \in E_S$ we have $\tmx{S}(e)=l_e$ and $\tmn{S}$ to be the constant mapping $\tmn{S}(e)=1$. 

We further define $\mathcal{S}$ to be the set of all occurrences of $H$ in the footprint graph $G=(V,E)$.
For mathematical convenience we shall treat $\lambda(e)$ as a tuple ordered by the order of sampling rather than a multi-set and we shall assume that for each edge the value of all labels are sampled first, only later followed by how many will actually appear.
This allows us to interpret any pair $(S,\tau)$ for $S \in \mathcal{S}$ and $\tau \in \mathcal{T}_S$, as a specific candidate for a motif, namely $S$ where each edge $e \in E_S$ has its $\tau(e)$th sampled label.
We may now define the following useful random variables:
\begin{definition}
    For a random sample $\mathcal{G}$ from $\Gamma_{[n]}(\psi)$, $S \in \mathcal{S}$ and $\tau \in \mathcal{T}_S$ we define:
    \begin{itemize}
        \item $\LabEx_{S,\tau}$ to be the indicator for the event $\{\forall e \in E_S:l_{e}\geq \tau(e)\}$ (i.e. the event that $\tau$ describes a set of labels that actually exist for $S$)
        \item $\ContDel{\delta}_{S,\tau}$ to be the indicator for the event that the labeling function $\lambda': E_S\to (0,1]$, mapping $e$ to its $\tau(e)$th sampled label has both of the following properties:
        \begin{enumerate} 
            \item $\max_{e_1,e_2 \in E_S}(\lambda'(e_1)-\lambda'(e_2))\leq \delta$ (i.e. all labels occur within a window of $\delta$)
            \item $\lambda'(e)$ is a valid realization of $P$.
        \end{enumerate}
        \item $\ContFull{\delta}_{S,\tau}=\LabEx_{S,\tau}\ContDel{\delta}_{S,\tau}$ (i.e. the indicator for the event that $(S,\tau)$ describes a genuine $\delta$-temporal $(H,P)$ motif in $\mathcal{G}$).
        \item $\ContSum{\delta}=\sum_{S \in \mathcal{S}, \tau \in \mathcal{T}_S}\ContFull{\delta}_{S,\tau}$ (i.e. the total number of appearances of $(H,P)$ as a $\delta$-temporal motif of $\mathcal{G}$)
    \end{itemize}
\end{definition}

Analogously, for the discrete case,
\begin{definition}
    For a random sample $\mathcal{H}$ from $\Gamma_{[n]}(\psi,T)$ (obtained by taking the $T$ discretization of $\mathcal{G}$), $S \in \mathcal{S}$ and $\tau \in \mathcal{T}_S$ we define:
    \begin{itemize}
        \item $\DiscDel{\beta}{T}_{S,\tau}$ to be the indicator for the event that the labeling function $\lambda': E_S\to (0,1]$ mapping $e$ to its $\tau(e)$th label has both of the following properties:
        \begin{enumerate}
            \item $\max_{e_1,e_2 \in E_S}(\lambda'(e_1)-\lambda'(e_2))< \beta$ (i.e. all labels fall within a window of $\beta$)
            \item $\lambda'(e)$ is a valid realization of $P$
        \end{enumerate}
        \item $\DiscFull{\beta}{T}_{S,\tau}=\LabEx_{S,\tau}\DiscDel{\beta}{T}_{S,\tau}$ (i.e. the indicator for the event that $(S,\tau)$ describes a genuine $\beta$-temporal motif of $\mathcal{H}$).
        \item $\DiscSum{\beta}{T}=\sum_{S \in \mathcal{S}, \tau \in \mathcal{T}_S}\DiscFull{\beta}{T}_{S,\tau}$ (i.e. the total number of appearances of $(H,P)$ as a $\beta$ temporal motif in $\mathcal{H}$).
    \end{itemize}
\end{definition}

With those definitions established, we can apply a lemma of \cite{Cliques}, suitably rephrased for our setting to get the probability of the event $\ContDel{\delta}_{S,\tau}$.
\begin{restatable}{lemma}{LemAlgowin}[Adapted from Lemma 3 of \cite{Cliques}]
    \label{lem: algowin}
    For $H=(V_H,E_H)$ a fixed non-trivial static graph, $P^{\emptyset}(E_H,\prec)$ the order over $E_H$ such that no two edges are comparable, $S \in \mathcal{S}$ and $\tau \in \mathcal{T}_S$,
    \[\Prob[\ContDel{\delta}_{S,\tau}=1]=\delta^{|E_H|}(1+\frac{|E_H|(1-\delta)}{\delta})\]
\end{restatable}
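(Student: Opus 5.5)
Since $P^{\emptyset}$ imposes no comparabilities, condition 2 in the definition of $\ContDel{\delta}_{S,\tau}$ holds trivially. So $\ContDel{\delta}_{S,\tau}=1$ exactly when the $m:=|E_H|$ labels $\lambda'(e)$, $e\in E_S$, all lie in a window of length $\delta$. The plan is to recognise this as the distribution of the range of $m$ independent uniform variables and compute it directly, for $0<\delta\leq 1$.

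\textbf{Independence.} Each $\lambda'(e)$ is the $\tau(e)$th sampled label on the edge $e$. Under the convention that all label values are sampled first (independently of the $l_e$), these are uniform on $(0,1]$. Since the edges of $S$ are distinct, the labels for different edges are independent. Hence $(\lambda'(e))_{e\in E_S}$ is a vector of $m$ i.i.d.\ $U(0,1]$ variables $U_1,\dots,U_m$. Note that $\ContDel{\delta}_{S,\tau}$ does not involve $\LabEx_{S,\tau}$, so no conditioning on the $l_e$ is needed.

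\textbf{Conditioning on the minimum.} Almost surely the minimum is attained by a unique index. By symmetry,
\[\Prob\Big[\max_i U_i-\min_i U_i\le\delta\Big]=m\int_0^1\Prob\big[\forall j\neq 1: x< U_j\le \min(x+\delta,1)\big]\,dx,\]
where $U_1=x$ is taken to be the minimum.

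\textbf{Splitting the integral.} Split at $x=1-\delta$.
For $x\le 1-\delta$ the integrand is $\delta^{m-1}$, which contributes $m(1-\delta)\delta^{m-1}$.
For $x>1-\delta$ the integrand is $(1-x)^{m-1}$, and $\int_{1-\delta}^1 (1-x)^{m-1}\,dx=\delta^m/m$, which contributes $\delta^m$.

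\textbf{Conclusion.} Summing the two contributions gives
\[\delta^m+m(1-\delta)\delta^{m-1}=\delta^{m}\Big(1+\frac{m(1-\delta)}{\delta}\Big),\]
as claimed.

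There is no real obstacle here. The only points needing care are the independence and uniformity of the selected labels, which follows from the sampling convention fixed before the definition, and that ties and the boundary case of $\le$ versus $<$ have probability zero.
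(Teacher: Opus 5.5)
Your proof is correct. The paper gives no argument of its own here; it imports the window probability from Lemma~3 of \cite{Cliques}, which concerns one uniform label per edge.

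Your independence step is exactly the adaptation that makes that import legitimate. Under the sample-values-first convention, the $\tau(e)$th labels on the distinct edges of $S$ are i.i.d.\ uniform on $(0,1]$ and independent of the $l_e$. Your split of the minimum-conditioned integral at $1-\delta$ then gives $\delta^m+m(1-\delta)\delta^{m-1}=m\delta^{m-1}-(m-1)\delta^m$, which is the standard distribution of the range of $m$ uniforms.

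Restricting to $0<\delta\le 1$ is also necessary and not just convenient. For $\delta>1$ the true probability is $1$, but the stated expression is strictly smaller than $1$.
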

Now note that, whether or not a given set of labels all fall within a specific interval, is independent of the order they fall in. Thus we immediately obtain the following more general statement.
\begin{restatable}{corollary}{CorCont}
    \label{corr: cont}
    For $H=(V_H,E_H)$ a fixed non-trivial static graph and $P=(E_H,\prec)$ a partial order over its edges,
    \[\Prob[\ContDel{\delta}_{S,\tau}=1]=\frac{R(P)}{|E_H|!}\delta^{|E_H|}(1+\frac{|E_H|(1-\delta)}{\delta})\]
\end{restatable}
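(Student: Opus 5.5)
The plan is to reduce \cref{corr: cont} to \cref{lem: algowin} via a symmetry argument on the relative order of the labels. Fix $S\in\mathcal{S}$ and $\tau\in\mathcal{T}_S$, and let $\lambda'$ map each $e\in E_S$ to its $\tau(e)$th sampled label. By construction these $|E_H|$ labels are sampled independently and uniformly from $(0,1]$, since they are distinct sampled coordinates on distinct edges. Consequently, almost surely they are pairwise distinct, so they induce a uniformly random strict total order (a permutation) $\sigma$ of $E_S$, identified with $E_H$ via the occurrence $S$.

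The central observation is that $\sigma$ is independent of the event $W$ that $\max_{e_1,e_2}(\lambda'(e_1)-\lambda'(e_2))\le\delta$. This holds because $W$ is a symmetric function of the label vector: it depends only on the multiset of values, namely on the minimum and maximum. Formally, $(\lambda'(e))_{e}$ is exchangeable, so for any permutation $\pi$ of $E_H$ the map that relabels coordinates by $\pi$ preserves the law of the vector and preserves $W$. It follows that $\Prob[W\cap\{\sigma=\sigma_0\}]$ takes the same value for all $|E_H|!$ orders $\sigma_0$, and therefore
\[
\Prob[W\cap\{\sigma=\sigma_0\}]=\frac{\Prob[W]}{|E_H|!}.
\]

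Next, $\lambda'$ is a valid realization of $P$ exactly when $\sigma$ is a linear extension of $P$, ignoring the null event of ties. Summing over the $R(P)$ linear extensions gives
\[
\Prob[\ContDel{\delta}_{S,\tau}=1]=\frac{R(P)}{|E_H|!}\Prob[W].
\]
Finally, when $P=P^{\emptyset}$ the order condition is vacuous, so $\ContDel{\delta}_{S,\tau}$ coincides with $W$. Hence $\Prob[W]=\delta^{|E_H|}\bigl(1+\frac{|E_H|(1-\delta)}{\delta}\bigr)$ by \cref{lem: algowin}, and substituting this in yields the claim.

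The only delicate step is justifying the independence of $W$ and $\sigma$. Exchangeability of i.i.d.\ continuous labels handles it cleanly, and ties contribute probability zero.
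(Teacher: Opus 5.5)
Your proposal is correct and takes the same route as the paper. The paper also argues that the event that all labels fall within a window of length $\delta$ is independent of their relative order, so the probability factors as $\frac{R(P)}{|E_H|!}$ times the window probability from \cref{lem: algowin}; your exchangeability argument just makes that one-line independence claim rigorous.
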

Unfortunately, the precise characterization of the discrete variant $\DiscDel{\beta}{T}_{S,\tau}$ is considerably less nice to work with.
It is in fact significantly more convenient (and as we will later see, sufficient) to work with the following loose bounds derived from the continuous case.
Consider that all labels that fall within an interval of length $\beta$ in $\Gamma_{[n]}(\psi,T)$ must fall within an interval of length $\frac{\beta}{T}$ in the associated instance of $\Gamma_{[n]}(\psi)$.
Formally, we make the following observation\footnote{We would like to note that this observation only holds for motifs obeying a strict partial order. The reason for this is that two labels which fall in an illegal order in the continuous sample may lie in the same snapshot after discretization and then may be permitted under the non-strict order. In order to consider a non-strict order, one can obtain a probabilistic variant of the observation, which should only affect our upper bounds by a constant (and therefore asymptotically irrelevant) factor.}.
\begin{restatable}{observation}{ObsBridgeTwo}
    \label{obs: bridge 2}
    For $\mathcal{G}=(V,E,\lambda)$, $T>\beta \geq 1$, $\mathcal{H}$ the $T$ discretization of $\mathcal{G}$, $H=(V_H,E_H)$ any static graph, $P=(E_H,\prec)$ a partial order over the edges of $H$, and for every $S \in \mathcal{S}$, $\tau \in \mathcal{T}_{S}$, we have that
    $\{\DiscFull{\beta}{T}_{S,\tau}=1\}\subseteq\{\ContFull{\frac{\beta}{T}}_{S,\tau}=1\}.$
    Therefore, with certainty,
    $\DiscSum{\beta}{T}\leq \ContSum{\frac{\beta}{T}}.$
\end{restatable}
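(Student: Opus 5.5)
We prove the event inclusion $\{\DiscFull{\beta}{T}_{S,\tau}=1\}\subseteq\{\ContFull{\frac{\beta}{T}}_{S,\tau}=1\}$ pointwise. We fix a sample $\mathcal{G}$ of the continuous model and use the coupling of \cref{obs: bridge 1}: $\mathcal{H}$ is the $T$-discretization of $\mathcal{G}$, with the same label counts $l_e$ and with the $i$th label of $e$ in $\mathcal{H}$ equal to $\lceil T x\rceil$, where $x$ is the $i$th sampled label of $e$ in $\mathcal{G}$. The indicator $\LabEx_{S,\tau}$ depends only on the $l_e$, so it is common to both products. It therefore suffices to show that $\DiscDel{\beta}{T}_{S,\tau}=1$ implies $\ContDel{\beta/T}_{S,\tau}=1$ on this coupled sample.

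Let $\lambda'(e)=x_e\in(0,1]$ denote the $\tau(e)$th continuous label, and write $d_e=\lceil T x_e\rceil$. The argument has two steps.

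\textbf{Order.} Suppose $e\prec f$ in $P$. The discrete event gives $d_e<d_f$. Since $x\mapsto\lceil Tx\rceil$ is non-decreasing, $x_e\ge x_f$ would force $d_e\ge d_f$, a contradiction. Hence $x_e<x_f$, so the continuous labels also realize $P$. This step uses strictness of $P$, as the footnote warns.

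\textbf{Window.} For any $e_1,e_2$ the discrete event gives $d_{e_1}-d_{e_2}<\beta$. Since both sides are integers, this means $d_{e_1}-d_{e_2}\le\beta-1$ when $\beta$ is an integer. We have $Tx_{e_1}\le d_{e_1}$ and $Tx_{e_2}>d_{e_2}-1$. Therefore
\[
T(x_{e_1}-x_{e_2})<d_{e_1}-d_{e_2}+1\le \beta,
\]
so $x_{e_1}-x_{e_2}<\beta/T$. This gives $\max(\lambda'(e_1)-\lambda'(e_2))\le\beta/T$, which is the window condition of $\ContDel{\beta/T}_{S,\tau}$.

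If $\beta$ is not an integer, we instead use $d_{e_1}-d_{e_2}\le\lceil\beta\rceil-1<\beta$. This yields the same conclusion up to this rounding, which we handle by noting that only integer windows matter: replacing $\beta$ by $\lceil\beta\rceil$ leaves the discrete event unchanged, and the argument then goes through as above. This integrality and ceiling bookkeeping is the only real subtlety.

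Summing the pointwise inequality $\DiscFull{\beta}{T}_{S,\tau}\le\ContFull{\beta/T}_{S,\tau}$ over all $S\in\mathcal{S}$ and $\tau\in\mathcal{T}_S$ gives $\DiscSum{\beta}{T}\le\ContSum{\beta/T}$ with certainty.
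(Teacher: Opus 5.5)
Your argument is correct for integer $\beta$, and it follows the paper's reasoning. The paper justifies the observation in one line: labels lying in $\beta$ consecutive snapshots of $\mathcal{H}$ lie in a continuous interval of length $\beta/T$. A footnote adds that strictness of $P$ is what lets the order transfer back to $\mathcal{G}$. Your monotonicity-of-ceiling step is that footnote, and your inequalities $Tx_{e_1}\le d_{e_1}$, $Tx_{e_2}>d_{e_2}-1$ are the interval containment written out.

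Your paragraph on non-integer $\beta$, however, is wrong. Replacing $\beta$ by $\lceil\beta\rceil$ does leave the discrete event unchanged. But your window argument then gives only $x_{e_1}-x_{e_2}<\lceil\beta\rceil/T$, i.e.\ inclusion into $\{\ContFull{\lceil\beta\rceil/T}_{S,\tau}=1\}$. That event is strictly larger than $\{\ContFull{\beta/T}_{S,\tau}=1\}$, and nothing in your argument recovers the factor $\beta/\lceil\beta\rceil$.

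In fact the inclusion is false for non-integer $\beta$. Take $T=10$ and $\beta=3/2$, and two edges whose selected continuous labels are $x_{e_1}=0.2$ and $x_{e_2}=0.0001$, with either no order constraint or $e_2\prec e_1$. Then $d_{e_1}=2$ and $d_{e_2}=1$, so the discrete gap $1<\beta$ holds and $\DiscFull{\beta}{T}_{S,\tau}=1$. The continuous gap is $0.1999>\beta/T=0.15$, so $\ContFull{\beta/T}_{S,\tau}=0$.

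The fix is to state explicitly that $\beta$ is a positive integer, and drop that paragraph. This is how the paper treats the discrete model throughout: $\beta:\mathbb{N}\to\mathbb{N}$ in \cref{corr: DISC-LOWER}, and the case split $\beta=1$ versus $\beta\ge 2$ in \cref{lem: bridge 3}. With that assumption your proof is complete.
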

We can now apply this alongside, \cref{lem: algowin}, \cref{corr: cont} and \cref{obs: bridge 1} to obtain bounds on $\DiscDel{\beta}{T}_{S,\tau}$.
\begin{restatable}{lemma}{LemBridgeThree}
    \label{lem: bridge 3}
    Let  $H=(V_H,E_H)$ be a fixed static graph, $P=(E_H,\prec)$ a partial order on its edges and $T\geq\beta\geq 1$. For any $S \in\mathcal{S}, \tau \in \mathcal{T}_S$,
    \[\Prob{[\DiscDel{\beta}{T}_{S,\tau}=1]}\leq \left(\frac{\beta}{T}\right)^{|E_H|}\left(1+\frac{|E_H|(1-\frac{\beta}{T})}{\frac{\beta}{T}}\right).\]
    Furthermore if $\beta>1$,
    \[\Prob{[\DiscDel{\beta}{T}_{S,\tau}=1]}\geq \left(\frac{\beta}{2T |E_H|}\right)^{|E_H|}\left(1+\frac{|E_H|(1-\frac{\beta}{2T})}{\frac{\beta}{2T}}\right).\]
    Finally, if $\beta=1$,
    \[\Prob{[\DiscDel{1}{T}_{S,\tau}=1]}=T^{1-|E_H|}.\]
\end{restatable}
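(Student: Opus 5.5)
We prove the three bounds on $\Prob[\DiscDel{\beta}{T}_{S,\tau}=1]$ separately. By \cref{obs: bridge 1}, we may realize $\mathcal{H}$ as the $T$-discretization of a continuous sample $\mathcal{G}$, so the $\tau(e)$th labels on $E_S$ are $\lceil T x_e\rceil$ for i.i.d.\ uniform $x_e\in(0,1]$.

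\textbf{Upper bound.} We argue exactly as in \cref{obs: bridge 2}, but for the event $\DiscDel{\beta}{T}_{S,\tau}$ alone; that observation was stated for $\DiscFull{\beta}{T}$, but its argument never uses $\LabEx$. If the discretized labels satisfy $P$ strictly, then the continuous labels satisfy $P$, since $\lceil Tx\rceil<\lceil Ty\rceil$ forces $x<y$. If the discretized labels lie in a window of width less than $\beta$, their integer spread is at most $\beta-1$, so the continuous spread is strictly less than $\beta/T$. Hence $\{\DiscDel{\beta}{T}_{S,\tau}=1\}\subseteq\{\ContDel{\beta/T}_{S,\tau}=1\}$. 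Applying \cref{corr: cont} with $\delta=\beta/T$ and bounding $R(P)/|E_H|!\le 1$ gives the first inequality.

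\textbf{Case $\beta=1$.} All labels must land in the same snapshot. Since $P$ is strict and $H$ has at least two edges, this is only possible when $P$ is the trivial order; we note this edge case explicitly, since the statement should be read with $P$ empty here. Given the trivial order, the probability is $\sum_{t=1}^T T^{-|E_H|}=T^{1-|E_H|}$.

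\textbf{Lower bound for $\beta>1$.} We aim to exhibit a sub-event of $\DiscDel{\beta}{T}_{S,\tau}$ whose probability matches the stated form. Set $m=|E_H|$ and fix a linear extension of $P$, $e_1\prec\dots\prec e_m$. Split the window into $m$ consecutive integer blocks, each of length $\lfloor\beta/m\rfloor$ or so. Require that the labels fall in distinct blocks in this linear order, which guarantees strict order. Require also that the whole window of width less than $\beta$ sits anywhere in $[T]$, which guarantees the width condition.

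A cleaner route is to compare with the continuous event $\ContDel{\beta/(2T)}$ under the totally ordered $P'$, scaled down by $m^{-m}$. The chain of comparisons is as follows. If the continuous labels lie within width $\beta/(2T)\le(\beta-1)/T$ (this is where $\beta\ge2$ is used), then the discretized spread is less than $\beta$. Ties after discretization could break strictness, so we instead require each continuous label to lie in its own sub-interval of length $\beta/(2Tm)$. We then lower bound the resulting probability by $(1/m)^m\,\Prob[\ContDel{\beta/(2T)}]$ using \cref{lem: algowin}. This produces $\left(\frac{\beta}{2Tm}\right)^m\bigl(1+m(1-\tfrac{\beta}{2T})/\tfrac{\beta}{2T}\bigr)$, possibly after absorbing the linear-extension factor via \cref{lem: Poset-Extension}.

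\textbf{Main obstacle.} The hard part is handling ties in the discrete snapshots while keeping the constant at exactly $(2m)^{-m}$. Sub-intervals of width $\beta/(2Tm)$ may be narrower than a snapshot, so "distinct sub-intervals" does not by itself give distinct integer labels. The fix I would try first is to count directly. The number of integer labelings in $[T]$ with spread less than $\beta$ that respect the chosen linear extension strictly is at least $\binom{\lfloor\beta/2\rfloor}{m}$ times the number of window start positions, roughly $T$. I would then compare this count with $T^m\bigl(\tfrac{\beta}{2Tm}\bigr)^m\bigl(1+\tfrac{2Tm}{\beta}\bigr)$, using $\binom{k}{m}\ge (k/m)^m$-type estimates, as in the proof of \cref{lem: Poset-Extension}. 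This handles the regime $\beta\ge 2m$. For small $\beta$, the height assumption and a single explicit assignment suffice up to constants.
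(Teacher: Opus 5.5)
\textbf{Upper bound and $\beta=1$.} These parts are correct and follow the paper. The paper also bounds the discrete event by the event that the underlying continuous labels lie in a window of length $\beta/T$, and then applies \cref{lem: algowin}. Your inclusion $\{\DiscDel{\beta}{T}_{S,\tau}=1\}\subseteq\{\ContDel{\beta/T}_{S,\tau}=1\}$ followed by \cref{corr: cont} is the same argument, written out more carefully. Your caveat that $\beta=1$ forces $P$ to be an antichain is the paper's standing assumption $\beta\geq\text{height}(P)$. That assumption is also needed for the $\beta>1$ lower bound, since otherwise the probability is $0$.

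\textbf{Lower bound for $\beta>1$: this is not proved.} Your first route forces the continuous labels into distinct sub-intervals of width $\beta/(2T|E_H|)$. As you note, this fails because several sub-intervals can discretize into the same snapshot, so strictness is lost. The fallback count is only sketched, and as written it is wrong in several places.
\begin{itemize}
\item Multiplying $\binom{\lfloor\beta/2\rfloor}{m}$ by the number of window positions over-counts labelings that lie in several windows. You would have to anchor, e.g.\ at the minimum label.
\item The number of windows fitting in $[T]$ is not ``roughly $T$'' when $T$ is close to $\beta$ (take $T=\beta$).
\item For $\beta<2m$ you only claim the bound up to constants, whereas the statement asserts an explicit constant.
\end{itemize}
So the second inequality is not established.

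\textbf{The missing idea: decouple the window from the order.} Take the order factor in the discrete world, not the continuous one. Let $X$ be the event that the $\tau$-labels of $S$ lie in $\beta$ consecutive snapshots, and $Y$ the event that they realize $P$.

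Use the continuous comparison only for $X$. If the continuous labels lie in an interval of length $\beta/(2T)$, then, since $\beta\geq 2$, their discretizations occupy at most $\beta$ consecutive snapshots. Hence $\Prob[X]$ is at least the probability given by \cref{lem: algowin} with $\delta=\beta/(2T)$. You already made exactly this observation.

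The order factor is the probability that labels in a block of $\beta$ snapshots realize $P$, namely $R(P,\beta)/\beta^{|E_H|}$. By \cref{lem: Poset-Extension}, which uses $\beta\geq\text{height}(P)$, this is at least $|E_H|^{-|E_H|}$. Because $R(P,\beta)$ counts maps into $[\beta]$ directly, ties are accounted for automatically and no sub-interval construction is needed. Multiplying the two factors gives exactly $\left(\frac{\beta}{2T|E_H|}\right)^{|E_H|}\left(1+\frac{|E_H|(1-\frac{\beta}{2T})}{\frac{\beta}{2T}}\right)$.

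One caution about the paper's version. It asserts $\Prob[Y\mid X]=R(P,\beta)/\beta^{|E_H|}$ informally. Given $X$, the labels are not uniform on one fixed block, and blocks near $T$ are truncated. A fully rigorous write-up of this route therefore still needs a short argument for the order factor. Even so, this decoupling is what produces the stated constant.
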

\begin{proof}
    Let $X_{S,\tau}$ be the event that the labels of $\tau$ fall within an interval of length $\beta$ and let $Y_{S,\tau}$ be the event that the labels of $\tau$ obey $P$.
    Now consider a sample from $\Gamma_{[V]}(\psi)$, it follows from \cref{obs: bridge 1}, \cref{lem: algowin} and the fact that the set of intervals of the form $(\frac{i}{T},\frac{i+\beta}{T}]$ is a subset of the continuous intervals of length $\frac{\beta}{T}$ that,
    \[\Prob{[X_{S,\tau}]}\leq\left(\frac{\beta}{T}\right)^{|E_H|}\left(1+\frac{|E_H|(1-\frac{\beta}{T})}{\frac{\beta}{T}}\right).\]
    Now \[\Prob{[Y_{S,\tau}|X_{S,\tau}]}=\frac{R(P,\beta)}{\beta^{|E_H|}}.\]
    Combining the above with the trivial upper bound on $\frac{R(P,\beta)}{\beta^{|E_H|}}$, we have
    \[\Prob{[\DiscDel{\beta}{T}_{S,\tau}=1]}=\Prob{[Y_{S,\tau}|X_{S,\tau}]}\Prob{[X_{S,\tau}]}
    \leq \left(\frac{\beta}{T}\right)^{|E_H|}\left(1+\frac{|E_H|(1-\frac{\beta}{T})}{\frac{\beta}{T}}\right).\]

    For the lower bound, we consider sampling $\mathcal{H}$ first from $\Gamma_{V}(\psi)$ and then taking the $T$-discretization of our sample to obtain our dependent sample $\mathcal{G}$ from $\Gamma_{V}(\psi, \tau)$ in accordance with \cref{obs: bridge 1}.
    Now let $Z_{S,\tau}$ be the event that the labels of $\tau$ fall within an interval of length $\frac{\beta}{2T}$ in $\mathcal{H}$.
    From \cref{lem: algowin}, we have that
    \[\Prob{[Z_{S,\tau}]}=\left(\frac{\beta}{2T}\right)^{|E_H|}\left(1+\frac{|E_H|(1-\frac{\beta}{2T})}{\frac{\beta}{2T}}\right).\]
    However for any $\beta\geq 2$, we immediately obtain that any interval of length $\frac{\beta}{2T}$ must fall entirely within $\beta$ consecutive snapshots under a $T$-discretization.
    Thus in this case, \[\Prob{[X_{S,\tau}]}\geq \Prob{[Z_{S,\tau}]}.\]

    The bound is then obtained as in the upper bound, but taking the lower bound of $|E_H|^{-|E_H|}\leq \frac{R(P,\beta)}{\beta^{|E_H|}}$ obtained in \cref{lem: Poset-Extension}.

    Finally, for the case where $\beta=1$, there are $T$ possible snapshots and all labels land in each one with probability $T^{-|E_H|}$. Since the labels falling in each snapshot are mutually exclusive, we conclude with a tight union bound.
\end{proof}
\subsubsection{First Moment}
Since directly calculating the probability of a temporal motif occurring in a random temporal graph is infeasible, we shall rely on bounds derived from the first moment of the number of such occurrences. 
In fact, we are able to obtain the following lemma via a relatively standard calculation, exploiting both the linearity of expectation and the independence between the number of labels on each edge and their value.
Note that the term concerning the number of realizations of $P$ is absent in the case for $\beta=1$ as there exists only a single possible realization (all labels in a single snapshot).
\begin{restatable}{lemma}{LemExpectation}
    \label{lem: expectation}
    The expectation of $\ContSum{\delta}$ is:
    \[\Exp[\ContSum{\delta}]=\delta^{|E_H|}r^{|E_H|}|V_H|!\left(\frac{R(P)}{|E_H|!}\right)\binom{n}{|V_H|}\left(1+\frac{|E_H|(1-\delta)}{\delta}\right).\]
    Similarly, if $\beta=1$ then
    $\Exp{[\DiscSum{1}{T}]}=r^{|E_H|}T^{1-|E_H|}|V_H|!\binom{n}{|V_H|}$.\\ On the other hand, if $\beta>1$ then 
    $\Exp{[\DiscSum{\beta}{T}]}\geq \left(\frac{\beta r}{2|E_H|T}\right)^{|E_H|}|V_H|!\left(1+\frac{|E_H|(1-\frac{\beta}{2T})}{\frac{\beta}{2T}}\right)\binom{n}{|V_H|}.$ \medskip
\end{restatable}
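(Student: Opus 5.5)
The plan is to compute $\Exp[\ContSum{\delta}]$ by linearity of expectation over all pairs $(S,\tau)$ with $S\in\mathcal{S}$ and $\tau\in\mathcal{T}_S$, and then to factor each summand using independence. For a fixed pair we have $\ContFull{\delta}_{S,\tau}=\LabEx_{S,\tau}\ContDel{\delta}_{S,\tau}$. Here $\LabEx_{S,\tau}$ depends only on the multiplicities $(l_e)_{e\in E_S}$, while $\ContDel{\delta}_{S,\tau}$ depends only on the sampled label values. Under the convention that all label values are sampled first and the multiplicities are chosen afterwards, these two indicators are independent. Hence
\[\Exp[\ContFull{\delta}_{S,\tau}]=\Prob[\LabEx_{S,\tau}=1]\,\Prob[\ContDel{\delta}_{S,\tau}=1],\]
and by \cref{corr: cont} the second factor equals $\frac{R(P)}{|E_H|!}\delta^{|E_H|}\bigl(1+\frac{|E_H|(1-\delta)}{\delta}\bigr)$, whatever $S$ and $\tau$ are. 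This factor can therefore be pulled out of the sum.

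Next, fix $S$ and sum over $\tau$. Since the $l_e$ are independent,
\[\sum_{\tau\in\mathcal{T}_S}\Prob[\LabEx_{S,\tau}=1]=\prod_{e\in E_S}\sum_{k\ge 1}\Prob[l_e\ge k]=\prod_{e\in E_S}\Exp[l_e]=r^{|E_H|},\]
using the tail-sum formula for expectation. Here I interpret $\mathcal{T}_S$ as maps into the positive integers, because a zero index is never a real label. The remaining step is to count occurrences $S$: choosing the vertex set gives $\binom{n}{|V_H|}$ options, and labelled embeddings of $H$ onto it give $|V_H|!$ options. I would need to check that the paper's intended count matches this $|V_H|!$ factor, i.e.\ that $\mathcal{S}$ counts labelled placements (so automorphisms are treated later via $\#\text{Aut}(H,P)$). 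Multiplying the three factors gives the stated formula.

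The discrete statements follow the same route. The only change is that the factor from \cref{corr: cont} is replaced by the corresponding bound from \cref{lem: bridge 3}. For $\beta=1$ it is replaced by the exact value $T^{1-|E_H|}$; no $R(P)$ term appears because all labels share one snapshot. For $\beta>1$ it is replaced by the lower bound, which already includes the factor $|E_H|^{-|E_H|}$ coming from \cref{lem: Poset-Extension}, and this yields $\bigl(\frac{\beta r}{2|E_H|T}\bigr)^{|E_H|}$. Independence between multiplicities and label values persists after discretization, because the $T$-discretization acts on label values only (\cref{obs: bridge 1}).

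The main obstacle is justifying the independence and the tail-sum computation rigorously, in particular that $\LabEx_{S,\tau}$ and the label values are independent even though $\tau$ indexes "sampled" labels that may not exist. This is exactly the point of the sampling convention, so I would state it explicitly. Everything else is routine bookkeeping, so the remaining care goes into getting the combinatorial count of $\mathcal{S}$ right.
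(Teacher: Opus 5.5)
Your proposal is correct and follows essentially the same route as the paper. The steps are: linearity over pairs $(S,\tau)$; independence of the multiplicities from the label values; pulling out the factor from \cref{corr: cont} (resp.\ \cref{lem: bridge 3} in the discrete cases); collapsing $\sum_{\tau}\prod_{e}\Prob[l_e\geq\tau(e)]$ to $r^{|E_H|}$ via the tail-sum formula, which the paper phrases as a Cauchy product; and counting $|V_H|!\binom{n}{|V_H|}$ labelled placements of $H$. Your reading of $\mathcal{T}_S$ as maps into the positive integers is exactly what the paper's computation implicitly requires, since allowing $\tau(e)=0$ would make the per-edge sum $r+1$ rather than $r$.
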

\begin{proof}
    We present the proof for the continuous case, and defer the proof of the discrete cases to the appendices, as they follow from near identical calculations. 
    By linearity, we obtain,
    \[\Exp[\ContSum{\delta}]=\sum_{S \in \mathcal{S}}\sum_{\tau_S \in \mathcal{T}_S}\Exp[\LabEx_{S,\tau}\ContDel{\delta}_{S,\tau}].\]
    By independence we have that,
    \[=\sum_{S \in \mathcal{S}}\sum_{\tau \in \mathcal{T}_S}\Exp[\LabEx_{S,\tau}]\Exp[\ContDel{\delta}_{S,\tau}].\]
    From \cref{corr: cont} we have,
    \[=\delta^{|E_H|}\left(\frac{R(P)}{|E_H|!}\right)\left(1+\frac{|E_H|(1-\delta)}{\delta}\right)\sum_{S \in \mathcal{S}}\sum_{\tau \in \mathcal{T}_S}\Exp[\LabEx_{S,\tau}]\]
    By independence,
    \[=\delta^{|E_H|}\left(\frac{R(P)}{|E_H|!}\right)\left(1+\frac{|E_H|(1-\delta)}{\delta}\right)\sum_{S \in \mathcal{S}}\sum_{\tau \in \mathcal{T}_S}\prod_{e \in E_S}\Prob{\left[\tau(e) \leq l_e\right]}.\]
    Since $\sum_{\tau \in \mathcal{T}_S}\Prob{[\tau(e)\leq l_e]}=r$ is both finite and absolutely convergent, we observe that this is equivalent to the following via the $|E_H|$ fold Cauchy product,
    \[=\delta^{|E_H|}\left(\frac{R(P)}{|E_H|!}\right)\left(1+\frac{|E_H|(1-\delta)}{\delta}\right)\sum_{S \in \mathcal{S}}\prod_{e \in E_S}\sum_{\tau \in \mathcal{T}_S}\Prob{\left[\tau(e) \leq l_e\right]}.\]
    \[=\delta^{|E_H|}\left(\frac{R(P)}{|E_H|!}\right)\left(1+\frac{|E_H|(1-\delta)}{\delta}\right)\sum_{S \in \mathcal{S}}\prod_{e \in E_S} r\]
    \[=\delta^{|E_H|}r^{|E_H|}|V_H|!\left(\frac{R(P)}{|E_H|!}\right)\binom{n}{|V_H|}\left(1+\frac{|E_H|(1-\delta)}{\delta}\right).\]
\end{proof}
With that established, we may turn our attention to the construction of the actual bound. The observant reader may notice that the expectations obtained above will tend to zero whenever $\delta(n)=o(n^{-\rho_H})$ and $|V_H|$ is fixed. Infact \cref{lem: expectation} is sufficient to show the lower bounds of \cref{thm: Continuous Existence} and \cref{thm: Discrete Existence} via a straightforward application of the first moment method. Unfortunately, the same cannot be said for \cref{thm: Cliques}. Indeed, even a cursory inspection of the problem will reveal that $\ContFull{\delta}_{S_1,\tau_1}$ and $\ContFull{\delta}_{S_2,\tau_2}$ have a strong positive association whenever $\tau_1$ and $\tau_2$ share labels or even when $S_1$ and $S_2$ share a single edge. Consequently, the first moment produces an extremely weak bound that explodes whenever $r>1$ and the size of $H$ is unbounded. In order to control these dependencies and limit the contribution of additional occurrences after the first, we make use of a slightly more sophisticated bound based on the ratio of the expectation and the expectation conditioned on being non-zero.
\begin{restatable}{lemma}{LemMotifUpper}
    \label{lem: motif-upper}
    Let $H=(V_H,E_H)$ be a fixed graph, $P=(E_H,\prec)$, $\mathcal{G}=(V,E,\lambda)$ be a sample from $\Gamma_{[n]}(\psi)$ and $0<\delta<1$. Then,
    \[\Prob{\left[\ContSum{\delta}>0\right]}\leq \binom{n}{|V_H|}\frac{|V_{H}|!}{\# \text{Aut}(H,P)}\left(\frac{\delta rq}{(1-\delta)q+\delta r}\right)^{|E_H|}\left(1+\frac{|E_H|(1-\delta)}{\delta}\right).\]
\end{restatable}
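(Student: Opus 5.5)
The plan is to bound the probability for each fixed copy of $H$ separately via the identity $\Prob[X>0]=\Exp[X]/\Exp[X\mid X>0]$, and then take a union bound over copies.

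\textbf{Reductions.} First, I would discard the order $P$. Every occurrence of $(H,P)$ as a $\delta$-temporal motif is also an occurrence of $(H,P^{\emptyset})$ on the same edges and labels. So it suffices to bound the probability that some fixed copy carries labels, one per edge, all lying in a common window of length $\delta$. The factor $R(P)/|E_H|!\le 1$ of \cref{corr: cont} is then simply dropped, which matches its absence from the stated bound. Second, I would take a union bound over the \emph{unlabelled} placements of $H$ in $K_n$ rather than over $\mathcal{S}$. Two placements of $(H,P)$ related by an automorphism of $(H,P)$ give literally the same event. Hence the number of distinct events is $\binom{n}{|V_H|}\frac{|V_H|!}{\#\text{Aut}(H,P)}$, which produces the combinatorial prefactor.

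\textbf{Fixed copy.} Fix one copy $S$ and let $X_S=\sum_{\tau\in\mathcal{T}_S}\ContFull{\delta}_{S,\tau}$ be computed for the unordered motif. By \cref{lem: algowin} and the same Cauchy-product computation as in \cref{lem: expectation},
\[\Exp[X_S]=(\delta r)^{|E_H|}\left(1+\frac{|E_H|(1-\delta)}{\delta}\right).\]
Since $\Prob[X_S>0]=\Exp[X_S]/\Exp[X_S\mid X_S>0]$, the claim reduces to the following per-edge lower bound:
\[\Exp[X_S\mid X_S>0]\ \ge\ \prod_{e\in E_S}\Bigl((1-\delta)+\frac{\delta r}{q}\Bigr)=\left(\frac{(1-\delta)q+\delta r}{q}\right)^{|E_H|}.\]
Substituting this gives exactly $\left(\frac{\delta rq}{(1-\delta)q+\delta r}\right)^{|E_H|}\left(1+\frac{|E_H|(1-\delta)}{\delta}\right)$ for each copy.

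\textbf{Conditional expectation (main obstacle).} To prove the per-edge bound, I would condition on a canonical witness. Fix a deterministic rule, for instance the lexicographically first valid $\tau$, and let $W$ be the window of length $\delta$ determined by its labels. Given the witness, each edge $e$ has $l_e\ge1$ and one specified label in $W$. The intended heuristic is that the remaining $l_e-1$ labels stay independent and uniform, so each lands in $W$ with probability at least $\delta$. Every combination of one in-window label per edge is then a valid $\tau$, so
\[X_S\ \ge\ \prod_e \bigl(1+\#\{\text{other labels of }e\text{ in }W\}\bigr).\]
The conditional expectation of each factor would be $1+\delta\,\Exp[l_e-1\mid l_e\ge1]=(1-\delta)+\delta r/q$, and independence across edges would turn the product of expectations into the product above.

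The delicate point is that conditioning on which $\tau$ is canonical is not independent of the other labels. Minimality of the witness constrains where the remaining labels may lie, and it also biases the multiplicities $l_e$. My first attempt would avoid the canonical choice altogether: sum over the witness's label indices and over the window position, and use a size-biasing argument showing that the multiplicities and the non-witness labels are distributed as claimed. Concretely, the conditional law of $l_e$ given one chosen label should be at least the law of $l_e$ given $l_e\ge1$, and the other labels should be uniform and independent of the window. If exact independence fails, I would aim for an FKG/positive-association inequality, since $X_S$ is increasing in both the label counts and the in-window indicators. That would show the conditioning can only increase the expected number of extra in-window labels, which is the direction the bound needs.
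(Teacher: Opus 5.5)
Your reductions are fine: replacing $P$ by $P^{\emptyset}$ copy by copy, counting placements modulo $\#\text{Aut}(H,P)$, and using $\Prob[X_S>0]=\Exp[X_S]/\Exp[X_S\mid X_S>0]$. This is also the paper's skeleton. The gap is exactly the step you flagged, and it cannot be closed, because the inequality $\Exp[X_S\mid X_S>0]\ge\bigl((1-\delta)+\delta r/q\bigr)^{|E_H|}$ is false in general.

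Your heuristic (one specified label in $W$, all other labels independent and uniform) describes the Palm law, obtained by conditioning on a fixed $\tau$ being a witness. That is what your size-biasing route actually computes. It controls $\Exp[X_S^2]/\Exp[X_S]$, a weighted average of the Palm expectations $\Exp[X_S\mid \ContFull{\delta}_{S,\tau}=1]$. Since $\Exp[X_S]^2\le\Exp[X_S^2]\Prob[X_S>0]$, this quantity is an \emph{upper} bound on $\Exp[X_S\mid X_S>0]$, so lower bounds on it transfer nothing. Under genuine conditioning on $X_S>0$, extra labels are pushed out of the window. Already for one edge with $K\ge 2$ labels and a fixed window, the number $N$ of labels in the window satisfies $\Exp[N\mid N\ge 1]=K\delta/(1-(1-\delta)^K)<1+(K-1)\delta$. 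FKG cannot rescue this, because ``$\tau$ is the canonical witness'' includes the decreasing event that no earlier candidate is a witness.

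Here is a concrete counterexample to the lemma itself. Take $\psi\equiv 2$ (so $r=2$, $q=1$), $\delta=\tfrac12$, $P=P^{\emptyset}$, $H=K_9$ and $n=9$. The combinatorial prefactor is $1$, and the claimed bound is $37\,(2/3)^{36}$. But with probability $(3/4)^{36}$ every edge has a label in $(0,\tfrac12]$, which yields a witness, and $(3/4)^{36}/\bigl(37\,(2/3)^{36}\bigr)=(9/8)^{36}/37\approx 1.9$. Equivalently, here $\Exp[X_S]=37$, so $\Exp[X_S\mid X_S>0]\le 37\,(4/3)^{36}<(3/2)^{36}$.

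The same example contradicts the upper bound of \cref{thm: Cliques}. The edges carrying a label in $(0,\tfrac12]$ form a $G(n,3/4)$, whose clique number is about $2\log_{4/3}n$; this exceeds the claimed $2\log_{3/2}n$.

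The paper's proof fails at the same place. Its first claim asserts $\Exp[\ContSum{\delta,S}\mid\ContSum{\delta,S}>0]\ge\min_{\hat\tau}\Exp[\ContSum{\delta,S}\mid\ContFull{\delta}_{S,\hat\tau}=1]$. This silently replaces conditioning on existence by Palm conditioning, and the inequality is false already for two non-nested indicators. Its subsequent lower bound on the Palm expectation is correct but does not help.

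What is true is a bound with per-edge factor $1-\Exp[(1-\delta-\eta)^{l_e}]$. It follows from a union bound over $O(1/\eta)$ grid windows of length $\delta+\eta$, using independence across edges. For the fixed-motif thresholds, the plain first moment of \cref{lem: expectation} already suffices.
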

\begin{proof}
    Fix $S \in \mathcal{S}$ and define by $\ContSum{\delta,S}=\sum_{\tau \in \mathcal{T}_S}\ContFull{\delta}_{S,\tau}$ the number of $\delta$-temporal cliques on $S$.
    We will begin with a pair of claims:
    \begin{claim}
        \[\Exp[\ContSum{\delta,S}|\ContSum{\delta,S}>0]\geq\Exp[\ContSum{\delta,S}|\ContFull{\delta}_{S,\tmn{S}}=1].\]
    \end{claim}
    \begin{proof}
    It follows that \[\Exp[\ContSum{\delta,S}|\ContSum{\delta,S}>0]\geq \min_{\hat{\tau} \in \mathcal{T}_S:\hat{\tau}\geq \tmn{S}}\Exp[\ContSum{\delta,S}|\ContFull{\delta}_{S,\hat{\tau}}=1]\]
    However, we still must determine $\hat{\tau}$.
    First, note that, for all $\tau_1,\tau_2 \in \mathcal{T}_S$, $\Prob{[\ContDel{\delta}_{S,\tau_1}=1|\ContDel{\delta}_{S,\tau_2}=1]}$ depends only on $\{e \in E_S:\tau_1(e)=\tau_2(e)\}$ (the set of edge labels shared by the two).
     Now for any $\tau_1,\tau_2, \tau_3 \in \mathcal{T}_S$ where $\tau_1,\tau_2\leq \tau_3$ we have, by the symmetry in the number of pairs of labelings sharing a given set of edge labels, that,
    \[\Exp[\ContSum{\delta,S}|\ContDel{\delta}_{S,\tau_1}=1\wedge \tmx{S}=\tau_3]=\Exp[\ContSum{\delta,S}=1|\ContDel{\delta}_{S,\tmn{S}}\wedge \tmx{S}=\tau_3]\]
    Note that this is monotonically increasing in $\tau_3$.
    Now,
    \[\Exp[\ContSum{\delta,S}|\ContFull{\delta}_{S,\tau}=1]=\Exp[\ContSum{\delta,S}|\ContDel{\delta}_{S,\tau}=1\wedge \tmx{S}\geq \tau]\]
    \[=\sum_{\tau_3 \in \mathcal{T}_S:\tau_3\geq \tau}\Exp[\ContSum{\delta,S}|\ContDel{\delta}_{S,\tau}=1\wedge \tmx{S}=\tau_3]\Prob{[\tmx{S}=\tau_3|\tmx{S}\geq \tau]}\]
    \[=\sum_{\tau_3 \in \mathcal{T}_S:\tau_3\geq \tau}\Exp[\ContSum{\delta,S}|\ContDel{\delta}_{S,\tmn{S}}=1\wedge \tmx{S}=\tau_3]\Prob{[\tmx{S}=\tau_3|\tmx{S}\geq \tau]}\]
    This is monotonically increasing in $\tau$ and thus we must have that it is minimized when $\tau$ is minimized (specifically when $\tau=\tmn{S}$). So we obtain the claim.
    \end{proof}
    
    With the identity of $\hat{\tau}$ established, we will now need a lower bound on $\Exp[\ContDel{\delta}_{S,\tau}|\ContDel{\delta}_{S,\tmn{S}}=1]$. To this end,
    \begin{claim}
        For all $S \in \mathcal{S}$ and $\tau \in \mathcal{T}_S$ 
        %such that $\min_{e \in E_S}(\tau(e))=1$
        ,
        \[\Exp{[\ContDel{\delta}_{S,\tau}|\ContDel{\delta}_{S,\tmn{S}}=1]}\geq \left(\frac{R(P)}{|E_H|!}\right)\prod_{e \in E_S:\tau(e)\neq 1}\delta\]
    \end{claim}
    \begin{proof}
    
    Consider the following reordering of sampling. First, the labels associated with every edge in ${e \in E_S:\tau(e)=1}$ are sampled, followed by both samples for every edge in ${e \in E_S:\tau(e)>1}$. Now note that with the set of shared labels fixed, $\ContDel{\delta}_{S,\tau}$ and $\ContDel{\delta}_{S,\tmn{S}}$ become independent Bernoulli random variables sharing the same success parameter determined by the positions of the shared labels. Denote by $X$ the $(0,1]$ random variable equal to the probability that all labels of $\tau$ will fall within an interval of $\delta$ and in an order satisfying $P$ once the shared labels have been sampled.
    Since $X$ has a defined cumulative distribution function and takes values only in $(0,1]$ its first and second moments must be finite.
    Thus, it follows from \cref{lem: dependent Bernoulli}, that
    \[\Prob[\ContDel{\delta}_{S,\tau}=1|\ContDel{\delta}_{S,\tmn{S}}=1]\geq \Prob[\ContDel{\delta}_{S,\tau}=1|X>0]\]
    Therefore, since $\ContDel{\delta}_{S,\tau}$ is supported only on $\{0,1\}$ we must have that, 
    \[\Exp[\ContDel{\delta}_{S,\tau}|\ContDel{\delta}_{S,\tmn{S}}=1]\geq \Exp[\ContDel{\delta}_{S,\tau}|X>0].\]
    Let $Y$ be an indicator for the event that all edge labels shared by $\tau$ and $\tmn{S}$ fall within an interval of length $\delta$. Then
    \[ \Exp[\ContDel{\delta}_{S,\tau}|X>0] \geq \Exp[\ContDel{\delta}_{S,\tau}|Y=1].\]
    The last expectation is trivially lower bounded by the probability that all labels yet to be sampled fall into a specific window of length $\delta$ and that all labels fall in a permitted order. Therefore,
    \[\Exp[\ContDel{\delta}_{S,\tau}|Y=1] \geq \left(\frac{R(P)}{|E_H|!}\right)\prod_{e \in E_S:\tau(e)\neq 1}\delta,\]
    which concludes the proof of the claim.
    \end{proof}
    Thus we have the claim and can now begin our main calculation,
    \[\Exp[\ContSum{\delta,S}|\ContSum{\delta,S}>0]\geq \Exp[\ContSum{\delta,S}|\ContFull{\delta}_{S,\tmn{S}}=1]\]
    \[= \sum_{\tau \in \mathcal{T}_S}\Exp[\ContFull{\delta}_{S,\tau}|\ContFull{\delta}_{S,\tmn{S}}=1]\]
    \[= \sum_{\tau \in \mathcal{T}_S}\Exp[\LabEx_{S,\tau}|\LabEx_{S,\tmn{S}}=1]\Exp[\ContDel{\delta}_{S,\tau}|\ContDel{\delta}_{S,\tmn{S}}=1]\]
    \[\geq \sum_{\tau \in \mathcal{T}_S}\Exp[\ContDel{\delta}_{S,\tau}|\ContDel{\delta}_{S,\tmn{S}}]\prod_{e \in E_S}\Prob[l_e\geq \tau(e)|l_e\geq 1]\]
    \[\geq \left(\frac{R(P)}{|E_H|!}\right)\sum_{\tau \in \mathcal{T}_S}\left(\prod_{e \in E_S}\Prob[l_e\geq \tau(e)|l_e\geq 1]\right)\left(\prod_{e\in E_S: \tau(e)\neq 1}\delta\right)\]
    For $Z\sim \psi$ a generic random variable,
    \[=\left(\frac{R(P)}{|E_H|!}\right)\sum_{x_1,..,x_{|E_H|}\geq 1}\left(\prod_{i \in [|E_H|]:x_i\neq 1}\delta\right)\left(\prod_{j=1}^{|E_H|}\Prob{[Z\geq x_j|Z\geq 1]}\right)\]
    By distributivity for absolutely convergent sums,
    \[=\left(\frac{R(P)}{|E_H|!}\right)\prod_{i \in [|E_H|]}\left[(1-\delta)+\delta\sum_{i\geq 1}\Prob{[Z\geq i|Z\geq 1]}\right]\]
    \[=\left(\frac{R(P)}{|E_H|!}\right)\prod_{i \in [|E_H|]}\left[(1-\delta)+\delta\Exp[Z|Z\geq 1]\right]\]
    \[=\left(\frac{R(P)}{|E_H|!}\right)\left(\frac{(1-\delta)q+\delta r}{q} \right)^{|E_H|}\]
    From the previous lemma we know that,
    \[\Exp[\ContSum{\delta,S}]=\delta^{|E_H|}r^{|E_H|}\left(\frac{R(P)}{|E_H|!}\right)\left(1+\frac{|E_H|(1-\delta)}{\delta}\right)\]
    So we now have that,
    \[\Prob{[\ContSum{\delta,S}>0]}= \frac{\Exp[\ContSum{\delta,S}]}{\Exp[\ContSum{\delta,S}|\ContSum{\delta,S}\geq 1]}\leq\frac{\delta^{|E_H|}r^{|E_H|}\left(1+\frac{|E_H|(1-\delta)}{\delta}\right)}{\left(\frac{(1-\delta)q+\delta r}{q} \right)^{|E_H|}}\]
    \[=\left(\frac{\delta rq}{(1-\delta)q+\delta r}\right)^{|E_H|}\left(1+\frac{|E_H|(1-\delta)}{\delta}\right)\]
    We then obtain the claim via a union bound over $\mathcal{S}$ combined with the observation that we over-count by at least a factor of $\#\text{Aut}(H,P)$.
\end{proof}

This result serves as our primary tool for obtaining both the lower bounds of \cref{thm: Continuous Existence} and \cref{thm: Discrete Existence}, as well as the upper bound for \cref{thm: Cliques}. It may come as a surprise that we present a result only for the continuous case, however due to the relationship between the models and the similarity of the thresholds we wish to obtain, it turns out that \cref{lem: motif-upper} is sufficient for our purposes.

\subsubsection{Second Moment}
Once again we wish to avoid a direct calculation of a lower bound on the probability of a given motif occurring, and so our strategy will be to use the second moment method.
Unfortunately, we do not find a unified overarching bound sufficient to obtain our results for both fixed and growing motifs, as positive association rears its head once more (although this time as an obstacle for the fixed case).
In particular, when two potential occurrences of a motif share a footprint they remain dependent even if they do not share any labels, due to the inherent randomness of the \distribution{}.
This presents an issue whenever $\psi$ is a non-degenerate non-Bernoulli distribution, as we will retain terms in our summations corresponding to such pairs.
These terms contribute an additional $\delta$ factor to our calculations, which can be absorbed in the case of \cref{thm: Cliques} where $\delta$ is constant but a naive application of the second moment would lead to asymptotically weaker bounds in the case of \cref{thm: Continuous Existence} and \cref{thm: Discrete Existence} where $\delta$ is decreasing.
In order to avoid this pitfall, we shall make use of two bounds, one for the case where $\psi$ is a Bernoulli random variable and one for the more general but still bounded case.
We shall warm up with the considerably easier former case.
\begin{restatable}{lemma}{LemVariance}
    \label{lem: variance}
    Let $\mathcal{G}=(V,E,\lambda)$ be a sample from $\Gamma_{[n]}(\psi)$, for $\psi$ a Bernoulli random variable, then for any $0<\delta<1$,
    \[\Var\left[\ContSum{\delta}\right]\leq(\sqrt{2}\delta r)^{2|E_H|}\left(1+|E_H|\frac{(1-\delta)}{\delta}\right)n^{2|V_H|}(2|V_H|)^{2|V_H|}\sum_{I\subseteq H}(2 \delta r)^{-|E_I|}n^{-|V_I|}.\]
    Similarly for any $\beta>0$, $T>\beta$,
    \[\Var\left[\DiscSum{\beta}{T}\right]\leq(\frac{\sqrt{2}\beta r}{T})^{2|E_H|}\left(1+|E_H|\frac{(T-\beta)}{\beta}\right)n^{2|V_H|}(2|V_H|)^{2|V_H|}\sum_{I\subseteq H}(\frac{2 \beta r}{T})^{-|E_I|}n^{-|V_I|}.\]
\end{restatable}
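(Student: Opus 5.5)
Since $\psi$ is Bernoulli, each edge carries at most one label, so $\mathcal{T}_S$ effectively reduces to $\tmn{S}$. Then $\ContSum{\delta}=\sum_{S\in\mathcal{S}}\ContFull{\delta}_{S,\tmn{S}}$ with $\LabEx_{S,\tmn{S}}=\prod_{e\in E_S}\mathbbm{1}[l_e=1]$, and $\Prob[\LabEx_{S,\tmn{S}}=1]=r^{|E_H|}$ because $r=q$. I will expand
\[\Var[\ContSum{\delta}]\leq\sum_{S_1,S_2\in\mathcal{S}:\,E_{S_1}\cap E_{S_2}\neq\emptyset}\Exp[\ContFull{\delta}_{S_1}\ContFull{\delta}_{S_2}].\]
Pairs with edge-disjoint footprints involve disjoint sets of independent edge variables and labels, so they are independent and contribute zero covariance. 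For each intersecting pair I classify by the common subgraph $I=S_1\cap S_2$, viewed as a subgraph of $H$ with at least one edge. The number of such pairs is at most $n^{2|V_H|-|V_I|}$ times a combinatorial factor. That factor counts the choices of which vertices of each copy form the overlap and how they are identified, and it is bounded crudely by $(2|V_H|)^{2|V_H|}$.

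For a fixed pair with overlap $I$, the label-existence part factorises: $\Exp[\LabEx_{S_1}\LabEx_{S_2}]=r^{2|E_H|-|E_I|}$. For the timing part, the union $S_1\cup S_2$ has $2|E_H|-|E_I|$ distinct labels, and both events require all of them to lie within a window of length $2\delta$. I drop the order constraint, which only gives an upper bound. By \cref{lem: algowin}, with $m=2|E_H|-|E_I|$ labels and window $2\delta$,
\[\Prob[\ContDel{\delta}_{S_1}\ContDel{\delta}_{S_2}=1]\leq(2\delta)^{m}\left(1+\frac{m(1-2\delta)}{2\delta}\right).\]
If $2\delta\geq1$ the bound $1$ is trivial and is absorbed. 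The factor $1+\frac{m(1-2\delta)}{2\delta}$ is at most a constant multiple of $1+|E_H|\frac{1-\delta}{\delta}$. Multiplying the two parts gives
\[(2\delta r)^{2|E_H|}(2\delta r)^{-|E_I|}\left(1+|E_H|\tfrac{1-\delta}{\delta}\right),\]
up to constants, and $(2\delta r)^{2|E_H|}=(\sqrt2\,\delta r)^{2|E_H|}2^{|E_H|}$. Summing over $I\subseteq H$ with the pair count above yields the stated form.

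The main obstacle is making the constants match the statement exactly. The stray $2^{|E_H|}$ and the factor $2$ from $m\leq2|E_H|$ have to be absorbed into $(2|V_H|)^{2|V_H|}$, or the window argument has to be sharpened. For the sharper route I would fix the earliest label $t$ of the union, which costs a factor of order $m$. Conditional on $t$, the remaining labels of the two copies must each fall within $[t,t+\delta)$ of their own copy's span. This gives a bound of order $\delta^{m-1}$ rather than $(2\delta)^{m-1}$, and it stays within the stated bound once $2^{|E_H|}\leq(2|V_H|)^{2|V_H|}$ is used, since $|E_H|\leq|V_H|^2$. In the degenerate case where $I$ has no edges, the pair is independent and contributes nothing.

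The discrete statement follows from \cref{obs: bridge 2} by replacing $\delta$ with $\beta/T$. Because $\DiscFull{\beta}{T}_{S}\leq\ContFull{\beta/T}_{S}$ pointwise, the same overlap computation bounds $\Exp[\DiscFull{}{}_{S_1}\DiscFull{}{}_{S_2}]$. Disjoint pairs are again independent, so the identical sum gives the bound with $\delta=\beta/T$, where $1+|E_H|\frac{1-\beta/T}{\beta/T}=1+|E_H|\frac{T-\beta}{\beta}$.
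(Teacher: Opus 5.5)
Your overall plan is the paper's: discard edge-disjoint pairs (zero covariance), bound each remaining covariance by the joint probability, factor off $r^{2|E_H|-|E_I|}$ from label existence, control the timing event by a window argument, and count pairs by overlap type. Your treatment of the discrete case via the pointwise domination $\DiscFull{\beta}{T}_{S,\tau}\leq\ContFull{\beta/T}_{S,\tau}$ of \cref{obs: bridge 2} is also sound.

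The gap is in the timing bound, and it is exactly the one you flag. Forcing all $m=2|E_H|-|E_I|$ labels of $S_1\cup S_2$ into a common window of length $2\delta$ costs $(2\delta)^m$. This exceeds the stated summand by a factor $2^{|E_H|}$. The other factor is harmless, since $1+\frac{m(1-2\delta)}{2\delta}\leq 1+|E_H|\frac{1-\delta}{\delta}$ outright because $m<2|E_H|$. Neither of your repairs removes the $2^{|E_H|}$.

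\begin{itemize}
\item The inequality $2^{|E_H|}\leq(2|V_H|)^{2|V_H|}$ does not follow from $|E_H|\leq|V_H|^2$, and it is false in general. For $H=K_v$ it compares $2^{v(v-1)/2}$ with $2^{2v\log_2(2v)}$, and it fails for every $v\geq 24$. Besides, in your own accounting that factor has already been used to count pairs.
\item The sharper route is misdescribed. Suppose the earliest label $t$ of the union belongs only to $S_1$. The span of $S_2$ may then start at a shared label near $t+\delta$. So the new labels of $S_2$ are confined only to $[t,t+2\delta]$, and you are back to a factor $2$ per label.
\end{itemize}

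The missing idea is to not relax $S_1$ at all. Write the joint probability as $\Prob[\ContDel{\delta}_{S_1,\tmn{S_1}}=1]\cdot\Prob[\ContDel{\delta}_{S_2,\tmn{S_2}}=1\mid\ContDel{\delta}_{S_1,\tmn{S_1}}=1]$. Bound the first factor by $\delta^{|E_H|}(1+|E_H|\frac{1-\delta}{\delta})$ via \cref{corr: cont}.

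For the second factor, the $k=|E_H|-|E_I|$ labels on $E_{S_2}\setminus E_{S_1}$ are independent of every label of $S_1$. Let $a\leq b$ be the smallest and largest labels on the shared edges $E_I$. Because $E_I\neq\emptyset$, the new labels must all land in $[b-\delta,a+\delta]$. This interval has length at most $2\delta$ and is determined by $S_1$'s labels, so the conditional probability is at most $(2\delta)^k$.

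The factor $2$ now falls only on the $k$ new labels. The resulting term $2^{k}\delta^{|E_H|+k}r^{|E_H|+k}=(\sqrt{2}\delta r)^{2|E_H|}(2\delta r)^{-|E_I|}$ is precisely the stated summand. For the paper's only use of the lemma, with $H$ fixed in \cref{corr: Weak-cont-upper}, your weaker bound would still suffice. It does not, however, prove the inequality as stated.
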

\begin{proof}
    We will present only the proof for the continuous case here, however it should be noted that the discrete case is identical replacing only the continuous events for their discrete analogues and $\delta$ with $\frac{\beta}{T}$.
    \[\Var\left[\ContSum{\delta}\right]=\Exp\left[\ContSum{\delta}^2\right]-\Exp\left[\ContSum{\delta}\right]^2\]
    \[=\sum_{S_1,S_2 \in \mathcal{S}}\Prob{\left[\ContFull{\delta}_{S_1,\tmn{S_1}}=1,\ContFull{\delta}_{S_2,\tmn{S_2}}=1\right]}-\Prob{\left[\ContFull{\delta}_{S_1,\tmn{S_1}}=1\right]}\Prob{\left[\ContFull{\delta}_{S_2,\tmn{S_2}}=1\right]}\]
    We only get contribution from terms where $S_1$ and $S_2$ share at least one edge.
    \[=\sum_{\substack{S_1,S_2 \in \mathcal{S}\\
    E_{S_1}\cap E_{S_2}\neq \emptyset}}\Prob{\left[\ContFull{\delta}_{S_1,\tmn{S_1}}=1,\ContFull{\delta}_{S_2,\tmn{S_2}}=1\right]}-\Prob{\left[\ContFull{\delta}_{S_1,\tmn{S_1}}=1\right]}\Prob{\left[\ContFull{\delta}_{S_2,\tmn{S_2}}=1\right]}\]
    Since all terms are non-negative,
    \[\leq\sum_{\substack{S_1,S_2 \in \mathcal{S}\\
    E_{S_1}\cap E_{S_2}\neq \emptyset}}\Prob{\left[\ContFull{\delta}_{S_1,\tmn{S_1}}=1,\ContFull{\delta}_{S_2,\tmn{S_2}}=1\right]}\]
    By independence,
    \[\leq\sum_{\substack{S_1,S_2 \in \mathcal{S}\\
    E_{S_1}\cap E_{S_2}\neq \emptyset}}\Prob{\left[\ContDel{\delta}_{S_1,\tmn{S_1}}=1,\ContDel{\delta}_{S_2,\tmn{S_2}}=1\right]}\Prob{\left[\LabEx_{S_1,\tmn{S_1}}=1,\LabEx_{S_2,\tmn{S_2}}=1\right]}\]
    Since $\psi$ is Bernoulli with expectation $r$,
    \[=\sum_{\substack{S_1,S_2 \in \mathcal{S}\\
    E_{S_1}\cap E_{S_2}\neq \emptyset}}\Prob{\left[\ContDel{\delta}_{S_1,\tmn{S_1}}=1,\ContDel{\delta}_{S_2,\tmn{S_2}}=1\right]}r^{|E_{S_1}\cup E_{S_2}|}\]
    We now need the following result,
    \begin{claim}
        For $k=|E_{S_2}\setminus E_{S_1}|$,
        \[\Prob{\left[\ContDel{\delta}_{S_1,\tmn{S_1}}=1,\ContDel{\delta}_{S_2,\tmn{S_2}}=1\right]}\leq 2^k\delta^{|E_H|+k}\left(1+|E_H|\frac{(1-\delta)}{\delta}\right)\]
    \end{claim}
    \begin{proof}
        \[\Prob{\left[\ContDel{\delta}_{S_1,\tmn{S_1}}=1,\ContDel{\delta}_{S_2,\tmn{S_2}}=1\right]}=\Prob{\left[\ContDel{\delta}_{S_2,\tmn{S_2}}=1|\ContDel{\delta}_{S_1,\tmn{S_1}}=1\right]}\Prob{\left[\ContDel{\delta}_{S_1,\tmn{S_1}}=1\right]}\]
        Applying, \cref{corr: cont} we have that,
        \[\leq\Prob{\left[\ContDel{\delta}_{S_2,\tmn{S_2}}=1|\ContDel{\delta}_{S_1,\tmn{S_1}}=1\right]}\delta^{|E_H|}\left(1+|E_H|\frac{(1-\delta)}{\delta}\right)\]
        Now, for $I$ the interval of length $2\delta$ centered on the minimal interval containing all labels shared between $\tmn{S_1}$ and $\tau$, we define the random variable $X$ to be the indicator random variable for the event that all remaining labels of $\tau$ fall within $I$.
        By construction $\{\ContDel{\delta}_{S_2,\tau}=1|\ContDel{\delta}_{S_1,\tmn{S_1}}=1\}\subseteq \{X=1|\ContDel{\delta}_{S_1,\tmn{S_1}}=1\}$ and $\Prob{[X=1|\ContDel{\delta}_{S_1,\tmn{S_1}}=1]}\leq (2\delta)^k$.
         Thus,
         \[\Prob{[\ContDel{\delta}_{S_1,\tmn{S_1}}=1,\ContDel{\delta}_{S_2,\tau}=1]}\leq (2\delta)^k\delta^{|E_H|}\left(1+\frac{|E_H|(1-\delta)}{\delta}\right)\]
         The claim follows.
    \end{proof}
    
    Returning to our main calculation and partitioning the sum over the isomorphism class of the intersection of the two motifs,
    \[=\sum_{I\subseteq H}\sum_{\substack{S_1,S_2 \in \mathcal{S}\\
    S_1\cap S_2\simeq I}}2^{|E_H|-|E_I|}\delta^{2|E_H|-|E_I|}r^{2|E_H|-|E_I|}\left(1+|E_H|\frac{(1-\delta)}{\delta}\right)\]
    \[=(\sqrt{2}\delta r)^{2|E_H|}\left(1+|E_H|\frac{(1-\delta)}{\delta}\right)\sum_{I\subseteq H}(2 \delta r)^{-|E_I|}|\{S_1,S_2 \in \mathcal{S}:S_1\cap S_2\simeq I\}|\]
    The final term is trivially upper bounded by $\binom{n}{2|V_H|-|V_I|}(2|V_H|)^{2|V_H|}\leq n^{2|V_H|-|V_I|}(2|V_H|)^{2|V_H|}$.
    \[=(\sqrt{2}\delta r)^{2|E_H|}\left(1+|E_H|\frac{(1-\delta)}{\delta}\right)n^{2|V_H|}(2|V_H|)^{2|V_H|}\sum_{I\subseteq H}(2 \delta r)^{-|E_I|}n^{-|V_I|}\]
    Which gives our claim.
\end{proof}

With that established, we can present the following more general result, in which we have to be much more careful to control the dependence caused by the label multiplicity distribution. As a consequence, the following lemma is restricted to the case where $\psi$ is almost surely bounded.
\begin{restatable}{lemma}{LemVarianceFull}
    \label{lem: variance-full}
    Let $\mathcal{G}=(V,E,\lambda)$ be a sample from $\Gamma_{[n]}(\psi)$, where $\psi$ is almost surely bounded, then for any $0<\delta<1$,
    \[\Var\left[\ContSum{\delta}\right]\leq \delta^{2|E_H|}r^{2|E_H|}\left(1+\frac{|E_H|(1-\delta)}{\delta}\right)^2\sum_{S_1,S_2 \in \mathcal{S}: E_{S_1}\cap E_{S_2}\neq \emptyset}\left(\frac{\delta r_2+r-\delta r}{\delta r^2}\right)^{|E_{S_1}\cap E_{S_2}|}.\]
\end{restatable}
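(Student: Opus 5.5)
We start from $\Var[\ContSum{\delta}]=\sum_{(S_1,\tau_1),(S_2,\tau_2)}\bigl(\Exp[\ContFull{\delta}_{S_1,\tau_1}\ContFull{\delta}_{S_2,\tau_2}]-\Exp[\ContFull{\delta}_{S_1,\tau_1}]\Exp[\ContFull{\delta}_{S_2,\tau_2}]\bigr)$. If $E_{S_1}\cap E_{S_2}=\emptyset$, the label counts and label values involved are independent, so these terms vanish. For the remaining pairs we drop the negative part and bound $\sum_{\tau_1,\tau_2}\Exp[\ContFull{\delta}_{S_1,\tau_1}\ContFull{\delta}_{S_2,\tau_2}]$ separately for each fixed pair $(S_1,S_2)$ with $J=E_{S_1}\cap E_{S_2}\neq\emptyset$. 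Since $\psi$ is bounded, every $\mathcal{T}_S$ sum effectively ranges over finitely many $\tau$. This is where boundedness lets us exchange sums and products freely.

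For fixed $(\tau_1,\tau_2)$ we factor the expectation using independence of the counts $l_e$ from the label values:
\[\Exp[\ContFull{\delta}_{S_1,\tau_1}\ContFull{\delta}_{S_2,\tau_2}]=\Exp[\LabEx_{S_1,\tau_1}\LabEx_{S_2,\tau_2}]\,\Prob[\ContDel{\delta}_{S_1,\tau_1}=\ContDel{\delta}_{S_2,\tau_2}=1].\]
The first factor is $\prod_{e\in E_{S_1}\triangle E_{S_2}}\Prob[l_e\ge\tau(e)]\prod_{e\in J}\Prob[l_e\ge\max(\tau_1(e),\tau_2(e))]$.

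For the second factor, let $K=\{e\in J:\tau_1(e)=\tau_2(e)\}$ be the set of genuinely shared labels. If $K=\emptyset$, the two windows involve disjoint independent labels, so by \cref{corr: cont} the probability is at most $\bigl(\delta^{|E_H|}(1+|E_H|(1-\delta)/\delta)\bigr)^2$. Here we drop the order factor $R(P)/|E_H|!\le 1$, which the target bound also ignores. If $K\neq\emptyset$, we reuse the reasoning of the claim in \cref{lem: variance}, but with a tighter accounting. Condition on $\ContDel{\delta}_{S_1,\tau_1}=1$. The $|E_H|-|K|$ unshared labels of $\tau_2$ must then fall in a window of length at most $\delta$ that contains the shared labels. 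We aim for a bound of the form
\[\Prob[\,\cdot\,]\le \delta^{2|E_H|-|K|}\Bigl(1+\tfrac{|E_H|(1-\delta)}{\delta}\Bigr)^2.\]
This follows if the conditional probability is at most $\delta^{|E_H|-|K|}(1+|E_H|(1-\delta)/\delta)$, which is the \cref{lem: algowin}-type count: choose which label is leftmost, then place the rest within $\delta$ of it. This is the step I expect to be delicate, because the shared labels pin the window only partially. If the constant does not come out exactly, I would instead accept a factor of $2^{|E_H|-|K|}$ as in \cref{lem: variance} and check whether it can be absorbed. The squared factor $(1+|E_H|(1-\delta)/\delta)^2$ in the target suggests the intended argument simply pays this factor once more in the shared case, so the crude route should suffice.

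Combining the two factors, $\delta^{2|E_H|}(1+\cdots)^2$ comes out front and each edge contributes a weight. An edge in $E_{S_1}\triangle E_{S_2}$ contributes $\sum_\tau\Prob[l_e\ge\tau]=r$. An edge $e\in J$ contributes
\[\sum_{t_1\ne t_2}\Prob[l_e\ge\max(t_1,t_2)]+\delta^{-1}\sum_t\Prob[l_e\ge t].\]
The second part equals $r/\delta$. The first part equals $\Exp[l_e(l_e-1)]=r_2-r$, since $\sum_{t_1,t_2}\Prob[l\ge\max(t_1,t_2)]=\Exp[l^2]$. So the per-edge weight on $J$ is $r_2-r+r/\delta=(\delta r_2+r-\delta r)/\delta$. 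The sum over $(\tau_1,\tau_2)$ factorises over edges as a product of these finite sums, giving
\[\delta^{2|E_H|}\Bigl(1+\tfrac{|E_H|(1-\delta)}{\delta}\Bigr)^2 r^{2|E_H|-2|J|}\Bigl(\tfrac{\delta r_2+r-\delta r}{\delta}\Bigr)^{|J|}.\]
Writing $r^{2|E_H|-2|J|}=r^{2|E_H|}r^{-2|J|}$ turns this into exactly the summand $r^{2|E_H|}\bigl((\delta r_2+r-\delta r)/(\delta r^2)\bigr)^{|J|}$. Summing over $(S_1,S_2)$ with $J\ne\emptyset$ then yields the lemma. The main obstacle is the shared-label window bound. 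The factorisation over edges is legitimate only because the window bound depends on $\tau_1,\tau_2$ solely through the count $|K|$, which is the property to verify carefully.
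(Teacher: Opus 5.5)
Your outline matches the paper's proof. It has the same reduction to pairs with $E_{S_1}\cap E_{S_2}\neq\emptyset$, the same per-pair window bound $\delta^{2|E_H|-|K|}\bigl(1+\frac{|E_H|(1-\delta)}{\delta}\bigr)^2$, and the same per-edge weights: $r$ on $E_{S_1}\triangle E_{S_2}$ and $r_2-r+r/\delta$ on $J$.

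The only difference is bookkeeping. The paper conditions on $\tmx{S_1\cup S_2}=\tau^*$ and uses symmetry to replace $\tau_1$ by $\tmn{S_1}$. It then counts the $\tau_2\le\tau^*$ by their number of shared labels and averages $\tau^*(e)\bigl(\delta^{-1}+\tau^*(e)-1\bigr)$ against $\psi$. You instead factorise $\delta^{-|K|}=\prod_{e\in J}\delta^{-\mathbbm{1}[\tau_1(e)=\tau_2(e)]}$ directly and use $\sum_{t_1,t_2\geq 1}\Prob[l_e\geq\max(t_1,t_2)]=\Exp[l_e^2]$. This is the same sum taken in the other order, and arguably cleaner.

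The step you call delicate is easy once you discard the shared labels altogether. The $m=|E_H|-|K|$ labels of $\tau_2$ not used by $\tau_1$ are independent of everything $\ContDel{\delta}_{S_1,\tau_1}$ depends on. On $\{\ContDel{\delta}_{S_2,\tau_2}=1\}$ they must lie within $\delta$ of one another. So \cref{lem: algowin}, applied to these $m$ labels (trivially for $m\leq 1$), bounds the conditional probability by $\delta^m\bigl(1+\frac{m(1-\delta)}{\delta}\bigr)\leq\delta^m\bigl(1+\frac{|E_H|(1-\delta)}{\delta}\bigr)$. This is exactly the paper's claim, and it also covers $K=\emptyset$, so no case split is needed.

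Do not fall back on the $2^{|E_H|-|K|}$ route. The stated inequality has no slack to absorb it: $(2\delta)^m\leq\delta^m\bigl(1+\frac{|E_H|(1-\delta)}{\delta}\bigr)$ fails as soon as $2^m>1+|E_H|(1-\delta)/\delta$. That happens for large $m$ at fixed $\delta$, which is precisely the regime of \cref{lem: cliques bounded}. There the extra factor would be of order $2^{\binom{k}{2}}$ and would destroy the second moment argument.
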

\begin{proof}
    Denote $\mathcal{L}=\{(S,\tau):S \in \mathcal{S},\tau \in \mathcal{T}_S\}$ and consider the variance of $\ContSum{\delta}$,
    \[\Var[\ContSum{\delta}]=\Exp[\ContSum{\delta}^2]-\Exp[\ContSum{\delta}]^2\]
    \[=\sum_{(S_1,\tau_1),(S_2,\tau_2) \in \mathcal{L}} \Prob{[\ContFull{\delta}_{S_1,\tau_1}=1,\ContFull{\delta}_{S_2,\tau_2}=1]}-\Prob{[\ContFull{\delta}_{S_1,\tau_1}=1]}\Prob[\ContFull{\delta}_{S_2,\tau_2}=1]\]
    We only get contribution from terms where $\ContFull{\delta}_{S_1,\tau_1}$ and $\ContFull{\delta}_{S_2,\tau_2}$ are dependent. 
    In particular, where $S_1$ and $S_2$ share edges.
    \[=\sum_{S_1,S_2 \in \mathcal{S}: E_{S_1}\cap E_{S_2} \neq \emptyset}\sum_{\tau_1 \in \mathcal{T}_{S_1},\tau_2 \in \mathcal{T}_{S_2}}\Prob{[\ContFull{\delta}_{S_1,\tau_1}=1,\ContFull{\delta}_{S_2,\tau_2}=1]}-\Prob{[\ContFull{\delta}_{S_1,\tau_1}=1]\Prob{[\ContFull{\delta}_{S_2,\tau_2}=1]}}\]
    Since all terms are non-negative,
    \[\leq\sum_{S_1,S_2 \in \mathcal{S}: E_{S_1}\cap E_{S_2} \neq \emptyset}\sum_{\tau_1 \in \mathcal{T}_{S_1},\tau_2 \in \mathcal{T}_{S_2}}\Prob{[\ContFull{\delta}_{S_1,\tau_1}=1,\ContFull{\delta}_{S_2,\tau_2}=1]}\]
    Unfortunately, we have a high degree of dependence between possible occurrences on the same footprint even if they share no labels in common as the existence of these labels affects the probability of the other occurring.
    By the law of total probability, and combining the sums,
    \[= \sum_{S_1,S_2 \in \mathcal{S}: E_{S_1}\cap E_{S_2} \neq \emptyset}\sum_{\substack{\tau_1 \in \mathcal{T}_{S_1},\tau_2 \in \mathcal{T}_{S_2}\\ \tau^* \in \mathcal{T}_{S_1\cup S_2}}}\Prob{[\ContFull{\delta}_{S_1,\tau_1}=1,\ContFull{\delta}_{S_2,\tau_2}=1|\tmx{S_1\cup S_2}=\tau^*]}\Prob{[\tmx{S_1\cup S_2}=\tau^*]}\]
    By independence,
    \[=\sum_{S_1,S_2 \in \mathcal{S}: E_{S_1}\cap E_{S_2} \neq \emptyset}\sum_{\substack{\tau_1 \in \mathcal{T}_{S_1},\tau_2 \in \mathcal{T}_{S_2}\\ \tau^* \in \mathcal{T}_{S_1\cup S_2}}}\Prob{[\ContDel{\delta}_{S_1,\tau_1}=1,\ContDel{\delta}_{S_2,\tau_2}=1]}\Prob{[\tmx{S_1\cup S_2}=\tau^*]} \mathbbm{1}_{\tau_{S_1}\leq \tau^*}\mathbbm{1}_{\tau_{S_2}\leq \tau^*}\]
    By commutativity and distributivity,
    \[=\sum_{\substack{S_1,S_2 \in \mathcal{S}: E_{S_1}\cap E_{S_2} \neq \emptyset\\ \tau^* \in \mathcal{T}_{S_1\cup S_2}}}\Prob{[\tmx{S_1\cup S_2}=\tau^*]}\sum_{\tau_1 \in \mathcal{T}_{S_1},\tau_2 \in \mathcal{T}_{S_2}: \tau_1,\tau_2 \leq\tau^*}\Prob{[\ContDel{\delta}_{S_1,\tau_1}=1,\ContDel{\delta}_{S_2,\tau_2}=1]}\]
    We need an upper bound on the final term.
    Note that $\Prob{[\ContDel{\delta}_{S_1,\tau_1}=1,\ContDel{\delta}_{S_2,\tau_2}=1]}$ depends only on the edges shared between $\tau_1$ and $\tau_2$.
    Thus, by symmetry we obtain the following,
    \[=\sum_{\substack{S_1,S_2 \in \mathcal{S}: E_{S_1}\cap E_{S_2} \neq \emptyset\\ \tau^* \in \mathcal{T}_{S_1\cup S_2}}}\Prob{[\tmx{S_1\cup S_2}=\tau^*]}|\{\tau \in \mathcal{T}_{S_1}:\tau \leq \tau^*\}|\sum_{\tau \in \mathcal{T}_{S_2}: \tau \leq\tau^*}\Prob{[\ContDel{\delta}_{S_1,\tmn{S_1}}=1,\ContDel{\delta}_{S_2,\tau}=1]}\]
    We will need the following upper bound on the final summand,
    \begin{claim}\label{claim-1}
        For $k=|\{e \in E_{S_2}:\tau(e)\neq1\}|$,
        \[\Prob{[\ContDel{\delta}_{S_1,\tmn{S_1}}=1,\ContDel{\delta}_{S_2,\tau}=1]}\leq \delta^{|E_H|+k}\left(1+\frac{|E_H|(1-\delta)}{\delta}\right)^2\]
    \end{claim}
    \begin{proof}
        From \cref{corr: cont},
        \[\Prob{[\ContDel{\delta}_{S_1,\tmn{S_1}}=1,\ContDel{\delta}_{S_2,\tau}=1]}\leq \delta^{|E_H|}\left(1+\frac{|E_H|(1-\delta)}{\delta}\right)\Prob{[\ContDel{\delta}_{S_2,\tau}=1|\ContDel{\delta}_{S_1,\tmn{S_1}}=1]}\]
        We consider a relaxation of the event $\{\ContDel{\delta}_{S_2,\tau}=1|\ContDel{\delta}_{S_1,\tmn{S_1}}=1\}$. 
        Let $Y$ be the indicator random variable for the event that all remaining labels of $\tau$ fall within $\delta$ of each other. It follows from \cref{lem: algowin}, that $\Prob{[Y|\ContDel{\delta}_{S_1,\tmn{S_1}}=1]}\leq \delta^{k}\left(1+\frac{|E_H|(1-\delta)}{\delta}\right)$.
        Thus,
        \[\Prob{[\ContDel{\delta}_{S_1,\tmn{S_1}}=1,\ContDel{\delta}_{S_2,\tau}=1]}\leq \delta^{|E_H|+k}\left(1+\frac{|E_H|(1-\delta)}{\delta}\right)^2\]
        Which gives the claim.
    \end{proof}
    
    Applying \cref{claim-1} we find that,
    \[\sum_{\tau \in \mathcal{T}_{S_2}: \tau \leq\tau^*}\Prob{[\ContDel{\delta}_{S_1,\tmn{S_1}}=1,\ContDel{\delta}_{S_2,\tau}=1]}\leq \sum_{i=0}^{|E_{S_1}\cap E_{S_2}|}\sum_{\substack{\tau \in \mathcal{T}_{S_2}: \tau \leq\tau^* \\
     |\{e \in E_{S_1\cap S_2}: \tau(e)=1\}|=i}}\delta^{2|E_H|-i}\left(1+\frac{|E_H|(1-\delta)}{\delta}\right)^2\]
     \[=\delta^{2|E_H|}\left(1+\frac{|E_H|(1-\delta)}{\delta}\right)^2 \sum_{i=0}^{|E_{S_1}\cap E_{S_2}|}\sum_{\substack{\tau \in \mathcal{T}_{S_2}: \tau \leq\tau^* \\
     |\{e \in E_{S_1\cap S_2}: \tau(e)=1\}|=i}}\delta^{-i}\]
     \[=\delta^{2|E_H|}\left(1+\frac{|E_H|(1-\delta)}{\delta}\right)^2\left(\prod_{e \in E_{S_1}\cap E_{S_2}}\left(\delta^{-1}+\tau^*(e)-1\right)\right)\left(\prod_{e \in E_{S_2} \setminus E_{S_1}} \tau^*(e)\right)\]
     Returning to our main calculation,
    \begin{align*}
        \Var[\ContSum{\delta}]\leq& \delta^{2|E_H|}\left(1+\frac{|E_H|(1-\delta)}{\delta}\right)^2\sum_{\substack{S_1,S_2 \in \mathcal{S}: E_{S_1}\cap E_{S_2} \neq \emptyset\\ \tau^* \in \mathcal{T}_{S_1\cup S_2}}}\left[\begin{aligned}&\Prob{[\tmx{S_1\cup S_2}=\tau^*]}|\{\tau \in \mathcal{T}_{S_1}:\tau \leq \tau^*\}|\\&\cdot\left(\prod_{e \in E_{S_1\cap S_2}}\left(\delta^{-1}+\tau^*(e)-1\right)\right)\\&\cdot\left(\prod_{e \in E_{S_2 \setminus S_1}} \tau^*(e)\right)\end{aligned}\right]\\
        =&\delta^{2|E_H|}\left(1+\frac{|E_H|(1-\delta)}{\delta}\right)^2\sum_{\substack{S_1,S_2 \in \mathcal{S}: E_{S_1}\cap E_{S_2} \neq \emptyset\\ \tau^* \in \mathcal{T}_{S_1\cup S_2}}}\left[\begin{aligned}&\Prob{[\tmx{S_1\cup S_2}=\tau^*]}\left(\prod_{e \in E_{S_1\setminus S_2}}\tau^*(e)\right)\\&\cdot\left(\prod_{e \in E_{S_2 \setminus S_1}} \tau^*(e)\right)\\ &\cdot\left(\prod_{e \in E_{S_1\cap S_2}}\tau^*(e)\left(\delta^{-1}+\tau^*(e)-1\right)\right)\end{aligned}\right]
    \end{align*}
    Now by independence, for $Z \sim \psi$,
    \begin{align*}
        =\delta^{2|E_H|}\left(1+\frac{|E_H|(1-\delta)}{\delta}\right)^2\sum_{\substack{S_1,S_2 \in \mathcal{S}: E_{S_1}\cap E_{S_2} \neq \emptyset\\ \tau^* \in \mathcal{T}_{S_1\cup S_2}}}
        \left[\begin{aligned}&\left(\prod_{e \in E_{S_1\setminus S_2}}\tau^*(e)\Prob{[Z=\tau^*(e)]}\right)\\
        \cdot&\left(\prod_{e \in E_{S_1\cap S_2}}\tau^*(e)\left(\delta^{-1}+\tau^*(e)-1\right)\Prob{[Z=\tau^*(e)]}\right)\\
        \cdot&\left(\prod_{e \in E_{S_2 \setminus S_1}} \tau^*(e)\Prob{[Z=\tau^*(e)]}\right)\end{aligned}\right]
    \end{align*}
    From the definition of $\tau^*$ and distributivity,
    \begin{align*}
        =&\delta^{2|E_H|}\left(1+\frac{|E_H|(1-\delta)}{\delta}\right)^2\sum_{S_1,S_2 \in \mathcal{S}: E_{S_1}\cap E_{S_2} \neq \emptyset}\left[\begin{aligned}&\left(\prod_{e \in E_{S_1\setminus S_2}}\sum_{z\geq 0}z\Prob{[Z=z]}\right)\left(\prod_{e \in E_{S_2 \setminus S_1}} \sum_{z\geq 0} z\Prob{[Z=z]}\right)\\\cdot&\left(\prod_{e \in E_{S_1\cap S_2}}\sum_{z\geq 0}z\left(\delta^{-1}+z-1\right)\Prob{[Z=z]}\right) \end{aligned}\right]\\
        =&\delta^{2|E_H|}\left(1+\frac{|E_H|(1-\delta)}{\delta}\right)^2\sum_{S_1,S_2 \in \mathcal{S}: E_{S_1}\cap E_{S_2} \neq \emptyset}r^{2(|E_H|-|E_{S_1}\cap E_{S_2}|)}\left(\frac{\delta r_2+r-\delta r}{\delta}\right)^{|E_{S_1}\cap E_{S_2}|}\\
        =&\delta^{2|E_H|}r^{2|E_H|}\left(1+\frac{|E_H|(1-\delta)}{\delta}\right)^2\sum_{S_1,S_2 \in \mathcal{S}: E_{S_1}\cap E_{S_2}\neq \emptyset}
        \left(\frac{\delta r_2+r-\delta r}{\delta r^2}\right)^{|E_{S_1}\cap E_{S_2}|}
    \end{align*}
\end{proof}
\subsection{Fixed Motifs}
We can now turn our attention to the existence thresholds for fixed motifs specifically. The upper bound for the continuous case is an immediate consequence of \cref{lem: motif-upper}, as the upper bound it provides tends to $0$ whenever $H$ and $\psi$ are fixed, but $\delta=o(n^{-\rho_H})$. One should note here that while the \distribution{} and the partial ordering over $E_H$ affect the absolute upper bound on the probability of a given motif appearing, they do not affect its asymptote and thus do not appear in the threshold.
\begin{restatable}{lemma}{LemContLower}
    \label{corr: CONT-LOWER}
    Let $H=(V_H,E_H)$ be a fixed non-trivial static graph, $P=(E_H,\prec)$ a partial order on its edges, and $\mathcal{G}=(V,E,\lambda)$ a random sample from $\Gamma_{[n]}(\psi)$, then, for $\delta=o(n^{-\rho_H})$, $\mathcal{G}$ does not contain $(H,P)$ as a $\delta$-temporal motif with high probability.
\end{restatable}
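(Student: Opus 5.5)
The plan is to reduce to a subgraph of $H$ that attains the sparsity, and then apply \cref{lem: motif-upper} to that subgraph. Let $I\sqsubseteq H$ be a non-trivial subgraph with $\frac{|V_I|}{|E_I|-1}=\rho_H$, and let $P_I$ be the restriction of $P$ to $E_I$. Any occurrence of $(H,P)$ as a $\delta$-temporal motif in $\mathcal{G}$ contains an occurrence of $(I,P_I)$ as a $\delta$-temporal motif: restrict the mapping $\Phi$ and the simple temporal subgraph to the image of $I$. The order constraints and the duration constraint are inherited, since the life duration can only decrease when edges are removed. It therefore suffices to show that $(I,P_I)$ is absent with high probability.

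Apply \cref{lem: motif-upper} to $(I,P_I)$, using $\#\text{Aut}(I,P_I)\geq 1$, $\binom{n}{|V_I|}|V_I|!\leq n^{|V_I|}$, and $(1-\delta)q+\delta r\geq (1-\delta)q$. Since $\delta\to 0$, $q>0$ and $r<\infty$ are constants, for large $n$ this gives
\[\Prob[\text{$(I,P_I)$ occurs}]\leq n^{|V_I|}\left(\frac{\delta r}{1-\delta}\right)^{|E_I|}\left(1+\frac{|E_I|(1-\delta)}{\delta}\right)\leq C\, n^{|V_I|}\delta^{|E_I|-1},\]
where $C$ depends only on $I$, $r$ and $q$. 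Here we use that $1+|E_I|(1-\delta)/\delta\leq (|E_I|+1)/\delta$.

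The key step is the exponent bookkeeping, and this is where the choice of sparsity (rather than density) matters. Write $\delta=\epsilon(n)\,n^{-\rho_H}$ with $\epsilon(n)\to 0$. Then $n^{|V_I|}\delta^{|E_I|-1}=\epsilon^{|E_I|-1}n^{|V_I|-\rho_H(|E_I|-1)}=\epsilon^{|E_I|-1}$, because $\rho_H(|E_I|-1)=|V_I|$ by the choice of $I$. Since $I$ is non-trivial, $|E_I|-1\geq 1$, so this quantity tends to $0$.

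The only subtle point is that the $\delta^{-1}$ factor in \cref{lem: motif-upper}, which comes from the many possible windows, must be absorbed exactly once. This is what produces $|E_I|-1$ in the exponent. It also explains why one should not simply apply the bound to $H$ itself: $H$ need not minimise $\frac{|V|}{|E|-1}$, and in that case the bound for $H$ would not vanish. If $\psi$ has $q=0$ the statement is trivial, and the paper's assumption of a positive second moment excludes that case anyway. The partial order and the label multiplicity distribution only enter through the constant $C$, which matches the remark preceding the lemma.
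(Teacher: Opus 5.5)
Your proposal is correct and follows essentially the same route as the paper: you pass to a sparsity-attaining subgraph $I\sqsubseteq H$ with the restricted order $P_I$, apply \cref{lem: motif-upper} to get a bound of order $n^{|V_I|}\delta^{|E_I|-1}$, and use $\rho_H(|E_I|-1)=|V_I|$ to conclude. The only cosmetic difference is that the paper first derives the $O(n^{|V_H|}\delta^{|E_H|-1})$ bound for $H$ itself and then specialises to $I$, whereas you apply the lemma to $I$ directly with an explicit constant.
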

\begin{proof}
    From \cref{lem: motif-upper},
    \[\Prob{\left[\ContSum{\delta}>0\right]}\leq \binom{n}{|V_H|}\frac{|V_{H}|!}{\# \text{Aut}(H,P)}\left(\frac{\delta rq}{(1-\delta)q+\delta r}\right)^{|E_H|}\left(1+\frac{|E_H|(1-\delta)}{\delta}\right)\]
    \[\leq \binom{n}{|V_H|}|V_H|!\Theta(\delta^{|E_H|-1})=O(n^{|V_H|}\delta^{|E_H|-1})\]
    This is $o(1)$ whenever $\delta(n)=o(n^{-\frac{|V_H|}{|E_H|-1}})$.
    Now consider $I \sqsubseteq H$ minimizing $\frac{|V_I|}{|E_I|-1}$ and $P_I$ the restriction of $P$ to $I$.
    Since $(I,P_I)$ does not appear in $\mathcal{G}$ as a $\delta$-temporal motif whenever $\delta(n)=o(n^{-\rho_H})$ and $(I,P_I)$ must appear for $(H,P_H)$ to appear, we obtain the result.
\end{proof}
Now, while we cannot apply \cref{lem: motif-upper} to the discrete case directly, we are able to exploit the relationship between the two models in order to obtain an analogous result. In particular, we know from \cref{obs: bridge 2} that an occurrence of a $\beta$-temporal motif in the $T$ discretization of $\mathcal{G}$, sampled from $\Gamma_{[n]}(\psi)$, is at most as probable as the occurrence of a $\frac{\beta}{T}$-temporal motif in $\mathcal{G}$. We therefore obtain the following, as an immediate corollary of \cref{corr: CONT-LOWER}.
\begin{restatable}{corollary}{CorrDiscLower}
    \label{corr: DISC-LOWER}
    Let $H=(V_H,E_H)$ be a fixed non-trivial static graph, $P=(E_H,\prec)$ a partial order on its edges, $\beta: \mathbb{N}\to \mathbb{N}$, $T:\mathbb{N}\to \mathbb{N}$ and $\mathcal{H}=(V,E,\lambda)$ a random sample from $\Gamma_{[n]}(\psi,T)$. Then for $\frac{\beta(n)}{T(n)}=o(n^{-\rho_H})$, $\mathcal{H}$ does not contain $(H,P)$ as a $\beta$-temporal motif with high probability.
\end{restatable}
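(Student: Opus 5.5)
The proof reduces the discrete statement to \cref{corr: CONT-LOWER} through the coupling between the two models. By \cref{obs: bridge 1}, a sample $\mathcal{H}$ from $\Gamma_{[n]}(\psi,T(n))$ can be realised as the $T(n)$-discretization of a sample $\mathcal{G}$ from $\Gamma_{[n]}(\psi)$. We therefore work on this common probability space and compare the events ``$(H,P)$ is a $\beta(n)$-temporal motif of $\mathcal{H}$'' and ``$(H,P)$ is a $\frac{\beta(n)}{T(n)}$-temporal motif of $\mathcal{G}$''.

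The key step is \cref{obs: bridge 2}. For every candidate $(S,\tau)$ we have $\{\DiscFull{\beta}{T}_{S,\tau}=1\}\subseteq\{\ContFull{\frac{\beta}{T}}_{S,\tau}=1\}$, and hence $\DiscSum{\beta}{T}\leq \ContSum{\frac{\beta}{T}}$ with certainty. Two points need care here.
\begin{itemize}
\item \textbf{Window.} A window of $\beta$ consecutive discrete snapshots corresponds to a continuous window of length at most $\beta/T$. The strict inequality in the discrete definition, $\max\lambda'(e_1)-\lambda'(e_2)<\beta$, leaves room for the boundary: differences of at most $\beta-1$ snapshots mean continuous labels lie within $\beta/T$.
\item \textbf{Order.} Because $P$ is strict, discrete labels with $\lceil Tl_1\rceil<\lceil Tl_2\rceil$ force $l_1<l_2$, so the order is preserved by the continuous labels.
\end{itemize}
If these details are handled as in the paper's observation (I would take it as given), then
\[\Prob{[\DiscSum{\beta}{T}>0]}\leq \Prob{[\ContSum{\frac{\beta}{T}}>0]}.\]

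Next, apply \cref{corr: CONT-LOWER} with $\delta(n)=\frac{\beta(n)}{T(n)}$. This is a legitimate choice of $\delta$: it is a function of $n$ lying in $(0,1]$ since $T\geq\beta\geq 1$, and by hypothesis $\delta(n)=o(n^{-\rho_H})$. So the right-hand side is $o(1)$, and the discrete probability is $o(1)$ as well.

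One subtlety is that \cref{corr: CONT-LOWER} is stated for the full motif $H$, but its proof passes to the minimiser $I\sqsubseteq H$ of $\frac{|V_I|}{|E_I|-1}$. This causes no problem because the reduction above is applied to $H$ directly and then the continuous result is invoked as a black box. The main (mild) obstacle is the boundary case $T=\beta$, where \cref{obs: bridge 2} assumes $T>\beta$. This case cannot occur for large $n$, since $\beta/T=o(n^{-\rho_H})\to 0$ forces $T>\beta$ eventually, so it does not affect a with-high-probability statement.
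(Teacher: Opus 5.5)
Your proposal is correct and follows the paper's own route. You realize $\mathcal{H}$ as the $T$-discretization of a continuous sample via \cref{obs: bridge 1}, use the event inclusion of \cref{obs: bridge 2} to bound the discrete probability by the continuous one with $\delta=\beta/T$, and invoke \cref{corr: CONT-LOWER} as a black box. Your added checks are correct details that the paper leaves implicit: the window boundary gives a continuous spread strictly below $\beta/T$, a strict $P$ is preserved under $\lceil T\,\cdot\,\rceil$, and $T>\beta$ holds eventually because $\beta/T\to 0$.
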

Together these provide the lower bounds on the existence threshold presented in \cref{thm: Continuous Existence} and \cref{thm: Discrete Existence} respectively. The upper bounds are slightly more complex to obtain. Our basic strategy is an application of the second moment method using our bounds on the variance obtained in \cref{lem: variance}. Unfortunately, due to the obstacles brought about by dependence discussed previously, \cref{lem: variance} only holds when $\psi$ is a Bernoulli random variable. The naive application of the second moment, then, yields the following pair of results (where we defer the near identical proof of the discrete version) which are both considerably weaker than the more general statements we require for \cref{thm: Continuous Existence} and \cref{thm: Discrete Existence}.
\begin{restatable}{lemma}{LemWeakContUpper}
    \label{corr: Weak-cont-upper}
    Let $H=(V_H,E_H)$ be a fixed non-trivial static graph, $P=(E_H,\prec)$ a partial order on its edges, and $\mathcal{G}=(V,E,\lambda)$ a random sample from $\Gamma_{[n]}(\psi)$, such that $\psi$ is a Bernoulli distribution, then for $\delta=\omega(n^{-\rho_H})$, $\mathcal{G}$ contains $(H,P)$ as a $\delta$-temporal motif with high probability.
\end{restatable}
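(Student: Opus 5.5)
We apply the second moment method to $\ContSum{\delta}$ via Chebyshev's inequality:
\[\Prob[\ContSum{\delta}=0]\leq \frac{\Var[\ContSum{\delta}]}{\Exp[\ContSum{\delta}]^2}.\]
The goal is to show that this ratio is $o(1)$ whenever $\delta=\omega(n^{-\rho_H})$. We may assume $\delta\to 0$, and in fact $\delta\le 1/2$ say. Indeed, for larger $\delta$ the event is monotone in $\delta$: a $\delta'$-motif is a $\delta$-motif for $\delta'\le\delta$. So we may replace $\delta$ by $\min(\delta,\delta')$ for some $\delta'=\omega(n^{-\rho_H})$ that tends to $0$, which is still $\omega(n^{-\rho_H})$. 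Since $\psi$ is Bernoulli we have $r=q\in(0,1]$, a fixed constant, and $\mathcal{T}_S$ effectively contains only $\tmn{S}$.

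\textbf{Denominator.} By \cref{lem: expectation}, using $R(P)/|E_H|!\geq |E_H|^{-|E_H|}$ (\cref{lem: Poset-Extension}) and $\binom{n}{|V_H|}|V_H|!=\Theta(n^{|V_H|})$, we get
\[\Exp[\ContSum{\delta}]=\Theta\!\left(\delta^{|E_H|}\Bigl(1+\tfrac{|E_H|(1-\delta)}{\delta}\Bigr)n^{|V_H|}\right).\]
With $L:=1+|E_H|(1-\delta)/\delta=\Theta(\delta^{-1})$, this gives $\Exp[\ContSum{\delta}]^2=\Theta(\delta^{2|E_H|}L^2n^{2|V_H|})$, with constants depending only on $H$, $P$ and $r$.

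\textbf{Numerator.} By \cref{lem: variance},
\[\Var[\ContSum{\delta}]=O\!\left(\delta^{2|E_H|}L\,n^{2|V_H|}\sum_{I\subseteq H}\delta^{-|E_I|}n^{-|V_I|}\right).\]
Note that the sum in \cref{lem: variance} effectively ranges over intersections sharing at least one edge (the derivation restricts to $E_{S_1}\cap E_{S_2}\neq\emptyset$), so $|E_I|\ge1$. Taking the ratio, the $\delta^{2|E_H|}n^{2|V_H|}$ factors cancel and
\[\frac{\Var}{\Exp^2}=O\!\left(L^{-1}\sum_{I}\delta^{-|E_I|}n^{-|V_I|}\right)=O\!\left(\sum_{I}\delta^{-(|E_I|-1)}n^{-|V_I|}\right).\]

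\textbf{Bounding each term.} Since $H$ is fixed, there are finitely many $I$, so it suffices to show each term is $o(1)$.
\begin{itemize}
\item If $|E_I|=1$, the term is $n^{-|V_I|}\le n^{-2}=o(1)$.
\item If $|E_I|\ge 2$, then $I\sqsubseteq H$, so $|V_I|/(|E_I|-1)\ge\rho_H$. Writing $\delta=\omega_n\,n^{-\rho_H}$ with $\omega_n\to\infty$, the term equals
\[\omega_n^{-(|E_I|-1)}\,n^{\rho_H(|E_I|-1)-|V_I|}\le\omega_n^{-(|E_I|-1)}\to0.\]
\end{itemize}
Hence $\Prob[\ContSum{\delta}=0]=o(1)$, and $\mathcal{G}$ contains $(H,P)$ as a $\delta$-temporal motif with high probability.

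\textbf{Main obstacle.} The delicate point is the bookkeeping of the single factor $L=\Theta(\delta^{-1})$. The variance bound carries only one copy of $L$, while the squared mean carries $L^2$. This mismatch supplies exactly the extra $\delta$ that converts the exponent $|E_I|$ into $|E_I|-1$, and hence yields sparsity rather than density as the threshold. I would verify carefully that the $|E_I|=1$ terms are handled as above and that the $\delta\to0$ reduction is legitimate. The remaining constants are routine.
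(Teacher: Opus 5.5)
Your proposal is correct and follows essentially the same route as the paper: Chebyshev with \cref{lem: variance} and \cref{lem: expectation}, cancelling the $\delta^{2|E_H|}n^{2|V_H|}$ factors so that the single unmatched factor $1+|E_H|(1-\delta)/\delta=\Theta(\delta^{-1})$ turns each term into $\delta^{1-|E_I|}n^{-|V_I|}$. As in the paper, the trivial $I$ are then handled directly and the nontrivial $I$ through the definition of $\rho_H$; your preliminary monotonicity reduction to $\delta<1$ is a harmless extra that the paper leaves implicit, and in fact only $\delta<1$ is needed, since $1+|E_H|(1-\delta)/\delta\in[\delta^{-1},|E_H|\delta^{-1}]$ for all $\delta\in(0,1]$.
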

\begin{restatable}{lemma}{LemWeakDiscUpper}
    \label{corr: Weak-disc-upper}
    Let $H=(V_H,E_H)$ be a fixed non-trivial static graph, $P=(E_H,\prec)$ a partial order on its edges, and $\mathcal{H}=(V,E,\lambda)$ a random sample from $\Gamma_{[n]}(\psi,T)$, such that $\psi$ is a Bernoulli distribution, then for $\frac{\beta}{T}=\omega(n^{\frac{-|V_H|}{|E_H|-1}})$ $\mathcal{H}$ contains $(H,P)$ as a $\delta$-temporal motif with high probability.
\end{restatable}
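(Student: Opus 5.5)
The plan is to mirror the proof of \cref{corr: Weak-cont-upper} in the discrete setting, using the second moment method (Chebyshev) on $\DiscSum{\beta}{T}$:
\[\Prob[\DiscSum{\beta}{T}=0]\leq \Var[\DiscSum{\beta}{T}]/\Exp[\DiscSum{\beta}{T}]^2.\]
Throughout, $\psi$ is Bernoulli with parameter $r$, so each edge carries at most one label and $\tau=\tmn{S}$ is the only relevant labelling. Write $x=\beta/T$. We may assume $x\le 1$, and that $\beta\geq\text{height}(P)$ for large $n$, since otherwise no $\beta$-temporal motif can exist at all.

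\emph{Step 1: first moment.} I would use \cref{lem: expectation}, which is built on \cref{lem: bridge 3} and \cref{lem: Poset-Extension}. For $\beta>1$ it gives
\[\Exp[\DiscSum{\beta}{T}]\geq c_H\, r^{|E_H|}x^{|E_H|-1}n^{|V_H|}\]
for a constant $c_H>0$ depending only on $H$, because the factor $1+|E_H|(1-\tfrac{x}{2})/\tfrac{x}{2}$ is $\Theta(x^{-1})$. For $\beta=1$ the exact formula $r^{|E_H|}T^{1-|E_H|}|V_H|!\binom{n}{|V_H|}$ gives the same order. Under the hypothesis $x=\omega(n^{-|V_H|/(|E_H|-1)})$, the quantity $n^{|V_H|}x^{|E_H|-1}$ tends to infinity, so $\Exp[\DiscSum{\beta}{T}]\to\infty$.

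\emph{Step 2: variance.} I would invoke the discrete half of \cref{lem: variance}. Up to constants depending only on $H$ and $r$, it bounds the variance by
\[x^{2|E_H|-1}n^{2|V_H|}\sum_{I\subseteq H,\,|E_I|\geq1} x^{-|E_I|}n^{-|V_I|}.\]
Dividing by the square of the Step 1 bound, each summand contributes a term of order
\[\frac{x^{2|E_H|-1-|E_I|}n^{2|V_H|-|V_I|}}{x^{2|E_H|-2}n^{2|V_H|}}=\frac{1}{n^{|V_I|}x^{|E_I|-1}}.\]
Since there are only finitely many $I$, it suffices to show that $n^{|V_I|}x^{|E_I|-1}\to\infty$ for every non-empty $I\subseteq H$.

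\emph{Step 3: the subgraph terms (main obstacle).} For $I=H$ this is exactly the hypothesis. For $I$ a single edge the term is $1/n^{2}\to0$. For a general $I$ with $|E_I|\geq2$, I would write
\[n^{|V_I|}x^{|E_I|-1}=\bigl(n^{|V_I|/(|E_I|-1)}x\bigr)^{|E_I|-1}\]
and compare the exponent $|V_I|/(|E_I|-1)$ with $|V_H|/(|E_H|-1)$. Whenever $|V_I|/(|E_I|-1)\geq |V_H|/(|E_H|-1)$, the hypothesis gives $n^{|V_I|/(|E_I|-1)}x\geq n^{|V_H|/(|E_H|-1)}x\to\infty$, and the term vanishes. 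This covers every $I$ precisely when $|V_H|/(|E_H|-1)=\rho_H$, that is, when $H$ is balanced in the sparsity sense.

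The genuine difficulty is a subgraph $I$ with strictly smaller ratio. For such $I$ the bound $1/(n^{|V_I|}x^{|E_I|-1})$ need not vanish under the stated hypothesis alone. Indeed, the first moment bound of \cref{corr: DISC-LOWER} applied to $(I,P_I)$ shows that $I$, and hence $H$, is whp absent when $x=o(n^{-\rho_H})$. So for unbalanced $H$ and $x$ strictly between $n^{-|V_H|/(|E_H|-1)}$ and $n^{-\rho_H}$, the statement as worded cannot hold, and no argument can rescue these terms.

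I would therefore carry out Steps 1--3 verbatim. At this point I would state explicitly that the conclusion holds under the lemma's hypothesis exactly when $\rho_H=|V_H|/(|E_H|-1)$. Otherwise the hypothesis must be read as $x=\omega(n^{-\rho_H})$, which is how the lemma is used downstream for \cref{thm: Discrete Existence}. With that reading, $n^{|V_I|/(|E_I|-1)}x\geq n^{\rho_H}x\to\infty$ for every $I\sqsubseteq H$, so every ratio term tends to $0$. Chebyshev then gives $\Prob[\DiscSum{\beta}{T}>0]=1-o(1)$, completing the proof.
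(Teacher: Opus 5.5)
Your proposal is correct and follows essentially the paper's route. You apply Chebyshev to the discrete count, take the first moment from the expectation lemma (treating $\beta>1$ and $\beta=1$ separately) and the variance from the discrete half of the Bernoulli variance lemma, let trivial subgraphs contribute $o(1)$, and reduce the rest to terms of order $n^{-|V_I|}(\beta/T)^{1-|E_I|}$. Your diagnosis of the hypothesis is also right: the paper's own proof concludes by invoking $\beta/T=\omega(n^{-\min_{I\sqsubseteq H}|V_I|/(|E_I|-1)})=\omega(n^{-\rho_H})$, so the exponent $|V_H|/(|E_H|-1)$ in the statement (like the stray ``$\delta$'' for ``$\beta$'') is a typo, and as literally worded the lemma fails for unbalanced $H$, exactly as you argue.
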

\begin{proof}[Proof of \cref{corr: Weak-cont-upper}]
    \[\Prob{\left[\ContSum{\delta}=0\right]\leq \frac{\Var{\left[\ContSum{\delta}\right]}}{\Exp\left[\ContSum{\delta}\right]^2}}\]
    From \cref{lem: variance} and \cref{lem: expectation},
    \[\leq \frac{(\sqrt{2}\delta r)^{2|E_H|}\left(1+|E_H|\frac{(1-\delta)}{\delta}\right)n^{2|V_H|}(2|V_H|)^{2|V_H|}\sum_{I\subseteq H}(2 \delta r)^{-|E_I|}n^{-|V_I|}}{\left(\delta^{|E_H|}r^{|E_H|}|V_H|!\left(\frac{R(P)}{|E_H|!}\right)\binom{n}{|V_H|}\left(1+\frac{|E_H|(1-\delta)}{\delta}\right)\right)^2}\]
    Since $H$ is fixed, applying \cref{lem: Poset-Extension} and using $\binom{a}{b}\geq \left(\frac{a}{b}\right)^b$,
    \[\leq O(1)\sum_{I\subseteq H}\frac{(\delta)^{-|E_I|}n^{-|V_I|}}{\left(1+\frac{|E_H|(1-\delta)}{\delta}\right)}\]
    Now observe that all terms associated with trivial subgraphs of $H$ contribute $o(1)$ and thus,
    \[=O(\max_{I \sqsubseteq H}n^{-|V_I|}\delta^{1-|E_I|})+o(1)\]
    Since $\delta=\omega(\max_{I\sqsubseteq H}(n^{\frac{-|V_I|}{|E_I|-1}}))$,
    \[=o(1)\]
\end{proof}
Despite this weakness, however, we are still able to extract the results we need. The basic observation is as follows, for any valid choice of $\psi$, there exists a Bernoulli random variable $\sigma$ stochastically dominated by $\psi$. However, we also know from \cref{lem: domination} that the number of copies of our temporal motif in a sample from $\Gamma_{[n]}(\psi)$ stochastically dominates the number of copies in a sample from $\Gamma_{[n]}(\sigma)$. Together with \cref{corr: Weak-cont-upper} and \cref{corr: Weak-disc-upper}, this is sufficient to obtain the following two results.
\begin{restatable}{lemma}{LemContUpper}
    \label{corr: CONT-UPPER}
    Let $H=(V_H,E_H)$ be a fixed non-trivial graph, $P=(E_H,\prec)$ a partial order on its edges, and $\mathcal{G}=(V,E,\lambda)$ a random sample from $\Gamma_{[n]}(\psi)$, then for $\delta=\omega(n^{\frac{-|V_H|}{|E_H|-1}})$ $\mathcal{G}$ contains $(H,P)$ as a $\delta$-temporal motif with high probability.
\end{restatable}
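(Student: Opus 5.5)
The plan is a stochastic-domination reduction to the Bernoulli case, followed by the second-moment computation behind \cref{corr: Weak-cont-upper}, with the exponent $\frac{|V_H|}{|E_H|-1}$ carried through. The argument as written goes through when $H$ attains the minimum in $\rho_H$. For general $H$ the hypothesis needs to be read as $\delta=\omega(n^{-\rho_H})$ (see the last paragraph).

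\textbf{Step 1: reduction to Bernoulli.} Since $\psi$ has positive second moment, $q=\Prob[Q>0]>0$. Hence the Bernoulli distribution $\sigma$ with parameter $q$ is stochastically dominated by $\psi$: we have $\Prob[\sigma\geq 1]=q=\Prob[Q\geq 1]$, and $\sigma$ never exceeds $1$. By \cref{lem: domination}, the probability that $(H,P)$ appears as a $\delta$-temporal motif in a sample from $\Gamma_{[n]}(\psi)$ is at least the corresponding probability for $\Gamma_{[n]}(\sigma)$. It therefore suffices to show that $\Prob[\ContSum{\delta}=0]\to 0$ when the label multiplicity distribution is $\sigma$, so that $r=q$ is a fixed constant.

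\textbf{Step 2: second moment.} By Chebyshev, $\Prob[\ContSum{\delta}=0]\leq \Var[\ContSum{\delta}]/\Exp[\ContSum{\delta}]^2$. I insert the upper bound of \cref{lem: variance} and the exact expectation of \cref{lem: expectation}. Since $H$ is fixed, \cref{lem: Poset-Extension} gives $R(P)/|E_H|!\geq |E_H|^{-|E_H|}$, and $\binom{n}{|V_H|}=\Theta(n^{|V_H|})$. After these substitutions all constants are absorbed, and the ratio is
\[O(1)\sum_{I\subseteq H}\frac{\delta^{-|E_I|}n^{-|V_I|}}{1+|E_H|(1-\delta)/\delta}=O\Bigl(\sum_{I\subseteq H}n^{-|V_I|}\delta^{1-|E_I|}\Bigr).\]

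\textbf{Step 3: the sum over subgraphs $I$.} I split by the number of edges of $I$. If $|E_I|=1$, the term is $O(n^{-|V_I|})=o(1)$. If $|E_I|=0$, the $I$ appearing in \cref{lem: variance} are intersections of two copies sharing at least one edge, so these terms do not arise. For $I\sqsubseteq H$ with $|E_I|\geq 2$, we need $\delta^{|E_I|-1}n^{|V_I|}\to\infty$, that is, $\delta=\omega(n^{-|V_I|/(|E_I|-1)})$. For $I=H$ this is exactly the hypothesis $\delta=\omega(n^{-|V_H|/(|E_H|-1)})$. For any other $I$ it follows from the hypothesis whenever $\frac{|V_I|}{|E_I|-1}\geq\frac{|V_H|}{|E_H|-1}$. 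So the whole sum is $o(1)$ precisely when $H$ attains the minimum in the definition of $\rho_H$, in which case the lemma's threshold coincides with $n^{-\rho_H}$.

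\textbf{Main obstacle.} The obstacle is Step 3 when some proper subgraph $I$ has $\frac{|V_I|}{|E_I|-1}<\frac{|V_H|}{|E_H|-1}$. In that case no argument can work under the hypothesis as literally worded. Consider $\delta$ lying strictly between $n^{-|V_H|/(|E_H|-1)}$ and $n^{-\rho_H}$. By \cref{corr: CONT-LOWER} applied to the minimizing $I$ with the restricted order, $(I,P_I)$ is absent with high probability, and hence so is $(H,P)$.

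I therefore intend to prove the statement with its exponent read as $\rho_H$, which is what \cref{thm: Continuous Existence} needs. Equivalently, the proof covers $H$ for which $\frac{|V_H|}{|E_H|-1}=\rho_H$. Under that reading Step 3 goes through, since $\delta=\omega(n^{-\rho_H})$ implies $\delta=\omega(n^{-|V_I|/(|E_I|-1)})$ for every $I\sqsubseteq H$, and Steps 1 and 2 complete the proof.
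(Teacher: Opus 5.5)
Your proof is correct and follows the paper's route. The paper also dominates $\psi$ from below by a Bernoulli distribution, applies \cref{lem: domination}, and invokes the Bernoulli second-moment result \cref{corr: Weak-cont-upper}, whose computation (via \cref{lem: variance} and \cref{lem: expectation}) you have written out in Steps 2--3. Your point about the exponent is also right: \cref{corr: Weak-cont-upper} is itself stated for $\delta=\omega(n^{-\rho_H})$, so the paper's argument likewise only establishes the statement with $\rho_H$ in place of $\frac{|V_H|}{|E_H|-1}$, which is exactly what \cref{thm: Continuous Existence} needs, and your first-moment counterexample via \cref{corr: CONT-LOWER} correctly shows that the literal wording fails whenever $\rho_H<\frac{|V_H|}{|E_H|-1}$.
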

\begin{proof}
    Since $\psi$ is a discrete non-negative distribution which is positive with positive probability, there must exist a Bernoulli random variable $\sigma$ such that $\psi$ stochastically dominates $\sigma$.
    Furthermore, it follows from \cref{corr: Weak-cont-upper}, that $(H,P)$ exists in $\mathcal{I}=(V,E,\lambda')$ with high probability, where $\mathcal{I}$ is sampled from $\Gamma_{[n]}(\sigma)$.
    From \cref{lem: domination}, it thus follows that $(H,P)$ exists within $\mathcal{G}$ with high probability.
\end{proof}
The following discrete version follows from an essentially identical argument.
\begin{restatable}{lemma}{LemDiscUpper}
    \label{corr: DISC-UPPER}
    Let $H=(V_H,E_H)$ be a fixed non-trivial static graph, $P=(E_H,\prec)$ a partial order on its edges, and $\mathcal{H}=(V,E,\lambda)$ a random sample from $\Gamma_{[n]}(\psi,T)$, then for $\frac{\beta}{T}=\omega(n^{\frac{-|V_H|}{|E_H|-1}})$, $\mathcal{H}$ contains $(H,P)$ as a $\delta$-temporal motif with high probability.
\end{restatable}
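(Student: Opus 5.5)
We mirror the proof of \cref{corr: CONT-UPPER}, replacing the continuous ingredients by their discrete counterparts. As in the continuous case, $\psi$ is a discrete non-negative distribution with $q=\Prob[Q>0]>0$, so the Bernoulli distribution $\sigma$ with parameter $q$ is stochastically dominated by $\psi$. Indeed, $\Prob_\sigma[X\geq 1]=q=\Prob_\psi[Q\geq 1]$, and for $k\geq 2$ we have $\Prob_\sigma[X\geq k]=0$. We therefore first establish the statement for $\sigma$ and then transfer it to $\psi$.

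For the Bernoulli case we invoke \cref{corr: Weak-disc-upper}. Its proof is the discrete analogue of that of \cref{corr: Weak-cont-upper}. We apply Chebyshev's inequality, bounding the variance by the discrete part of \cref{lem: variance} and the mean from below by the $\beta>1$ (or $\beta=1$) bound of \cref{lem: expectation}; the poset factor is controlled uniformly via \cref{lem: Poset-Extension}, which applies because $\beta\geq\text{height}(P)$. The ratio is then $O(1)\sum_{I\subseteq H}(\beta/T)^{1-|E_I|}n^{-|V_I|}$ up to trivial subgraphs contributing $o(1)$. This is $o(1)$ once $\beta/T=\omega(n^{-|V_I|/(|E_I|-1)})$ for every non-trivial $I\sqsubseteq H$, which is exactly $\beta/T=\omega(n^{-\rho_H})$.

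One point needs care, and I expect it to be the main obstacle. The hypothesis of \cref{corr: DISC-UPPER} is stated as $\omega(n^{-|V_H|/(|E_H|-1)})$, whereas the second moment computation needs the condition for all non-trivial subgraphs, i.e.\ the exponent $\rho_H$. In the continuous proof the same identification is made implicitly, and I would read the hypothesis as $\omega(n^{-\rho_H})$, consistent with \cref{thm: Discrete Existence}. The other subtlety is that the constants in the lower bound on $\Exp[\DiscSum{\beta}{T}]$ from \cref{lem: expectation} differ between the regimes $\beta=1$ and $\beta\geq 2$, and the factor $(1+|E_H|(T-\beta)/\beta)$ must match between numerator and denominator. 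For fixed $H$, all of these differ only by constants depending on $|E_H|$, so they are absorbed into the $O(1)$.

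Finally, we transfer from $\sigma$ to $\psi$ using the discrete bullet of \cref{lem: domination}. By \cref{obs: bridge 1}, $\Gamma_{[n]}(\psi,T)$ and $\Gamma_{[n]}(\sigma,T)$ are the $T$-discretizations of samples from $\Gamma_{[n]}(\psi)$ and $\Gamma_{[n]}(\sigma)$. Under the monotone coupling, the label sets satisfy $\lambda_\sigma(e)\subseteq\lambda_\psi(e)$ for every edge, and discretization preserves this inclusion. Hence any $\beta$-temporal occurrence of $(H,P)$ in the $\sigma$-sample is also an occurrence in the $\psi$-sample. The probability that $(H,P)$ appears in $\mathcal{H}$ is therefore at least that in the Bernoulli model, which tends to $1$.
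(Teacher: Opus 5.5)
Your proposal is correct and follows the paper's own route: you dominate a Bernoulli $\sigma$ by $\psi$, apply the discrete second-moment result \cref{corr: Weak-disc-upper}, and transfer the conclusion to $\psi$ via the discrete part of \cref{lem: domination}. The appendix proof of \cref{corr: Weak-disc-upper} does use the condition $\beta/T=\omega(n^{-\rho_H})$, so you are right to read the hypothesis that way (and to read ``$\delta$-temporal'' as ``$\beta$-temporal''); your direct coupling argument for the transfer also covers the boundary case $\beta=T$, which the hypothesis $\beta<T$ of \cref{lem: domination} technically excludes.
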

\cref{thm: Continuous Existence} and \cref{thm: Discrete Existence} then follow immediately from \cref{corr: CONT-LOWER} and \cref{corr: CONT-UPPER}, and \cref{corr: DISC-LOWER} and \cref{corr: DISC-UPPER}, respectively.
\subsection{Large Cliques}
    Finally, we shall consider the case where the temporal motifs grow with the size of the temporal graph but $\delta$ the size of the window remains constant. For the sake of brevity, we only consider the continuous model. However, one should be able to derive very similar results in the discrete case via the same method. Similarly, for comparison with previous results, we shall focus on the size of the largest motif with a clique footprint and no restriction on the ordering of its edges. We begin with an upper bound on the size of such a motif, which similarly to the case of fixed motifs, we obtain as a consequence of \cref{lem: motif-upper}.
\begin{restatable}{lemma}{LemCliquesUpper}
    \label{lem: cliques upper}
    Let $\mathcal{G}=(V,E,\lambda)$ be a sample from $\Gamma_{[n]}(\psi)$. Then for any $0<\epsilon<\delta<1$, $\mathcal{G}$ does not contain an occurrence of $(K_k,P^{\emptyset}_k)$ as a $\delta$-temporal motif with high probability, where \[k=\frac{2(1+\epsilon)\log{n}}{\log{\left(\frac{\delta r +q(1-\delta) }{q\delta r}\right)}}\]
\end{restatable}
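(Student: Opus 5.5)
We apply \cref{lem: motif-upper} directly with $H=K_k$ and $P=P^{\emptyset}_k$, where $k$ is the value in the statement (rounded up to an integer; rounding only helps, since the event is monotone in $k$). With $|V_H|=k$ and $|E_H|=\binom{k}{2}$, the lemma bounds the probability that $(K_k,P^{\emptyset}_k)$ appears as a $\delta$-temporal motif. Since all edges are mutually incomparable, every automorphism of $K_k$ preserves $P^{\emptyset}_k$, so $\#\text{Aut}(K_k,P^{\emptyset}_k)=k!$. This cancels the $|V_H|!$ factor and leaves
\[\Prob[\ContSum{\delta}>0]\leq \binom{n}{k}\,\theta^{\binom{k}{2}}\left(1+\binom{k}{2}\frac{1-\delta}{\delta}\right),\qquad \theta=\frac{\delta rq}{(1-\delta)q+\delta r}.\]

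Since $\delta<1$, we have $(1-\delta)q>0$, so $\theta<\frac{\delta r q}{\delta r}=q\leq 1$. Hence $L:=\log(1/\theta)=\log\left(\frac{\delta r+q(1-\delta)}{q\delta r}\right)$ is a positive constant, and $k=\frac{2(1+\epsilon)\log n}{L}=\Theta(\log n)$.

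Next we bound each factor on the logarithmic scale. We use $\binom{n}{k}\leq n^k=\exp(k\log n)$. We also have $\theta^{\binom{k}{2}}=\exp\left(-\frac{k(k-1)}{2}L\right)$, and $\frac{kL}{2}=(1+\epsilon)\log n$, which gives
\[\theta^{\binom{k}{2}}=\exp\left(-(k-1)(1+\epsilon)\log n\right).\]
The last factor is $1+\binom{k}{2}\frac{1-\delta}{\delta}=O(\log^2 n)$, because $\delta$ is constant. Multiplying the three factors, the total is at most
\[\exp\left(k\log n-(k-1)(1+\epsilon)\log n+O(\log\log n)\right)=\exp\left(-\epsilon k\log n+(1+\epsilon)\log n+O(\log\log n)\right).\]
Since $k=\Theta(\log n)$, the dominant term is $-\epsilon k\log n=-\Theta(\log^2 n)$, so the bound tends to $0$.

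Finally, for the rounding: if $k'=\lceil k\rceil$, the same computation applies with $k'$ in place of $k$, since $k'L/2\geq(1+\epsilon)\log n$. The absence of a $K_{k'}$ motif also implies the absence of any larger clique motif, which is what the ``largest clique'' application requires. The only point needing care is confirming that $\theta<1$ strictly, so that $L>0$. This holds precisely because $\delta<1$; the assumption $\epsilon<\delta$ is not otherwise needed. The remaining steps are routine asymptotics.
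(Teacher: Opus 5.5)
Your computation matches the paper's proof. You take $H=K_k$ in \cref{lem: motif-upper}, cancel $|V_H|!$ against $\#\text{Aut}(K_k,P^{\emptyset}_k)=k!$, bound $\binom{n}{k}\le n^k$, and get the exponent $-\epsilon k\log n+O(\log n)$. Your remarks on rounding $k$ and on $\theta<1$ are correct. The gap, which the paper shares, is that you invoke \cref{lem: motif-upper} with $|E_H|=\binom{k}{2}\to\infty$. In that regime the lemma is false in general, for example for any degenerate $\psi\equiv r\ge 2$, and so is the statement.

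Here is a counterexample. Take $\psi\equiv 2$ (so $r=2$, $q=1$), $\delta=\tfrac12$ and $\epsilon=\tfrac1{10}$. Then $\theta=\frac{2\delta}{1+\delta}=\frac23$, and the statement says that with high probability there is no clique motif of size $k=2.2\log n/\log\tfrac32\approx 5.43\log n$. Now fix the window $W=(0,\tfrac12)$. Each pair independently has at least one of its two labels in $W$ with probability $1-(1-\delta)^2=\tfrac34$, so these pairs form a copy of $G(n,\tfrac34)$. Its clique number is $(2-o(1))\log n/\log\tfrac43\approx 6.95\log n$ with high probability. Choosing a label in $W$ on every edge of such a clique gives a $\delta$-temporal $(K_j,P^{\emptyset}_j)$ motif with $j>k$. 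For a single footprint with $m$ edges, the true probability is at least $(3/4)^m$, while \cref{lem: motif-upper} claims at most $(2/3)^m(1+m)$.

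The faulty step in the proof of \cref{lem: motif-upper} is its first claim, $\Exp[C\mid C>0]\ge\min_{\hat\tau}\Exp[C\mid C_{\hat\tau}=1]$. This inequality is false in general: for two independent indicators of probability $p$, the left side is $\frac{2}{2-p}$, which is strictly below $1+p$. The conditional expectation inequality only gives a lower bound on $\Prob[C>0]$, not an upper bound.

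The missing idea is that, for a fixed footprint, the right per-edge quantity is the probability of hitting a fixed window, $p_W=1-\Exp[(1-\delta)^Q]$, not $\theta$. A valid upper bound uses a union bound over footprints and over windows of length $\delta+\eta$ whose left endpoints lie on a grid of mesh $\eta$; every motif lies in one of these windows. Letting $\eta\to0$ gives $w_\delta\le(1+\epsilon)\,2\log n/\log(1/p_W)$, and the fixed-window construction above shows this is sharp. For every $\psi\equiv r\ge2$ and every $\delta\in(0,1)$ one has $1-(1-\delta)^r>\frac{\delta r}{1-\delta+\delta r}=\theta$. So the bound you were asked to prove fails for all small $\epsilon$, and no refinement of your asymptotics can fix it.
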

\begin{proof}
    Consider $H=K_k$ the clique of size $k$, from \cref{lem: motif-upper},
    \[\Prob{\left[\ContSum{\delta}>0\right]}\leq \binom{n}{|V_H|}\frac{|V_{H}|!}{\# \text{Aut}(H,P)}\left(\frac{\delta rq}{(1-\delta)q+\delta r}\right)^{|E_H|}\left(1+\frac{|E_H|(1-\delta)}{\delta}\right)\]
    Since $H$ is a clique and $P$ is trivial,
    \[\leq n^{k}\left(\frac{\delta rq}{(1-\delta)q+\delta r}\right)^{\binom{k}{2}}\left(1+\frac{\binom{k}{2}(1-\delta)}{\delta}\right)\]
    Taking $k=(1+\epsilon)\frac{2\log{n}}{\log{\left(\frac{\delta r +q(1-\delta) }{q\delta r}\right)}}$,
    \[=\exp\left[(1+\epsilon)\frac{2\log^2{n}}{\log{\left(\frac{\delta r +q(1-\delta) }{q\delta r}\right)}}-(1+\epsilon)^2\frac{2\log^2{n}}{\log{\left(\frac{\delta r +q(1-\delta) }{q\delta r}\right)}}+o(\log^2 n)\right]\]
    \[=o(1)\]
    Thus we have the claim.
\end{proof}
Unfortunately, the argument used for \cref{corr: CONT-UPPER} is not sufficiently precise for our purposes here, as both the first and second moment of $\psi$ will have a significant impact on the threshold. Instead we make use of \cref{lem: variance-full}, which is better suited for this context as we need to account for the role of $r$ but can absorb the cost of the constant overhead introduced by an additional $\frac{1}{\delta}$ factor. For the case of almost surely bounded random variables, we then obtain the following from \cref{lem: variance-full} via a straightforward application of the second moment method.
\begin{restatable}{lemma}{LemCliquesBounded}
    \label{lem: cliques bounded}
    For any $0<\epsilon<\delta<1$, $\psi$ a valid and almost surely bounded \distribution{} and $\mathcal{G}=(V,E,\lambda)$ a sample from $\Gamma_{[n]}(\psi)$, we have that $\mathcal{G}$ contains $(K_{k},P^{\emptyset}_{k})$ as a $\delta$-temporal motif with high probability, where 
    \[k=(1-\epsilon)\frac{2\log{n}}{\log{\left(\frac{\delta r_2-\delta r+r}{\delta r^2}\right)}}\]
\end{restatable}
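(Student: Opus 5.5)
We apply the second moment method to $\ContSum{\delta}$ with $H=K_k$ and $P=P^{\emptyset}_k$, using $\Prob[\ContSum{\delta}=0]\leq \Var[\ContSum{\delta}]/\Exp[\ContSum{\delta}]^2$. Write $b=\frac{\delta r_2-\delta r+r}{\delta r^2}$; note $b>1$ since $r_2\geq r^2$ and $r(1-\delta)>0$.

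For the expectation we use \cref{lem: expectation}. Since $R(P^{\emptyset}_k)=\binom{k}{2}!$, it gives
\[\Exp[\ContSum{\delta}]=(\delta r)^{\binom{k}{2}}k!\binom{n}{k}\left(1+\tbinom{k}{2}\tfrac{1-\delta}{\delta}\right).\]
For the variance we use \cref{lem: variance-full}, which applies because $\psi$ is almost surely bounded. Dividing, the prefactors $(\delta r)^{2\binom{k}{2}}(1+\binom{k}{2}\frac{1-\delta}{\delta})^2$ cancel exactly, leaving
\[\frac{\Var[\ContSum{\delta}]}{\Exp[\ContSum{\delta}]^2}\leq \frac{\sum_{S_1,S_2\in\mathcal{S}:E_{S_1}\cap E_{S_2}\neq\emptyset} b^{|E_{S_1}\cap E_{S_2}|}}{(k!\binom{n}{k})^2}.\]
We group the pairs $(S_1,S_2)$ by $j=|V_{S_1}\cap V_{S_2}|$, where $2\leq j\leq k$ is needed for a shared edge. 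Then $|E_{S_1}\cap E_{S_2}|=\binom{j}{2}$. The number of such ordered pairs of labelled copies is at most $k!\binom{n}{k}\cdot\binom{k}{j}\binom{n-k}{k-j}k!$. So the ratio is bounded by
\[\sum_{j=2}^{k} u_j,\qquad u_j:=\frac{\binom{k}{j}\binom{n-k}{k-j}}{\binom{n}{k}}\,b^{\binom{j}{2}}.\]
This is the standard clique-number computation for $G(n,p)$ with $b$ playing the role of $1/p$ (cf.\ \cite{frieze2015introduction}).

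The main step is showing $\sum_j u_j=o(1)$ for $k=(1-\epsilon)\frac{2\log n}{\log b}$. We use $\binom{n-k}{k-j}/\binom{n}{k}\leq (k/n)^{j}\cdot O(1)$, valid as $k=O(\log n)$. This gives
\[u_j\leq O(1)\,\exp\left(j\left[\log k-\log n+\tfrac{j-1}{2}\log b\right]+\log\tbinom{k}{j}\right).\]
The bracket is largest at $j=k$, where it equals $-\epsilon\log n+O(\log\log n)$. Since it is linear in $j$, it is at most $-\Omega(\epsilon \log n)$ for every $j\leq k$. Then $\binom{k}{j}\leq k^j=e^{j\log k}$ only adds $O(j\log\log n)$, so $u_j\leq n^{-\Omega(\epsilon)j}$. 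Summing the resulting geometric-type series from $j=2$ gives $o(1)$, and hence $\ContSum{\delta}>0$ with high probability.

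The step most likely to need care is the term $j=k$, i.e.\ $S_1=S_2$, where the bound from \cref{lem: variance-full} contributes $b^{\binom{k}{2}}/(k!\binom{n}{k})$. This equals $\exp((1-\epsilon)^2\cdot 2\log^2 n/\log b-(1-\epsilon)2\log^2 n/\log b+O(\log n\log\log n))$, which is $o(1)$ precisely because the exponent $(1-\epsilon)$ sits below the threshold. We would also check that the condition $\epsilon<\delta$ is not needed beyond ensuring $k\geq 2$ and $b$ constant.
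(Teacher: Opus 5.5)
Your proposal is correct and follows essentially the same route as the paper. Both use the second moment method via \cref{lem: expectation} and \cref{lem: variance-full}, cancel the $(\delta r)^{2\binom{k}{2}}\bigl(1+\binom{k}{2}\frac{1-\delta}{\delta}\bigr)^2$ prefactors exactly, and reduce to $\sum_{j=2}^{k}\binom{k}{j}\binom{n-k}{k-j}b^{\binom{j}{2}}/\binom{n}{k}$. The only difference is the final estimate. The paper shows the summand is maximised at $j=2$ (via the ratio $f(2)/f(j)$) and bounds the sum by $k$ times that term. You instead bound each term by $n^{-j(\epsilon-o(1))}$, using that the exponent is linear in $j$, and both arguments work. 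One small slip: your side remark gives the $j=k$ contribution as $b^{\binom{k}{2}}/(k!\binom{n}{k})$, but it is $b^{\binom{k}{2}}/\binom{n}{k}$. This is harmless, since $\log k!=O(\log n\log\log n)$ and your general termwise bound already covers $j=k$.
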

\begin{proof}
    From \cref{lem: variance-full} we know that,
    \[\Var\left[\ContSum{\delta}\right]\leq \delta^{2|E_H|}r^{2|E_H|}\left(1+\frac{|E_H|(1-\delta)}{\delta}\right)^2\sum_{S_1,S_2 \in \mathcal{S}: E_{S_1}\cap E_{S_2}\neq \emptyset}\left(\frac{\delta r_2+r-\delta r}{\delta r^2}\right)^{|E_{S_1}\cap E_{S_2}|}\]
    Since $H$ is a clique we obtain,
    \[=\delta^{k(k-1)}r^{k(k-1)}\left(1+\frac{\binom{k}{2}(1-\delta)}{\delta}\right)^2(k!)^2\binom{n}{k}\sum_{i=2}^k\binom{n-k}{k-i} \binom{k}{i}\left(\frac{\delta r_2+r-\delta r}{\delta r^2}\right)^{\binom{i}{2}}\]
    \begin{claim}
        For $n$ sufficiently large, $W>1$, $0<\epsilon<1$ and $k=\frac{(1-\epsilon)\log{n}}{\log{W}}$,
        \[f(i)=\binom{n-k}{k-i}\binom{k}{i}W^{i^2-i}\]
        is maximized over $\{2,...,\lfloor k\rfloor\}$ at $2$.
    \end{claim}
    \begin{proof}
        For $i>2$, let $g(i)=\frac{f(2)}{f(i)}$ then,
        \[g(i)=\left(\frac{(k-i)!}{(k-2)!}\right)^2\frac{i!}{2}\frac{(n-2k+i)!}{(n-2k+2)!}W^{2+i-i^2}\]
        \[\geq  (k-2)^{4-2i}(n-2k+2)^{i-2}W^{2+i-i^2}\]
        \[=\left(\frac{n-2k+2}{(k-2)^2W^{i+1}}\right)^{i-2}\]
        For the sake of contradiction assume that $f$ is not maximized over $\{2,...,\lfloor k \rfloor\}$ at $2$. Thus there must be some value of $2<j\leq k$ for which $g(j)<1$.
        \[\left(\frac{n-2k+2}{(k-2)^2W^{j+1}}\right)^{j-2}<1\]
        \[\left(\frac{n-2k+2}{(k-2)^2}\right)<W^{j+1}\]
        \[\log{(n-2k+2)}-2\log{(k-2)}<(j+1)\log{W}\]
        \[j>\frac{\log{(n-2k+2)}-2\log{(k-2)}}{\log{W}}-1\]
        Since $k=\Theta(\log{n})$,
        \[j>\frac{\log{n}}{\log{W}}-o(\log{n})\]
        However, since $j\leq k=\frac{(1-\epsilon)\log{n}}{\log{W}}$ we obtain a contradiction.
    \end{proof}
    Thus applying the claim,
    \[\Var{\left[\ContSum{\delta}\right]}\leq k(k!)^2\binom{n}{k}\binom{n-k}{k-2} \binom{k}{2}(\delta r)^{k(k-1)}\left(1+\frac{\binom{k}{2}(1-\delta)}{\delta}\right)^2\left(\sqrt{\frac{\delta r_2+r-\delta r}{\delta r^2}}\right)\]
    Now,
    \[\Prob{[\ContSum{\delta}=0]}\leq \frac{\Var{[\ContSum{\delta}]}}{\Exp[\ContSum{\delta}]^2}\]
    From the above, \cref{lem: expectation}, and the fact that all elements of $P^{\emptyset}_{k}$ are incomparable.
    \[\leq \frac{k(k!)^2\binom{n}{k}\binom{n-k}{k-2} \binom{k}{2}(\delta r)^{k(k-1)}\left(1+\frac{\binom{k}{2}(1-\delta)}{\delta}\right)^2\left(\sqrt{\frac{\delta r_2+r-\delta r}{\delta r^2}}\right)}{\left((\delta r)^{\binom{k}{2}}\binom{n}{k}(k!)\left(1+\frac{\binom{k}{2}(1-\delta)}{\delta}\right)\right)^2}\]
    \[=\frac{k\left(\sqrt{\frac{\delta r_2+r-\delta r}{\delta r^2}}\right)\binom{n-k}{k-2}}{\binom{n}{k}}\]
    \[=o(1)\]
    Which gives the claim.
\end{proof}
However, this result on its own is insufficient for the proof of \cref{thm: Cliques} in which we place no condition on the boundedness of the \distribution. Fortunately, we can exploit a similar trick to the case of fixed motifs and extend this result to cover unbounded \distribution s via a domination argument. In particular, for any valid choice of $\psi$, we are able to obtain a second almost surely bounded distribution $\sigma$ such that: i) $\psi$ stochastically dominates $\sigma$ and ii) the size of the largest clique in a sample from $\Gamma_{[n]}(\sigma)$ is very close to that in a sample from $\Gamma_{[n]}(\psi)$.
\begin{restatable}{lemma}{LemCliquesLower}
    \label{lem: cliques lower}
    For any $0<\epsilon<\delta<1$, $\psi$ a valid \distribution{} and $\mathcal{G}=(V,E,\lambda)$ a sample from $\Gamma_{[n]}(\psi)$. Then, $\mathcal{G}$ contains $(K_{k},P^{\emptyset}_{k})$ as a $\delta$-temporal motif with high probability, where 
    \[k=(1-\epsilon)\frac{2\log{n}}{\log{\left(\frac{\delta r_2-\delta r+r}{\delta r^2}\right)}}\]
\end{restatable}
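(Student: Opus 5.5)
We reduce to \cref{lem: cliques bounded} by truncating $\psi$. For a threshold $M\in\mathbb{N}$, let $\sigma_M$ be the law of $\min(Q,M)$ for $Q\sim\psi$. Then $\sigma_M$ is almost surely bounded and is stochastically dominated by $\psi$, since $\min(Q,M)\le Q$ gives a monotone coupling. It is also a valid \distribution{} once $M\ge 1$, because $\Prob[\min(Q,M)>0]=q>0$ and its second moment is finite. Write $r^{(M)}=\Exp[\min(Q,M)]$ and $r_2^{(M)}=\Exp[\min(Q,M)^2]$. By monotone convergence, $r^{(M)}\uparrow r$ and $r_2^{(M)}\uparrow r_2$ as $M\to\infty$, using that $\psi$ has finite second moment.

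Set $W=\frac{\delta r_2-\delta r+r}{\delta r^2}$ and $W_M=\frac{\delta r_2^{(M)}-\delta r^{(M)}+r^{(M)}}{\delta (r^{(M)})^2}$. The map $(a,b)\mapsto \frac{\delta b+(1-\delta)a}{\delta a^2}$ is continuous on $a>0$, so $W_M\to W$. We also have $W>1$. Indeed, $r_2\ge r^2$ by Jensen, so $\delta r_2-\delta r+r\ge \delta r^2+(1-\delta)r$. Moreover $r\ge$ any $r^2$ contribution beyond this, because $r_2\ge r$ for integer-valued $Q$; one checks directly that $\delta r_2+(1-\delta)r>\delta r^2$ holds whenever $r>0$ and $\delta<1$. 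Hence $\log W>0$, and the same holds for $W_M$ once $M$ is large.

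Given $\epsilon$, choose $\epsilon'\in(0,\epsilon)$ and then $M$ large enough that
\[(1-\epsilon')\frac{2\log n}{\log W_M}\ \ge\ (1-\epsilon)\frac{2\log n}{\log W},\]
that is, $\frac{1-\epsilon'}{\log W_M}\ge\frac{1-\epsilon}{\log W}$. This is possible since $\log W_M\to\log W>0$, and $M$ depends only on $\epsilon,\delta,\psi$, not on $n$. We also need $\epsilon'<\delta$ to meet the hypothesis of \cref{lem: cliques bounded}; this is automatic since $\epsilon'<\epsilon<\delta$.

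Apply \cref{lem: cliques bounded} to $\sigma_M$. With high probability a sample from $\Gamma_{[n]}(\sigma_M)$ contains $(K_{k'},P^{\emptyset}_{k'})$ as a $\delta$-temporal motif, where $k'=(1-\epsilon')\frac{2\log n}{\log W_M}\ge k$. A clique motif with the trivial order contains every smaller clique motif with the trivial order, so $(K_k,P^{\emptyset}_k)$ also appears with high probability. Finally, \cref{lem: domination} with $\sigma_M$ dominated by $\psi$ transfers this event to $\Gamma_{[n]}(\psi)$, which proves the statement.

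The main obstacle is justifying the continuity step: $r_2^{(M)}\to r_2$ needs the finite second moment, and we must ensure that $M$ is fixed independently of $n$ so that \cref{lem: cliques bounded} applies with constant parameters. The handling of non-integer $k$ (floors) is a routine detail.
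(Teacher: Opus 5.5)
Your proposal is correct and follows essentially the same route as the paper: you replace $\psi$ by a bounded law that it stochastically dominates (your $\min(Q,M)$ in place of the paper's truncation to $[0,c]$), apply \cref{lem: cliques bounded} with a slightly smaller $\epsilon$, absorb the change in $\frac{\delta r_2-\delta r+r}{\delta r^2}$ by continuity (the paper uses explicit $(1-\alpha)$ bounds on the first two moments instead), and transfer back to $\Gamma_{[n]}(\psi)$ via \cref{lem: domination}. Your explicit checks make precise points the paper leaves implicit: that $\epsilon'<\delta$, that this ratio exceeds $1$ (which follows at once from $r_2\ge r^2$ and $(1-\delta)r>0$, so your garbled middle sentence there is unnecessary), and that a $\delta$-temporal $K_{k'}$ contains a $\delta$-temporal $K_k$ for $k\le k'$.
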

\begin{proof}
    Our goal shall be to lower bound the probability of a $\delta$-temporal clique of size $k$ in a sample from $\Gamma_{[n]}(\psi)$ with the probability of one occurring in a sample from $\Gamma_{[n]}(\sigma)$, where $\sigma$ is a sufficiently close, but almost surely bounded distribution.

    To begin with, observe that since $\psi$ is discrete, non-negative and has finite first and second moments, the following two sums converge absolutely from below. For $X \sim \psi$,
    \[E[X]=\sum_{x\geq 0}x \Prob{[X=x]}\]
    \[E[X^2]=\sum_{x\geq 0}x^2 \Prob{[X=x]}\]
    Further, since $\Exp[X]>0$, from the definition of convergence we must have the following: for any $\alpha>0$ there exists $c$ such that,
    \[\sum_{0\leq x\leq c}x \Prob{[X=x]}\geq (1-\alpha)\Exp[X]\]
    \[\sum_{0\leq x\leq c}x^2 \Prob{[X=x]}\geq (1-\alpha)\Exp[X^2]\]
    For some value of $\alpha$ that we will fix later, we shall take $\sigma$ to be the distribution obtained by truncating $\psi$ to the interval $[0,c]$.
    We immediately obtain all of the following,
    \begin{itemize}
        \item $\sigma$ is almost surely bounded
        \item $\psi$ stochastically dominates $\sigma$
        \item For $X\sim \psi$ and $Y \sim \sigma$,
        \[(1-\alpha)\Exp[X]\leq \Exp[Y]\leq \Exp[X]\]
        \[(1-\alpha)\Exp[X^2]\leq \Exp[Y^2]\leq \Exp[X^2]\]
    \end{itemize}
    It, then, follows from \cref{lem: cliques bounded}, that for any choice of $\beta>0$, with high probability, a sample from $\Gamma_{[n]}(\sigma)$ contains $(K_{k'},P^{\emptyset}_{k'})$ as a $\delta$-temporal motif, where \[k'=\frac{2(1-\beta)\log{n}}{\log{\left(\frac{\delta\Exp[Y^2]+\Exp[Y]-\delta \Exp[Y]}{\delta \Exp[Y^2]}\right)}}\]
    However, from the properties of $\sigma$ and monotonicity of the logarithm,
    \[\log{\left(\frac{\delta\Exp[Y^2]+\Exp[Y]-\delta \Exp[Y]}{\delta \Exp[Y]^2}\right)}\leq \log{\left(\frac{\delta\Exp[X^2]+\Exp[X]-\delta\Exp[X]}{(1-\alpha)^2\delta \Exp[X]^2}\right)}\]
    Now for any choice of $\gamma>0$, there exists a choice of $\alpha>0$ sufficiently small such that,
    \[\leq (1+\gamma)\log{\left(\frac{\delta\Exp[X^2]+\Exp[X]-\delta \Exp[X]}{\delta \Exp[X]^2}\right)}\]
    Taking $\beta$ and $\gamma$ such that $\frac{1-\beta}{1+\gamma}=1-\epsilon$, we have $k'\geq k$, and so with high probability a sample from $\Gamma_{[n]}(\sigma)$ contains $(K_k,P^{\emptyset}_{k})$ as a $\delta$-temporal motif.

    With that established, it only remains to apply \cref{lem: domination}. Since $\psi$ stochastically dominates $\sigma$, we must have that the probability of $(K_k,P^{\emptyset}_{k})$ occurring as a $\delta$-temporal motif in a sample from $\Gamma_{[n]}(\psi)$ is at least the probability of $(K_k,P^{\emptyset}_{k})$ occurring as a $\delta$-temporal motif in a sample from $\Gamma_{[n]}(\sigma)$.
    Thus, the claim follows.
\end{proof}
\cref{thm: Cliques} then follows immediately from \cref{lem: cliques upper} and \cref{lem: cliques lower}.
\section{Doubling Times}
In this section, we outline our proof of \cref{thm: continuous double} characterizing the doubling time of $\Gamma_{[n]}(\psi)$ whenever $\psi$ is a degenerate random variable. The proof largely consists of obtaining probabilistic bounds on the doubling time of small individual sets, followed by a more careful analysis of the particularly early and late participants. This approach is inspired by~\cite{SharpThresholds}, although as we deal with relatively large sets of vertices rather than individual vertices, we have to put in a lot of extra-leg work in the process.

For the sake of mathematical convenience, we first establish the results for a slightly different random temporal graph model and then show how they can be extended to our original setting. 
The model in question is a different discretization to the one we have considered thus far: the so-called ``order discretization''.
In constructing the order discretization of $\mathcal{G}$, we strip away all information about the precise timing of edge labels and instead only retain the relative order between them.
The order discretization of $\mathcal{G}$ contains only a single edge at any given time and the edge at time $k$ corresponds to the $k$th edge to appear in $\mathcal{G}$.
Formally, it is defined as follows.
\begin{restatable}{definition}{DefOrder}
    For a temporal graph $\mathcal{G}=(V,E,\lambda)$ where no two edges share the same label, we define $\mathcal{G}^O=(V,E,\lambda^O)$ the \emph{order discretization} of $\mathcal{G}$ such that $\lambda^O(e)$ corresponds to the set of relative positions of each label in $\lambda(e)$ in the overall set of labels. 
    Furthermore, for $0\leq a<b$ we define $\mathcal{G}^O_{[a,b]}=(V,E_{[a,b]},\lambda^O_{[a,b]})$ to be $\mathcal{G}^O$ with all labels outside of $[a,b]$ removed.\medskip
\end{restatable}
Our main strategy for obtaining bounds on the doubling time of any specific set is to bound the time required for reachability balls to cross a sequence of cuts in the underlying graph.
As a consequence, we will need to make heavy use of the following bound on the proportion of labels from any cut of the underlying graph sampled in the order discretization before a given time. 
While in the general case, it is not difficult to show that any fixed cut, determined before any sampling occurs, has this property, we need to select the cuts for consideration based on the growth of the ball so far. This introduces a great deal of dependence which we circumvent via the following result.
\begin{restatable}{lemma}{LemNoSmallCuts}
    \label{lem: no small cuts}
    For any constant $\alpha,\beta>0$,
    \[\Prob\left[\max_{S \subseteq V}\frac{\sum_{e \in S\times V\setminus S}\lambda^O_{[0,\alpha n \log{n}]}(e)}{(r|S|(n-|S|))}>\beta\right]=o(n^{-1}).\]
\end{restatable}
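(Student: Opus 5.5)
The plan is to prove the bound for each fixed vertex set $S$ separately, and then take a union bound over all $S$. Because the labels in $[0,\alpha n\log n]$ are so few, each fixed cut fails with super-exponentially small probability, so no adaptivity issue arises. I work in the setting of this section: $\psi$ is degenerate, so every edge carries exactly $r$ labels. Almost surely the $N=r\binom{n}{2}$ labels are distinct, so the order discretization $\mathcal{G}^O$ is a uniformly random assignment of the positions $1,\dots,N$ to the $N$ labels. Set $m=\lfloor\alpha n\log n\rfloor$. For a fixed $S$ with $|S|=s$, write $K_S=r\,s(n-s)$ for the number of labels on cut edges. The numerator $X_S=\sum_{e\in S\times V\setminus S}\lambda^O_{[0,m]}(e)$, read as the number of cut labels in the first $m$ positions, is then hypergeometric with population $N$, $K_S$ successes and $m$ draws. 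Its mean is $\mu_S=mK_S/N=O\!\left(\frac{\log n}{n}\right)K_S$.

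Since $S$ and $V\setminus S$ define the same cut, I may assume $1\le s\le n/2$. Then $K_S\ge rsn/2$. By Hoeffding's observation that hypergeometric tails are dominated by binomial tails with the same mean, the Chernoff bound $\Prob[X\ge t]\le (e\mu/t)^t$ holds. With $t=\beta K_S$ this gives
\[\Prob[X_S>\beta K_S]\le\left(\frac{e\mu_S}{\beta K_S}\right)^{\beta K_S}=\left(\frac{e m}{\beta N}\right)^{\beta K_S}\le\left(\frac{C\log n}{n}\right)^{\beta r s n/2},\]
where $C$ depends only on $\alpha,\beta,r$.

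For the union bound there are at most $\binom{n}{s}\le n^s$ sets of size $s$. The total failure probability is therefore at most
\[\sum_{s=1}^{n/2} n^s\left(\frac{C\log n}{n}\right)^{\beta r s n/2}=\sum_{s\ge1}\exp\!\Big(s\big[\log n-\tfrac{\beta r n}{2}(\log n-\log(C\log n))\big]\Big).\]
For large $n$ the bracket is at most $-\tfrac{\beta r}{4}n\log n$, so the sum is $\exp(-\Omega(n\log n))=o(n^{-1})$. For those $S$ with $\beta K_S\ge m$, the event $X_S>\beta K_S$ is impossible, so those $S$ need no separate treatment.

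The only step needing care is justifying the Chernoff-type bound for the hypergeometric variable, i.e.\ the negative dependence of sampling without replacement. I will cite Hoeffding's convex-order comparison (or negative association of the indicator variables) rather than reprove it. I will also check that the constant $\beta$ can be arbitrarily small: it enters only through the exponent $\beta r sn/2$, which still dominates the entropy term $s\log n$ by a factor of order $n$.
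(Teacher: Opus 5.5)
Your proof is correct and follows essentially the same route as the paper: a per-set Chernoff bound against a dominating binomial, followed by a union bound over the at most $n^s$ sets of each size $s\le n/2$. The differences are in the details. The paper gets binomial domination by bounding the per-round probability of hitting the cut, and applies Chernoff only at a large constant multiple $(1+k)$ of the mean (a threshold of order $s\log n$, which it then notes is $o(s(n-s))$). You instead use the exact hypergeometric law with Hoeffding's comparison and apply $(e\mu/t)^t$ directly at $t=\beta K_S$; that bound is valid for arbitrarily large deviations and makes the union bound immediate without tuning $k$.
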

\subsection{Upper Bounds}
We will begin with our argument for obtaining an upper bound on the doubling time. However, before we begin we will need two technical results. A key property of our model is that it is in some sense ``time reversible'' as the probability of any set of edge labels appearing in a given order is the same as that of them falling in the reverse (or in fact any) order.
This provides us with a very useful tool as we are able to consider our normal temporal graph and variants of it where we have partially reversed its labeling interchangeably.
The next definition and observation (see also \cref{def: reach}) exploits this property to reduce bounding the doubling time of small sets of vertices to bounding the doubling time of large sets of vertices in the partially reversed graph.
\begin{restatable}{definition}{DefReverse}
    For any temporal graph $\mathcal{G}=(V,E,\lambda)$, we define:
    \begin{itemize}
        \item $\textbf{Double}^{-}(\mathcal{G})=\max_{S \subset V:2\leq |S|\leq \frac{n}{4}}\left(\text{Double}_{\mathcal{G}}(S)\right)$ to be the doubling time over small sets.
        \item $\textbf{Double}^{+}(\mathcal{G})=\max_{S \subset V:\frac{n}{4}<|S|\leq \frac{n}{2}}\left(\text{Double}_{\mathcal{G}}(S)\right)$ to be the doubling time over large sets.
        \item $\mathcal{R}(\mathcal{G},t)=(V,E,\lambda_{\mathcal{R}(\mathcal{G},t)})$ to be the partially reversed temporal graph obtained by reversing the occurrence times of all labels before $t$, i.e. taking $\lambda_{\mathcal{R}(\mathcal{G},t)}(e)=\{R_t(l):l \in \lambda(e)\}$, where $R_t(l)=\begin{cases}t-l \text{ if }t>l\\ l \text{ otherwise}\end{cases}.$
    \end{itemize}
\end{restatable}
\begin{restatable}{lemma}{LemTimeWalk}
    \label{lem: Timewalk}
    For any temporal graph $\mathcal{G}=(V,E,\lambda)$, $\textbf{Double}^{-}(\mathcal{R}(\mathcal{G},\textbf{Double}^{+}(\mathcal{G}))\leq \textbf{Double}^{+}(\mathcal{G})$.\medskip
\end{restatable}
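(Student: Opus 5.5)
Write $t^*=\textbf{Double}^{+}(\mathcal{G})$ and $\mathcal{G}'=\mathcal{R}(\mathcal{G},t^*)$. The plan is a reversal argument: reversing all labels before $t^*$ turns a time-respecting path in $\mathcal{G}'$ that ends by time $t^*$ into one in $\mathcal{G}$ traversed backwards, and conversely. So the question ``which vertices can a small set $S$ reach by time $t^*$ in $\mathcal{G}'$'' becomes ``which vertices can reach $S$ in $\mathcal{G}$ within $[0,t^*]$''. We then show that if a small set failed to double in $\mathcal{G}'$ by time $t^*$, some large set in $\mathcal{G}$ would fail to double by time $t^*$, contradicting the definition of $t^*$. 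If $t^*=\infty$ the claim is trivial, so assume $t^*$ is finite.

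\textbf{Step 1 (path correspondence).} For a label $l<t^*$ we have $R_{t^*}(l)=t^*-l$, which reverses the order of labels inside $(0,t^*)$. Hence a sequence of edges $v_0v_1,\dots,v_{m-1}v_m$ with increasing labels $l_1<\dots<l_m<t^*$ in $\mathcal{G}$ corresponds to the reversed walk $v_m,\dots,v_0$ with increasing labels $t^*-l_m<\dots<t^*-l_1<t^*$ in $\mathcal{G}'$, and the map goes both ways. Consequently, for any $S$, the ball $\ball{\mathcal{G}'}{S}{t^*}$ contains the set $\overline{\mathcal{B}}(S)$ of vertices that reach $S$ in $\mathcal{G}$ via a time-respecting path using labels below $t^*$. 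The boundary label equal to $t^*$ needs separate care: it is either excluded, or handled by the strict/non-strict convention in \cref{def: reach}, or we note that labels are almost surely distinct, so the statement may be proved for graphs with distinct labels.

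\textbf{Step 2 (contradiction via complements).} Fix $S$ with $2\le|S|\le n/4$ and suppose $|\ball{\mathcal{G}'}{S}{t^*}|<2|S|$. Let $U=V\setminus \overline{\mathcal{B}}(S)$. Then $|U|>n-2|S|\ge n/2$. Choose $U'\subseteq U$ with $n/4<|U'|\le n/2$, for instance $|U'|=\lfloor n/2\rfloor$. No vertex of $U'$ reaches $S$ within time $t^*$, because any such vertex would lie in $\overline{\mathcal{B}}(S)$. So $\ball{\mathcal{G}}{U'}{t^*}\subseteq V\setminus S$.

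This alone does not contradict doubling, since $2|U'|\le n$ still leaves room for the ball to grow. We sharpen it by taking $U'$ to be the whole of $U$ minus as few vertices as necessary, in such a way that $\ball{\mathcal{G}}{U'}{t^*}$ is trapped inside a set of size below $2|U'|$. The key fact is that $U$ is closed under being reached backwards: anything reachable from $U$ by time $t^*$ cannot enter $\overline{\mathcal{B}}(S)$. Indeed, if $u\in U$ reached $w\in\overline{\mathcal{B}}(S)$ at time $a$, and $w$ then reached $S$, concatenating gives a path from $u$ to $S$ — but only if the labels are compatible. This is the delicate step, because reachability sets are not transitive across different times.

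\textbf{Main obstacle.} Making this closure and size-counting rigorous is the hardest part. The cleanest route I would try first is to define $\overline{\mathcal{B}}(S)$ at each time threshold and use a large set $U'\subseteq U$ of size in $(n/4,n/2]$ with $|\ball{\mathcal{G}}{U'}{t^*}|<2|U'|$. To get that size bound, pick $U'$ so that $|V\setminus\overline{\mathcal{B}}(S)|<2|U'|$, which is possible when $|U|<n$ by choosing $|U'|=\min(\lfloor n/2\rfloor,|U|)$ with $|U|\le n-|S|$. Then $\ball{\mathcal{G}}{U'}{t^*}\subseteq V\setminus S$, so $|\ball{\mathcal{G}}{U'}{t^*}|\le n-2<2\lfloor n/2\rfloor$ only in boundary cases. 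If the counting fails, I would instead refine the choice of $U'$ by iterating over the times at which $\overline{\mathcal{B}}(S)$ grows.
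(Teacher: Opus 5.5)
\textbf{Verdict: same approach as the paper, but you wrongly dismiss the final step, so the proposal stops short of a proof.}

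Your $\overline{\mathcal{B}}(S)\setminus S$ is the paper's set $T$ of outside vertices that reach $S$ before $t^*$. Your $U'$ of size $\lfloor n/2\rfloor$, disjoint from $\overline{\mathcal{B}}(S)$, is the paper's $V\setminus U$. You stop one line short and then argue yourself out of the proof.

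Two claims are wrong: that $\mathcal{B}_{\mathcal{G}}(U',t^*)\subseteq V\setminus S$ ``alone does not contradict doubling'', and that $n-2<2\lfloor n/2\rfloor$ holds ``only in boundary cases''. Since $2\le|S|\le n/4$ forces $n\ge 8$, we have $n/4<\lfloor n/2\rfloor\le n/2$. So $U'$ is one of the sets over which $\textbf{Double}^{+}(\mathcal{G})$ is taken, and $\text{Double}_{\mathcal{G}}(U')\le t^*$. Because balls grow with $t$, this gives $|\mathcal{B}_{\mathcal{G}}(U',t^*)|\ge 2\lfloor n/2\rfloor\ge n-1$. You already showed this ball lies in $V\setminus S$, which has at most $n-|S|\le n-2$ vertices, and $n-2<n-1\le 2\lfloor n/2\rfloor$ for every $n$. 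That is the whole contradiction.

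This is exactly how the paper finishes: a doubled set of size $\lfloor n/2\rfloor$ misses at most one vertex, so it hits $S$ because $|S|\ge 2$, contradicting the definition of $T$. The closure property of $U$ you then pursue is unnecessary. It is also false in general, since reachability does not compose across times: the ball of $U'$ may enter $\overline{\mathcal{B}}(S)\setminus S$, and it only has to miss $S$.

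\textbf{The label at time $t^*$.} None of your suggested treatments works. Distinctness of labels does not help, because $t^*$ is typically itself the time of a label, namely the one completing some large set's doubling. The map $R_{t^*}$ leaves that label fixed. So a path in $\mathcal{G}$ of length at least two whose last label is $t^*$ does not reverse into a path of $\mathcal{R}(\mathcal{G},t^*)$ arriving by $t^*$. Your inclusion $\mathcal{B}_{\mathcal{G}}(U',t^*)\subseteq V\setminus S$ needs exactly such paths to be accounted for.

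The literal statement can in fact fail. Take integer labels, $n=9$ and $S=\{s_1,s_2\}$. Let the other seven vertices be mutually reachable using labels below $t^*$. Let a single edge $s_1w$ at time $t^*$ be the only label up to $t^*$ on an edge meeting $S$. Then:
\begin{itemize}
\item every $4$-subset of $V\setminus S$ doubles exactly at $t^*$ (it needs $8$ vertices, i.e.\ must reach $s_1$);
\item every other set of size $3$ or $4$ doubles earlier;
\item hence $t^*=\textbf{Double}^{+}(\mathcal{G})$;
\item in $\mathcal{R}(\mathcal{G},t^*)$, the set $S$ reaches only $w$ by time $t^*$, so its ball has $3<4$ vertices.
\end{itemize}

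The paper's proof has the same slip. It defines $T$ by reaching $S$ \emph{before} $D$, but then uses a path arriving \emph{by} $D$. So this is not where you depart from the paper. The repair for both is to reflect all labels in $(0,t^*]$, for instance $l\mapsto t^*+1-l$ for integer labels.
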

\begin{proof}
    For ease of notation we will denote $D=\textbf{Double}^{+}(\mathcal{G})$ and $\mathcal{H}=\mathcal{R}(\mathcal{G},\textbf{Double}^{+}(\mathcal{G}))$. 
    Now for the sake of contradiction, assume that there exists a set $S$ with $2\leq |S|\leq \frac{n}{4}$ such that $\text{Double}_\mathcal{H}(S)\geq D$.
    Consider that for any two vertices $u$ and $v$ if there exists a path from $u$ to $v$ in $\mathcal{G}$ arriving before time $D$ there must exist a path from $v$ to $u$ in $\mathcal{H}$ arriving by time $D$.
    Thus, we must have that $T$ the set of vertices outside of $S$ that can reach $S$ in $\mathcal{G}$ before time $D$ has size at most $|S|-1$.
    Furthermore, since $|S|\leq \frac{n}{4}$, we must have that $|S \cup T|\leq \frac{n}{2}$ and so we can take $U\subset V$ such that $|U|=\lceil \frac{n}{2}\rceil$ and $S \cup T\subseteq U$.
    Now by definition $V\setminus U$ has doubling time at most $D$ in $\mathcal{G}$.
    Since $V \setminus U$ has size $\lfloor\frac{n}{2}\rfloor$, this implies that at most one vertex cannot be reached from $V\setminus U$ via a time-respecting path before time $D$ in $\mathcal{G}$.
    So, in particular, there must be at least one vertex from $V\setminus U$ with a path to some member of $S$ in $\mathcal{G}$ arriving by time $D$.
    This contradicts the definition of $T$ and so no such set $S$ can exist.
    Therefore every set with size at least $2$ and at most $\frac{n}{4}$ must have a doubling time in $\mathcal{H}$ of at most $D$.
\end{proof}
While stated for arbitrary temporal graphs, given the equal probability of sampling both $\mathcal{G}$ and $\mathcal{R}(\mathcal{G},\textbf{Double}^{+}(\mathcal{G}))$ in our setting, this turns any upper bound on the doubling time of large sets into an upper bound on the doubling time of small sets.
Our second key preliminary result is an upper bound on the time taken for the reachability ball centered on any specific set of vertices to grow to a certain size.
We obtain the result by a carefully constructed coupling between the total time between expansions of the reachability ball and a sum of geometric random variables, but defer the somewhat laborious calculations.
Denote by $\mathbf{H}(a,b)=H_{n-a}+H_{b-1}-H_{a-1}-H_{n-b}$, where $H_a$ is the $a$th harmonic number with the convention that $H_0=0$.
\begin{restatable}{lemma}{LemGrowthUpper}
\label{lem: growth upper}
    For any $\alpha\geq0$, let $\mathcal{H}_\alpha=\mathcal{G}^O_{[\alpha n \log{n},\infty)}$, where $\mathcal{G}=(V,E,\lambda)$ is a sample from $\Gamma_{[n]}(\psi)$ with $\psi$ a degenerate distribution. Then for any $S \subseteq V$ chosen with knowledge of $\mathcal{G}^O_{[0,\alpha n \log{n}]}$, $|S|<k\leq n$,  $\beta>0$ and $0<\gamma<1$ small, we have the following.\\
    If $|S|(n-|S|)<(k-1)(n-k+1)$,
    \[\Prob{\left[\ball{\mathcal{H}_\alpha}{S}{\frac{(1+\beta) n \mathbf{H}(|S|,k)}{2(1-\gamma)}+\alpha n \log{n}}<k\right]}\leq \exp\left[-\frac{|S|(n-|S|)}{n}\mathbf{H}(|S|,k)(\beta-\log{(1+\beta)})\right]+o(n^{-1}).\]
    Otherwise,
    \[\Prob{\left[\ball{\mathcal{H}_\alpha}{S}{\frac{(1+\beta) n \mathbf{H}(|S|,k)}{2(1-\gamma)}+\alpha n\log{n}}<k\right]}\leq \exp\left[-\frac{(k-1)(n-k+1)}{n}\mathbf{H}(|S|,k)(\beta-\log{(1+\beta)})\right]+o(n^{-1}).\]
\end{restatable}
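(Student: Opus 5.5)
We track the reachability ball $B_t=\ball{\mathcal{H}_\alpha}{S}{t}$ as it grows from $|S|$ to $k$ vertices, and bound the waiting time for each one-vertex increase by a geometric random variable. In the order discretization there is exactly one label at each time step. At the step when the ball has size $j$, the next label lies on the cut $(B_t, V\setminus B_t)$ with probability roughly $\frac{r j(n-j)}{r\binom{n}{2}}=\frac{2j(n-j)}{n(n-1)}$. If the label falls on this cut, the ball increases by at least one vertex, because every label in $\mathcal{H}_\alpha$ after the current time extends a time-respecting path from $B_t$.

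The probability is only approximate, for two reasons. Labels already used before $\alpha n\log n$, and labels already consumed inside $\mathcal{H}_\alpha$, deplete the cut. The cut is also chosen adaptively, depending on the realization so far. Both effects are handled by \cref{lem: no small cuts}. With probability $1-o(n^{-1})$, every cut $(S',V\setminus S')$ has lost at most a $\gamma$-fraction of its $r|S'|(n-|S'|)$ labels during the first $O(n\log n)$ steps. The total horizon $\frac{(1+\beta)n\mathbf{H}}{2(1-\gamma)}+\alpha n\log n$ is $O(n\log n)$, so the lemma covers the whole process. On this good event, at every step the conditional probability that the next label crosses the current cut is at least $p_j:=\frac{2(1-\gamma)j(n-j)}{n^2}$, up to lower-order corrections. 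The corrections come from the overall fraction of labels that remain, and are absorbed into $\gamma$.

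This yields a coupling in which the time to grow from size $|S|$ to size $k$ is stochastically dominated by $\sum_{j=|S|}^{k-1}G_j$, where the $G_j\sim\Geom(p_j)$ are independent; the $o(n^{-1})$ failure probability of the good event is added separately. To build the coupling, reveal labels one at a time and use a uniform variable to thin each crossing event down to probability exactly $p_j$. Since the ball can jump by more than one vertex, the sum can only overcount. By partial fractions $\frac{1}{j(n-j)}=\frac1n(\frac1j+\frac1{n-j})$, the mean of the sum is
\[\sum_{j=|S|}^{k-1}\frac{n^2}{2(1-\gamma)j(n-j)}=\frac{n}{2(1-\gamma)}\mathbf{H}(|S|,k).\]

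The remaining step is a Chernoff-type tail bound for the sum of geometrics at $(1+\beta)$ times its mean. Use the standard bound $\Prob[\sum G_j\ge(1+\beta)\mu]\le\exp(-p_{\min}\mu(\beta-\log(1+\beta)))$, where $p_{\min}$ is the smallest success probability, obtained by the usual exponential-moment argument (see Janson's tail bounds for geometric sums). The function $j(n-j)$ is unimodal, so its minimum over $j\in\{|S|,\dots,k-1\}$ is attained at an endpoint. This gives the two cases: the minimum is $|S|(n-|S|)$ when $|S|(n-|S|)<(k-1)(n-k+1)$, and $(k-1)(n-k+1)$ otherwise. In either case
\[p_{\min}\mu=\frac{2(1-\gamma)m}{n^2}\cdot\frac{n\mathbf{H}}{2(1-\gamma)}=\frac{m}{n}\mathbf{H},\]
with $m$ the relevant endpoint product, which matches the stated exponent.

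The main obstacle is making the coupling rigorous under adaptivity and depletion. The conditional crossing probability must be shown to be at least $p_j$ uniformly over histories on the good event of \cref{lem: no small cuts}. The subtle points are that the total number of remaining labels also shrinks, and that $S$ may depend on $\mathcal{G}^O_{[0,\alpha n\log n]}$. Both are handled because \cref{lem: no small cuts} is a statement about all cuts simultaneously.
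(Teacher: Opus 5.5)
Your proposal is correct and follows essentially the same route as the paper. Both arguments couple the inter-growth times $B_{i+1}-B_i$ phase by phase, via thinning, with independent geometrics of parameter $\approx 2(1-\gamma)j(n-j)/n^2$. In both, \cref{lem: no small cuts} absorbs label depletion and the adaptive choice of $S$ and of the current cut, and Janson's tail bound finishes, with $\min j(n-j)$ attained at an endpoint.

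The only differences are cosmetic. You stay inside the horizon of \cref{lem: no small cuts} by running the coupling only up to the deterministic $O(n\log n)$ deadline, whereas the paper shows $\mathbf{X}\le 10n\log n$ w.h.p. and aborts phases beyond $(\alpha+10)n\log n$. Also, in the order discretization the ball grows by at most one vertex per step, so your remark about jumps is unnecessary.
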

With these tools established, we may give our upper bound on the doubling time in $\mathcal{G}^O$. 
\begin{restatable}{lemma}{LemDoubleUpper}
  \label{lem: order double upper} Let $\mathcal{G}=(V,E,\lambda)$ be a sample from $\Gamma_{[n]}(\psi)$, for $\psi$ a degenerate distribution. Then for any $\beta>0$, we have that 
  $\textbf{Double}(\mathcal{G}^O)\leq (1+\beta) n \log{n}$.\medskip
\end{restatable}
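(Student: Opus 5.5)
We first bound $\textbf{Double}^{+}(\mathcal{G}^O)$ and then use \cref{lem: Timewalk} to transfer the bound to small sets. The target is to show that with high probability every $S$ with $\frac{n}{4}<|S|\leq\frac{n}{2}$ has doubling time at most $(1+\beta)n\log n$ in $\mathcal{G}^O$.

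For a fixed large $S$ we apply \cref{lem: growth upper} with $\alpha=0$ and $k=2|S|$. Because $|S|$ and $k$ are both $\Theta(n)$, the quantity $\mathbf{H}(|S|,2|S|)$ is $O(1)$. So the time bound $\frac{(1+\beta')n\mathbf{H}}{2(1-\gamma)}$ is $O(n)=o(n\log n)$, while the failure exponent $\frac{|S|(n-|S|)}{n}\mathbf{H}(\beta'-\log(1+\beta'))$ is $\Theta(n)$. A union bound over the at most $2^n$ large sets is not immediate, since it needs the constant in the exponent to beat $\log 2$. We can arrange this by choosing $\beta'$ as a large constant, because $\beta'-\log(1+\beta')$ grows linearly while the time cost is still $O(n)\ll n\log n$.

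The $o(n^{-1})$ additive error in \cref{lem: growth upper} must not be multiplied by $2^n$. I would handle this by reading that error as coming from the single global event in \cref{lem: no small cuts}. That event is common to all $S$, so it is paid only once, and the union bound applies only to the exponential term. Doing this requires checking the structure of the coupling proof of \cref{lem: growth upper}, and I expect it to be the main obstacle. If the $2^n$ count still cannot be absorbed, the fallback is to grow the ball in stages: first reach size $\frac{n}{2}+1$, where only $\binom{n}{|S|}$ sets matter, and run the union bound with $\beta'$ chosen so that $\beta'-\log(1+\beta')$ exceeds the entropy of $\binom{n}{|S|}/n$ times the constants.

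The result is that $D:=\textbf{Double}^{+}(\mathcal{G}^O)\leq Cn=o(n\log n)$ with high probability.

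For small sets, \cref{lem: Timewalk} gives $\textbf{Double}^{-}(\mathcal{R}(\mathcal{G}^O,D))\leq D$. This is not yet directly useful, because we need the bound for $\mathcal{G}^O$ itself rather than for its reversal. Reversibility of the model says $\mathcal{R}(\mathcal{G}^O,t)$ has the same law as $\mathcal{G}^O$ for each fixed $t$. Since $D$ is random, we replace it by the deterministic $t_0=Cn$. On the high-probability event $D\leq t_0$, reversal up to $t_0$ still makes the argument of \cref{lem: Timewalk} go through: the proof only uses that large sets double by time $t_0$. Hence $\textbf{Double}^{-}(\mathcal{R}(\mathcal{G}^O,t_0))\leq t_0$ with high probability, and by equality in law the same holds for $\mathcal{G}^O$.

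This leaves two cases not covered by $\textbf{Double}^{-}$ and $\textbf{Double}^{+}$. For singletons $|S|=1$, the doubling time is the first label time at that vertex. With $r$ labels per edge, a vertex has about $rn$ labels among $\frac{rn^2}{2}$ total, so its first label falls at a time distributed roughly as $\mathrm{Geo}(\frac{2}{n})$. A union bound over the $n$ vertices shows all of them receive an edge by time $\frac{(1+\beta)}{2}n\log n\leq(1+\beta)n\log n$ with high probability, using \cref{lem: no small cuts} for uniformity. Alternatively, \cref{lem: growth upper} with $|S|=1$, $k=2$ gives $\mathbf{H}=H_{n-1}-H_{n-2}+H_1\approx 1$, whose exponent is too weak, so the direct geometric computation is the one to use. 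Combining the three ranges gives $\textbf{Double}(\mathcal{G}^O)\leq(1+\beta)n\log n$ with high probability.
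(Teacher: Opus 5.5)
\textbf{The gap.} The step that fails is the claim that $\mathbf{H}(|S|,2|S|)=O(1)$ whenever $|S|=\Theta(n)$, and with it the conclusion $\textbf{Double}^{+}(\mathcal{G}^O)\leq Cn$.

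Recall $\mathbf{H}(a,b)=H_{n-a}+H_{b-1}-H_{a-1}-H_{n-b}$. This is $O(1)$ only when $n-2|S|=\Theta(n)$. For $|S|=\frac n2$ you get $\mathbf{H}(\frac n2,n)=H_{n-1}+\frac 2n\approx\log n$. Also $(k-1)(n-k+1)=n-1<|S|(n-|S|)$, so the exponent in \cref{lem: growth upper} is only about $(\beta'-\log(1+\beta'))\log n$. The per-set failure probability is therefore polynomial, and taking $\beta'$ large now inflates a $\Theta(n\log n)$ time bound rather than an $O(n)$ one.

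This is not a weakness of the lemma but the truth. A set of size $\frac n2$ doubles only once every vertex outside it has received a label, so any vertex still without a label blocks every half-size set avoiding it. Hence $\textbf{Double}^{+}(\mathcal{G}^O)$ is at least the coupon-collector time $\approx\frac n2\log n$. \cref{lem: order double lower}, whose proof uses exactly the sets of size $\frac n2$, shows it is in fact about $n\log n$, so your intermediate claim contradicts the matching lower bound.

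More fundamentally, no union bound over individual large sets can work at any time $t=O(n\log n)$. For a fixed $S$ with $|S|=\frac n2$, the probability that some vertex outside $S$ has no label by time $t$ is roughly $\frac n2e^{-2t/n}$, which is only polynomially small, while there are $\binom{n}{n/2}$ such sets. Growing in stages does not help, since the last stage is precisely this bottleneck. Treating the $o(n^{-1})$ term as one global event does not address it either.

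\textbf{The missing idea.} You need a criterion that controls all large sets at once through per-vertex events. The paper uses the following: if by time $T$ every vertex can be reached from at least $\frac{3n}{4}$ other vertices, then by pigeonhole every $S$ with $|S|>\frac n4$ meets every vertex's in-reachability set. So $S$ reaches all of $V$, and $\textbf{Double}^{+}(\mathcal{G}^O)\leq T$.

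To bound $T$, the paper proceeds in two stages:
\begin{enumerate}
\item Apply \cref{lem: growth upper} to singletons up to size $\frac{3n}{4}+1$; time reversal turns in-reachability into a forward ball. Together with Markov's inequality, this shows all but $n^{1-\epsilon+\log(1+\epsilon)}\log n$ vertices qualify by time about $\frac n2\log n$.
\item Apply the lemma once more to the single set $A$ of qualifying vertices, which the lemma allows to depend on the prefix. This reaches the stragglers in about $\frac n2\mathbf{H}(|A|,n)\approx\frac n2(1-\epsilon+\log(1+\epsilon))\log n$ further steps.
\end{enumerate}
The total is $(1+\beta)n\log n$. Only $n$ events enter the union bound, so the additive $o(n^{-1})$ terms cost $o(1)$ and your concern about the $2^n$ multiplier disappears.

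The rest of your plan survives once $Cn$ is replaced by $t_0=(1+\beta)n\log n$. Replacing the random $D$ by a deterministic $t_0$ in the reversal is correct, since the proof of \cref{lem: Timewalk} only needs large sets to double by $t_0$. It is also the right way to invoke equality in law. Your separate treatment of singletons is correct too, though it is implied by the per-vertex event: being reached by another vertex requires an incident label.
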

The proof of this result consists largely of applying \cref{lem: growth upper}, along with a concentration inequality and \cref{lem: Timewalk} in order to deal with the case of small sets. More specifically, observe that once all vertices can be reached by at least $\frac{3n}{4}$ other vertices, every set of size at least $\frac{n}{4}$ can reach all vertices. Thus, once this has occurred the reachability ball of all large sets of vertices must have doubled in size. In order to bound this time, we first show via an application of \cref{lem: growth upper} and Markov's inequality that almost all vertices can be reached by $\frac{3n}{4}$ other vertices by roughly time $n\log{n}$. From a second application of \cref{lem: growth upper} we find that the few remaining vertices are reached from the already well connected vertices within only approximately a further $n\log{n}$ rounds. The more general result then follows from using \cref{lem: Timewalk} to extend the bound to cover all smaller sets of vertices.
\begin{proof}
    Denote by $T$ the first time step such that all vertices can be reached by at least $\frac{3n}{4}$ other vertices.
    For a fixed $\gamma,\epsilon>0$ small, denote by $X$ the number of vertices that are not reachable by $\frac{3n}{4}$ other vertices by time $T_{1}=\frac{(1+\epsilon) n \mathbf{H}(1,\frac{3n}{4}+1)}{2(1-\gamma)}$. It follows from \cref{lem: growth upper} and the linearity of expectation that,
    \[\Exp[X]\leq n\cdot\left(\exp\left[-\frac{n-1}{n}\mathbf{H}(1,\frac{3n+4}{4})(\epsilon-\log{(1+\epsilon)})\right]+o(n^{-1})\right).\]
    For n sufficiently large and $\epsilon$ sufficiently small,
    \[\leq n^{1-\epsilon+\log{(1+\epsilon)}}.\]
    Now applying Markov's inequality, we have that,
    \[\Prob{[X>n^{1-\epsilon+\log{1+\epsilon}}\log{n}]}\leq \frac{1}{\log{n}}.\]
    Thus with probability $1-\frac{1}{\log{n}}$ we have a set of at least $n-n^{1-\epsilon+\log{1+\epsilon}}\log{n}$ vertices that can be reached by at least $\frac{3n}{4}$ other vertices which we shall denote by $A$.
    Trivially, $T$ must occur at the very latest when every member of $V\setminus A$ is reachable by a member of $A$ via a path starting after $T_1$. Since $T_1=O(n\log{n})$, we know from a second application of \cref{lem: growth upper} that this occurs at the very latest by $T_1 +\frac{(1+\epsilon)n \mathbf{H}(n-n^{1-\epsilon+\log{1+\epsilon}}\log{n},n)}{2(1-\gamma)}$ with probability $1-o(1)$.
    Taking a union bound over the probability of failures we get that with probability $1-o(1)$,
    \[T<\frac{(1+\epsilon)n}{2(1-\gamma)}\left(\mathbf{H}(1,\frac{3n}{4}+1)+\mathbf{H}(n-n^{1-\epsilon+\log{1+\epsilon}}\log{n},n)\right)\]
    \[=\frac{(1+\epsilon)n}{2(1-\gamma)}\left(H_{n-1}+H_{\frac{3n}{4}}-H_{\frac{n}{4}-1}+H_{n^{1-\epsilon+\log{1+\epsilon}}\log{n}}+H_{n-1}-H_{n-n^{1-\epsilon+\log{1+\epsilon}}\log{n}}\right).\]
    For any $\alpha>0$ small, and all $n$ large enough, we can apply \cref{lem: Harmonics},
    \[\leq \frac{(1+\epsilon)n\log{n}}{2(1-\gamma)}\left(\left(4-\epsilon+\log{(1+\epsilon)}+\frac{\log\log{n}}{\log{n}}\right)(1+\alpha)-2(1-\alpha)\right)\]
    \[= \frac{(1+\epsilon)n\log{n}}{2(1-\gamma)}\left(2+2\alpha +(1+\alpha)\left(-\epsilon+\log{(1+\epsilon)}+\frac{\log\log{n}}{\log{n}}\right)\right).\]
    % \[\leq \frac{(1+\epsilon)n}{2(1-\gamma)}\left(2(1+\alpha)\log{n}-(1-\alpha^2)\log{n}+(1+\alpha)\log{n}-(1-\alpha)\log{n}+(1+\alpha)(1-\epsilon+\log{(1+\epsilon)})\log{n}\right)\]
    % \[=\frac{n\log{n}}{2(1-\gamma)}(1+\epsilon)\left[\left(4-\epsilon+\log{(1+\epsilon)}\right)(1+\alpha)-(2+\alpha)(1-\alpha)\right]\]
    For any $\beta>0$, taking $\alpha=\gamma=\epsilon>0$ sufficiently small, we find,
    \[\leq (1+\beta)n\log{n}\]
    Thus, we find that, with probability $1-o(1)$, all vertices can be reached by at least $\frac{3n}{4}$ other vertices by time $(1+\beta)n\log{n}$. 
    Therefore, we must have that by time $(1+\beta)n\log{n}$ all sets of size at least $\frac{n}{4}$ can reach all vertices.
    Thus we have that with probability $1-o(1)$, $\textbf{Double}^+(\mathcal{G})\leq (1+\beta)n\log{n}$ and by the time reversible nature of the distribution that $\textbf{Double}^+(\mathcal{R}(\mathcal{G},(1+\beta)n\log{n}))\leq (1+\beta)n\log{n}$ with the same probability. 
    In which case it follows from \cref{lem: Timewalk} that $\textbf{Double}^-(\mathcal{G})\leq (1+\beta)n\log{n}$.
    Thus taking a union bound over the failure probabilities, we find that $\textbf{Double}(\mathcal{G})\leq (1+\beta)n\log{n}$
\end{proof}
\subsection{Lower Bounds}
    In this section we present a lower bound to match the upper bound of \cref{lem: order double upper}.
    However, this time, our lives will be significantly easier, as it suffices to consider only the doubling time of sets containing precisely $\frac{n}{2}$ vertices.
    As before, we need a bound on the doubling time of any specific set, and so make use of the following result analogous to \cref{lem: growth upper}, the proof of which we also defer.
    \begin{restatable}{lemma}{LemGrowthLower}
        \label{lem: growth lower}
        Let $\mathcal{G}=(V,E,\lambda)$ be a sample from $\Gamma_{[n]}(\psi)$, for $\psi$ a degenerate distribution, and let $S\subseteq V$ be a set of vertices such that $|S|=\frac{n}{2}$. Then for any $\alpha, \beta>0$ sufficiently small we have that,
        \[\Prob\left[\text{Double}_{\mathcal{G}^O}(S)<\frac{(1-\alpha)n\log{n}}{2(1+\beta)}\right]=O(n^{\frac{\log{2}-1}{5}})+O(n^{2(1-\beta)(\alpha+\log{(1-\alpha)})}).\]
    \end{restatable}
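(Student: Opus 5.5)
We would prove \cref{lem: growth lower} by showing that the reachability ball of $S$ must cover $n/2$ additional vertices, and that the early stages of this growth are forced to be slow. In $\mathcal{G}^O$ exactly one edge label appears per time step. Conditional on the past, that label lies on a uniformly random pair among all $\binom{n}{2}$ pairs (with replacement, since $\psi$ is degenerate and labels are i.i.d.\ uniform, up to an $o(1)$ correction for the finitely many labels per edge). When the current ball $B$ has size $b$, the ball grows at a given step only if the label falls on an edge from $B$ to $V\setminus B$. This has probability at most $\frac{b(n-b)}{\binom{n}{2}}(1+o(1))$, where the $o(1)$ term absorbs sampling without replacement via \cref{lem: no small cuts} over the window $[0,n\log n]$. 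When the ball grows, it gains exactly one vertex. Hence the time for the ball to go from size $b$ to $b+1$ stochastically dominates $\Geom(p_b)$ with $p_b=\frac{2b(n-b)}{n(n-1)}(1+\beta)$.

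Summing over $b=n/2,\dots,n-1$ gives the coupling: $\text{Double}_{\mathcal{G}^O}(S)$ dominates $\sum_b \Geom(p_b)$, a sum of independent geometrics, with mean
\[\sum_{b}\frac{1}{p_b}=\frac{n(n-1)}{2(1+\beta)}\sum_{b=n/2}^{n-1}\frac{1}{b(n-b)}=\frac{n-1}{2(1+\beta)}\left(H_{n/2}+H_{n-1}-H_{n/2-1}\right)\approx\frac{n\log n}{2(1+\beta)}.\]
This mean comes essentially from the last few stages $b\approx n-j$ with $j$ small, each stage contributing about $\frac{n}{2j}$. We then need a lower-tail bound showing that this sum is at least $(1-\alpha)$ times its mean.

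For the tail bound we would use the standard Chernoff bound for sums of geometrics (Janson's inequality): for $X=\sum\Geom(p_i)$ with $\mu=\Exp X$ and $p_*=\min p_i$,
\[\Prob[X\le(1-\alpha)\mu]\le\exp\big(-p_*\mu(\alpha+\log(1-\alpha))\cdot(-1)\big),\]
that is, $\exp(-p_*\mu(-\alpha-\log(1-\alpha)))$. Here $p_*\approx 2(1+\beta)/n$ comes from the last stage, and $\mu\approx n\log n/(2(1+\beta))$, so $p_*\mu\approx\log n$ up to factors. This yields a probability of the form $n^{2(1-\beta)(\alpha+\log(1-\alpha))}$, matching the second term after bookkeeping of the constants.

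The first error term, $O(n^{(\log 2-1)/5})$, we expect to come from the coupling failure event. The dominating geometric model assumes that crossing edges receive labels at rate at most $(1+\beta)b(n-b)/\binom n2$. Controlling this needs the label-count concentration of \cref{lem: no small cuts} or a Chernoff bound, of the form $(e^{\epsilon}/(1+\epsilon)^{1+\epsilon})^{\Theta(\cdot)}$ with $\epsilon=1$, giving the $(\log 2-1)$ exponent. A small-stage cutoff (for example, handling separately the initial stages or vertices reached early) explains the $1/5$.

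The main obstacle is making the geometric domination rigorous despite the dependence. The set of unused crossing labels depends on the history, and since each edge has $r$ labels, a crossing edge may have already appeared. We would argue conditionally at each step that the next label in order is uniform over the remaining labels. The fraction of remaining labels lying in the current cut is at most $(1+\beta)$ times the cut fraction, except on a bad event controlled by \cref{lem: no small cuts} applied up to time $n\log n$. Since we only need domination, not equality, this suffices.
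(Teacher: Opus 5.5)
Your plan is the paper's proof. It uses the same stage-by-stage domination of $B_{b+1}-B_b$ by independent $Y_b\sim\Geom(p_b)$, with $p_b=(1+\beta)\frac{2b(n-b)}{n(n-1)}$ for $b=\frac n2,\dots,n-1$. It has the same mean $\mu=\frac{n-1}{2(1+\beta)}\bigl(H_{n-1}+\frac2n\bigr)$ and applies the same lower tail of \cref{lem: Janson}. Two steps, however, do not work as you describe them.

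\textbf{The exponent.} No bookkeeping produces the factor $2$.
With $p_*=p_{n-1}=\frac{2(1+\beta)}{n}$ you get $p_*\mu=\frac{n-1}{n}\bigl(H_{n-1}+\frac2n\bigr)=\log n+O(1)$. The threshold satisfies $\frac{(1-\alpha)n\log n}{2(1+\beta)}\le(1-\alpha)\frac{n}{n-1}\mu$, so Janson's bound yields only $O(n^{\alpha+\log(1-\alpha)})$. Because $\alpha+\log(1-\alpha)<0$, this is weaker than the claimed $O(n^{2(1-\beta)(\alpha+\log(1-\alpha))})$ for small $\beta$.

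The paper's proof has the identical slip: its product $\frac{2(1+\beta)}{n}\cdot\frac{(1-\beta)n\log n}{2(1+\beta)}$ equals $(1-\beta)\log n$, not $2(1-\beta)\log n$. This is harmless for \cref{lem: order double lower}, which needs only some fixed negative power of $n$. The statement is also true, because the $p_*$-based Janson bound is very loose here.

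To prove it properly, keep your coupling but estimate the sum differently:
\begin{itemize}
\item Put $p=\frac{2(1+\beta)}{n}$ and $m=\lfloor 1/p\rfloor$. For $j\le m$ we have $p_{n-j}\le jp\le 1$, so $\sum_b Y_b$ dominates $\sum_{j\le m}\Geom(jp)$.
\item That sum is the time needed to hit all of $m$ fixed targets when each step hits each target with probability $p$, disjointly.
\item By negative dependence of multinomial occupancy events, the probability that all targets are hit by time $t^*=\frac{(1-\alpha)n\log n}{2(1+\beta)}$ is at most $\bigl(1-(1-p)^{t^*}\bigr)^m\le\exp\bigl(-\Omega(n^{\alpha})\bigr)$.
\end{itemize}

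\textbf{The first error term.} Your account of its origin is off, and you do not need it at all.
In the paper it is the coupling's failure probability: the probability that $S$ has not doubled by time $2n\log n$. This is bounded through \cref{lem: growth upper}, an upper tail of a geometric sum at twice its mean, which is where $1-\log 2$ comes from. It is not a Chernoff bound on label counts.

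Your appeal to \cref{lem: no small cuts} also points the wrong way. That lemma caps the labels already used from a cut, i.e.\ it lower-bounds the remaining crossing labels. This is what the \emph{upper} bound on doubling time needs. Domination from below needs an upper bound on the crossing probability, and that bound is trivial:
\begin{itemize}
\item Given the past, the next label is uniform over the $r\binom n2-t$ remaining labels, of which at most $r\,b(n-b)$ cross the current cut.
\item Hence the crossing probability is at most $\frac{b(n-b)}{\binom n2-t/r}\le p_b$, deterministically for every $t\le\frac{\beta}{1+\beta}r\binom n2$.
\item The event $\{\text{Double}_{\mathcal G^O}(S)<t^*\}$ involves only steps before $t^*=O(n\log n)$, so the coupling cannot fail on it.
\end{itemize}
The first error term can therefore simply be dropped.
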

    With that established we may now give the main result of the section, a lower bound on the doubling time of $\mathcal{G}^O$.
    \begin{restatable}{lemma}{LemOrderDoubleLower}
        \label{lem: order double lower}
        Let $\mathcal{G}=(V,E,\lambda)$ be a sample from $\Gamma_{[n]}(\psi)$, with $\psi$ a degenerate distribution, then
        $\textbf{Double}(\mathcal{G}^O)\geq (1-\epsilon)n\log{n}$.\medskip
    \end{restatable}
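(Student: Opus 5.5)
The idea is to exhibit, with high probability, one set $S$ with $|S|=\frac{n}{2}$ that has not doubled by time $(1-\epsilon)n\log n$ in $\mathcal{G}^O$. For such an $S$, doubling means reaching \emph{every} vertex. So it suffices to find a vertex $v$ whose \emph{in-ball} $I_v(T)$ at time $T=(1-\epsilon)n\log n$ has fewer than $\frac{n}{2}$ vertices. Here $I_v(T)$ is the set of vertices with a time-respecting path to $v$ arriving by time $T$. Given such a $v$, take $S$ to be any $\frac{n}{2}$ vertices outside $I_v(T)\cup\{v\}$. Then $v\notin\ball{\mathcal{G}^O}{S}{T}$, so $\text{Double}_{\mathcal{G}^O}(S)>T$.

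\textbf{Step 1 (a quiet vertex near the end).} Set $t_1=(1-\epsilon/2)\frac{n}{2}\log n$ and consider the window $W=[T-t_1,T]$. Each label is incident to a fixed vertex with probability $\frac{2}{n}$. Hence the expected number of vertices with no label in $W$ is about $n(1-\frac{2}{n})^{t_1}=n^{\epsilon/2}\to\infty$. A second moment computation, of the same type as the source/sink calculation of \cite{SharpThresholds}, then shows that with high probability some vertex $v$ has no incident label in $W$. For such a $v$ we have $I_v(T)=I_v(T-t_1)$, since any path into $v$ must use a label at $v$ dated before $T-t_1$.

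\textbf{Step 2 (slow backward growth before the window).} By the time reversibility used in \cref{lem: Timewalk}, the in-ball $I_v(T-t_1)$ is distributed as a forward reachability ball from $\{v\}$ over $T-t_1\approx(\frac{1}{2}-\frac{3\epsilon}{4})n\log n$ labels. I will prove a singleton analogue of \cref{lem: growth lower} by the same geometric coupling. When the ball has size $j$, the number of labels until it grows dominates a geometric variable with success probability about $\frac{2j(n-j)}{n^2}$, using \cref{lem: no small cuts} to control cut densities. Summing from $j=1$ to $\frac{n}{2}$ gives mean $\frac{n}{2}\mathbf{H}(1,\frac{n}{2})\approx\frac{n}{2}\log n$. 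A lower-tail Chernoff bound for sums of geometric variables then gives $|I_v(T-t_1)|<\frac{n}{2}$ with probability $1-n^{-c\epsilon}$ for a fixed vertex. The dominant contribution comes from the first few steps, which alone cost roughly $\frac{n}{2}\log n$.

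\textbf{Step 3 (decoupling the choice of $v$).} The main obstacle is that $v$ is chosen from the labels in $W$, while Step 2 needs a bound on the labels before $W$. In the order discretization, the label sequence is a uniformly random arrangement of the $rn(n-1)/2$ labels. Conditioning on which labels occupy $W$ leaves the earlier labels uniformly arranged among the remaining multiset, whose edge multiplicities differ from the unconditioned ones only by $O(\log n)$ per vertex. I would first check that the coupling in Step 2 tolerates this perturbation uniformly over the conditioning. Then a union bound over the (random) choice of $v$ is unnecessary: pick $v$ as, say, the smallest-index quiet vertex, condition on $W$, and apply Step 2 to that single $v$. Combining Steps 1--3 gives $\textbf{Double}(\mathcal{G}^O)\geq\text{Double}_{\mathcal{G}^O}(S)>(1-\epsilon)n\log n$ with probability $1-o(1)$.
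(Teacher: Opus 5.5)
Your argument is correct. It rests on the same picture as the paper's proof: a vertex that some half-size set cannot reach around time $\frac n2\log n$, and that then receives no label for roughly another $\frac n2\log n$ steps. You build it in the opposite order and with different tools.

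The paper works forward. \cref{lem: growth lower} plus Markov over all $\binom{n}{n/2}$ half-sets shows that almost all of them are still blocked near $\frac n2\log n$. A counting step turns this into $dn$ blocker vertices, and a drift argument shows one of them stays edge-free for the next $\approx\frac n2\log n$ steps. You instead find a vertex quiet in the final window by a second-moment count, then condition on the window's contents and bound that one vertex's in-ball by backward growth from size $1$ to $\frac n2$.

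What your route buys is simplicity. It needs no Markov bound over exponentially many sets and no drift theorem. In particular, it avoids the paper's conversion from ``almost all half-sets blocked'' to ``$dn$ blockers''. A vertex lies outside $\binom{n-1}{n/2}=\frac12\binom{n}{n/2}$ half-sets, not $\frac1n\binom{n}{n/2}$ as written there. So set-counting alone forces only $\Omega(\log n)$ blockers, and the linear count really comes from exactly the single-vertex in-ball estimate you prove. The paper's route, in return, stays in forward time, reuses \cref{lem: growth lower}, and gives the extra fact that a typical fixed half-set has not doubled by $\approx\frac n2\log n$.

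Two small corrections to your write-up. First, \cref{lem: no small cuts} is unnecessary in Step 2. A lower bound on growth times needs only an upper bound on the crossing probability. The remaining cut labels number at most $rj(n-j)$, and the remaining pool has at least $\frac{rn(n-1)}{2}-O(n\log n)$ labels. This bound holds verbatim under any conditioning on $W$, which is exactly why Step 3 goes through without tracking perturbed multiplicities. Second, the Janson lower tail gives failure probability $n^{-\Theta(\epsilon^2)}$ rather than $n^{-c\epsilon}$, because $p_{\min}\Exp\approx\log n$ and the rate $\alpha-1-\log\alpha$ is quadratic near $1$. This is still $o(1)$. It is the tail, not the mean, that is governed by the first step.
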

    The key notion for the proof of this result is the idea of a ``blocker vertex'', i.e. a vertex who is not yet reachable by any member of some large set of vertices. By applying \cref{lem: growth lower} and Markov's inequality, followed by a simple counting argument, we are able to show that at least a constant fraction of vertices must be blockers around time $\frac{n\log{n}}{2}$.
    Then, via drift analysis, we are able to show that at least one blocker vertex will not be included in an edge before time $(1-\epsilon)n\log{n}$.
    Given the fact that the doubling time is lower bounded by the time when the last vertex stops being a blocker, and that a vertex must be included in an edge in order to stop being a blocker, we immediately obtain the bound.
    \begin{proof}
    Let $X$ be the number of sets of vertices of size $\frac{n}{2}$ which have a doubling time of at most $\frac{(1-\alpha)n\log{n}}{2(1+\alpha)}$ for some $\alpha>0$ small. We know from \cref{lem: growth lower} (where the $O(n^{\frac{\log{2}-1}{5}})$ vanishes into the term depending on $\alpha$) and the linearity of expectation that for some constant $c>0$,
    \[\Exp[X]\leq cn^{2(1-\alpha)(\alpha+\log{(1-\alpha)})}\cdot \binom{n}{\frac{n}{2}}.\]
    Therefore by Markov's inequality it holds that,
    \[\Prob\left[X\geq cn^{2(1-\alpha)(\alpha+\log{(1-\alpha)})}\log{n}\cdot \binom{n}{\frac{n}{2}}\right]\leq \frac{1}{\log{n}}.\]
    Therefore with probability $1-\frac{1}{\log{n}}$ there are at least $(1-cn^{(1-\alpha)(\alpha+\log{(1-\alpha}))})\binom{n}{\frac{n}{2}}$ sets with a doubling time of at least $\frac{(1-\alpha)n\log{n}}{2(1+\alpha)}$.
    
    Now, a set of size $\frac{n}{2}$ has not yet doubled its reachability ball, if and only if, there is some vertex outside of the set that it cannot yet reach.
    We call such an unreachable vertex a ``blocker'' vertex for that set.
    Furthermore, any individual blocker vertex can block at most $\binom{n-1}{\frac{n}{2}}=\frac{1}{n}\binom{n}{\frac{n}{2}}$ sets (precisely those that do not contain it).
    Since we know that with high probability there are $(1-n^{-\beta})\binom{n}{\frac{n}{2}}$ sets which are blocked for some $\beta>0$, for any constant $0<d<1$ it holds that there must be at least $dn$ blocker vertices.

    In order for set to have doubled its reachability ball, it must have no more blocker vertices and a blocker vertex can only stop being a blocker when it is included in an edge.
    What we will now show is that, with high probability, at least one blocker vertex is not included in an edge between $\frac{(1-\alpha)}{2(1+\alpha)}n\log{n}$ and $(1-\epsilon)n\log{n}$.
    \begin{claim}
        There exists at least one blocker vertex that does not receive an edge between time $\frac{(1-\alpha)}{2(1+\alpha)}n\log{n}$ and $(1-\epsilon)n\log{n}$ with probability $1-o(1)$.
    \end{claim}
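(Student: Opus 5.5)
The plan is to condition on everything revealed up to time $t_0=\frac{(1-\alpha)}{2(1+\alpha)}n\log{n}$ and then run a second moment argument on the remaining labels. Concretely, we condition on $\mathcal{G}^O_{[0,t_0]}$. This fixes a set $B$ of blocker vertices, with $|B|\geq dn$ with high probability as established just before the claim. We then count how many $v\in B$ are avoided by every label in the window $(t_0,t_1]$, where $t_1=(1-\epsilon)n\log{n}$.

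Write $L=t_1-t_0$. Choose $\alpha$ small in terms of $\epsilon$, so that $L=(\tfrac12-\epsilon')n\log{n}$ for some $\epsilon'>0$ that can be made as close to $\epsilon$ as we like.

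The key structural fact is as follows. Since $\psi$ is degenerate, each edge carries exactly $r$ labels, all i.i.d.\ uniform. Hence the order discretization is a uniformly random ordering of the multiset in which each of the $\binom{n}{2}$ pairs appears $r$ times. Conditioned on the first $t_0$ positions, the remaining positions form a uniformly random ordering of the remaining multiset. This is sampling without replacement, independent of how $B$ was chosen.

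For a vertex $v$, let $m_v$ be the number of remaining labels incident to $v$, and let $M=r\binom{n}{2}-t_0$ be the total number remaining. Then the $j$th step after $t_0$ avoids $v$, given that all earlier steps in the window did, with probability $1-\frac{m_v}{M-j}$. We have $m_v\leq r(n-1)$ trivially, and $M-j\geq r\binom n2 - O(n\log n)$. Therefore
\[\Prob[v\text{ avoided in }(t_0,t_1]]\geq \left(1-\tfrac{2}{n}\left(1+O(\tfrac{\log n}{n})\right)\right)^{L}=n^{-(1-2\epsilon')}(1+o(1)).\]
Let $Y$ be the number of avoided blockers. Then $\Exp[Y\mid\mathcal{G}^O_{[0,t_0]}]\geq d n^{2\epsilon'}(1+o(1))\to\infty$.

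For the second moment, take $u\neq v$ in $B$. The step-$j$ avoidance factor for the pair is $1-\frac{m_u+m_v-m_{uv}}{M-j}$. To upper bound it we need a lower bound on $m_u$. We use that the number of labels on $u$ already used by time $t_0$ is $O(\log n)$ for every vertex with probability $1-o(1)$; this holds by a simple Chernoff bound and a union bound, or alternatively by \cref{lem: no small cuts} applied to singleton cuts. We also have $m_{uv}\leq r$. This gives
\[\Prob[u,v\text{ both avoided}]\leq\left(1-\tfrac{4}{n}\left(1-O(\tfrac{\log n}{n})\right)\right)^L .\]
In both the single-vertex and pair estimates, the relative error in the exponent is $O(\frac{\log n}{n})\cdot\frac{L}{n}=O(\frac{\log^2 n}{n})=o(1)$. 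Consequently the pair probability is at most $(1+o(1))$ times the product of the two single-vertex probabilities, where we also use that the single probabilities are at most $(1-\frac{2}{n}(1-O(\frac{\log n}{n})))^L$. It follows that
\[\Exp[Y^2]\leq \Exp[Y]+(1+o(1))\Exp[Y]^2,\]
so Chebyshev's inequality gives $\Prob[Y=0]\leq \frac{\Var[Y]}{\Exp[Y]^2}=o(1)$, conditionally on a typical past. We then take expectation over the past. Finally, a union bound with the events $|B|\geq dn$ and ``all vertices have $O(\log n)$ used labels by $t_0$'' gives the claim with probability $1-o(1)$.

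The main obstacle is keeping the error terms in the exponent genuinely $o(1)$. A crude bound, such as replacing $\frac{m_v}{M-j}$ by $\frac{2}{n}$ up to a constant factor, would lose a factor $n^{\Theta(1)}$ and destroy both the $n^{2\epsilon'}$ gain and the variance comparison. The lower bound on $m_u$ is what makes the pair bound nearly multiplicative. If the $O(\log n)$ degree bound proves awkward to justify after conditioning, the first fallback is to discard the few high-degree vertices from $B$ before running the argument. Removing $o(n)$ vertices still leaves $|B|\geq d'n$ blockers, which is all the argument needs.
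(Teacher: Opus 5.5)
Your proof is correct, but it takes a different route from the paper. The paper does not condition on the past and compute avoidance probabilities. Instead it follows the number $Y_t$ of still-untouched vertices of $B$ as a process, bounding only the conditional one-step probabilities that $Y_t$ drops by one or by two. It feeds these bounds into the exponential drift theorem (\cref{lem: Drift}) with potential $g(y)=\log(y+1)$, obtaining $\Prob[T_{\text{connect}}<\tfrac{(1-\eta)}{2}n\log n]\le \tfrac{3}{d}n^{(1-\eta)(1+2\gamma)-1}$. You instead use that, for degenerate $\psi$, the continuation of $\mathcal{G}^O$ after $t_0$ is a uniformly random arrangement of the remaining multiset. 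This gives exact product formulas for single and pairwise avoidance, and you finish with Chebyshev. The drift route needs no pairwise estimates and uses only one-step bounds valid for every history, so it would survive in settings without exact exchangeability. Your route avoids the drift machinery entirely, and it also shows that the number of surviving blockers is concentrated around $|B|\,n^{-(1-2\epsilon')}$, not merely positive.

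Two small remarks on your write-up. First, the fallback appeal to \cref{lem: no small cuts} does not give the $O(\log n)$ degree bound from its statement. It only yields at most $\beta r(n-1)$ used labels per vertex for a constant $\beta$, which would cost a factor $n^{O(\beta)}$ in your pair comparison. The hypergeometric Chernoff bound with a union bound is the right justification. Second, the degree bound is in fact unnecessary. Per step,
\[1-\frac{m_u+m_v-m_{uv}}{M-j}\le\Bigl(1-\frac{m_u}{M-j}\Bigr)\Bigl(1-\frac{m_v}{M-j}\Bigr)+\frac{m_{uv}}{M-j},\]
with $\frac{m_{uv}}{M-j}=O(n^{-2})$. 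Hence the pair avoidance probability is at most $(1+O(n^{-2}))^{L}=1+o(1)$ times the product of the single-vertex probabilities, for every past.
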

    \begin{proof}
    Let $B$ be a set of $dn$ vertices chosen arbitrarily at time $\frac{(1-\alpha)}{2(1+\alpha)}n\log{n}$ to either be a subset or super-set of the set of the currently blocking vertices.
    Let $Y_t$ be the set of vertices from $B$ that have not yet received an edge between time $\frac{(1-\alpha)}{2(1+\alpha)}n\log{n}$ and $\frac{(1-\alpha)}{2(1+\alpha)}n\log{n}+t$, and $\mathcal{F}_t$ be its corresponding filtration.
    We have the following two bounds, for $t<2n\log{n}$ and any fixed $\gamma>0$ immediately,
    \[\Prob[Y_{t}-Y_{t+1}=1|\mathcal{F}_t, Y_t=y\geq 1]\leq \frac{2y(1+\gamma)(n-y)}{n(n-1)},\]
    \[\Prob[Y_{t}-Y_{t+1}=2|\mathcal{F}_t, Y_t=y\geq 2]\leq \frac{y(1+\gamma)(y-1)}{n(n-1)}.\]
    We define $T_{\text{connect}}=\inf\{t\geq 0:Y_t\leq 2\}$, take $\zeta=1$ and $g(y)=\log{y+1}$, then
    \begin{align}\Exp\left[e^{\zeta(g(Y_t)-g(Y_{t+1}))}|\mathcal{F}_t, Y_t=y\geq3\right]&=\Prob{[Y_t-Y_{t+1}=0|\mathcal{F}_t,Y_t=y\geq3]}\\&+\Prob{[Y_t-Y_{t+1}=1|\mathcal{F}_t,Y_t=y\geq3]}\frac{y+1}{y}\\&+\Prob{[Y_t-Y_{t+1}=2|\mathcal{F}_t,Y_t=y\geq3]}\frac{y+1}{y-1}.\end{align}
    Since the probabilities must sum to one, and the coefficient of the term corresponding to the event with no change is minimal,
    we obtain for all $t<n \log{n}$,
    \[\leq 1-\frac{2y(1+\gamma)(n-y)}{n(n-1)}-\frac{y(1+\gamma)(y-1)}{n(n-1)}+\frac{2(y+1)(1+\gamma)(n-y)}{n(n-1)}+\frac{(1+\gamma)(y+1)y}{n(n-1)}\]
    \[\leq 1+\frac{2(1+2\gamma)}{n}.\]
    We now apply \cref{lem: Drift},
    to obtain that, for any $\eta>0$ small,
    \[\Prob\left[T_{\text{connect}}<\frac{(1-\eta)n\log{n}}{2}\right]\leq \left(1+\frac{2(1+2\gamma)}{n}\right)^{\frac{(1-\eta)n\log{n}}{2}}\cdot e^{{\log{3}-\log{dn+1}}}\]
    \[\leq \frac{3}{d}n^{(1-\eta)(1+2\gamma)-1}\]
    \[\leq \frac{3}{d}n^{2\gamma-\eta-2\eta \gamma}.\]
    Taking $\gamma<\frac{\eta}{2}$ gives us that with probability $1-o(1)$ there is at least one member of $B$, that has not been included in an edge after $\frac{(1-\alpha)}{2(1+\alpha)}n\log{n}$ and before $\frac{(1-\alpha)}{2(1+\alpha)}n\log{n}+\frac{1-\eta}{2}n\log{n}$.
    However, we also know that with probability $1-o(1)$ there are at least $dn$ blocker vertices.
    Therefore, by construction, with probability $1-o(1)$, $B$ only contains blocker vertices.
    A union bound over the probability of failure gives the claim.
    \end{proof}
    The main claim follows immediately from this, the consolidation of constant terms and the observation that if there is at least one blocker vertex that has not received an edge, there must be at least one set that has not yet doubled its size.
    \end{proof}
    \subsection{Tidying Up}
    In this section, we prove our main theorem on the doubling time in the continuous model.
    In order to achieve this we will convert our upper and lower bounds on the order discretization of a sample from the continuous model back to the undiscretized version.
    This is made quite easy by the strong concentration of the number of labels occurring within a given interval. 
    In fact, since $\psi$ is degenerate we obtain the ever pleasant binomial distribution.
    We are able, therefore, to obtain the following simple precursor, showing that with high probability the number of labels appearing in the initial interval is highly concentrated.
    \begin{restatable}{lemma}{LemConnectivityBridge}
        \label{lem: connectivity bridge}
        For $\alpha, \beta>0$ with $\beta$ small, the following holds, for any $\mathcal{G}=(V,E,\lambda)$,
        \[\Prob\left[\sum_{e \in E}|\lambda^O_{[0,\frac{\alpha \log{n}}{r(n-1)}]}(e)|\leq \frac{\alpha(1-\beta)n\log{n}}{2}\right]=o(n^{-1}).\]
        Similarly,
        \[\Prob\left[\sum_{e \in E}|\lambda^O_{[0,\frac{\alpha \log{n}}{r(n-1)}]}(e)|\geq \frac{\alpha(1+\beta)n\log{n}}{2}\right]=o(n^{-1}).\]
    \end{restatable}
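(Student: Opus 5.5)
The sum $\sum_{e \in E}|\lambda^O_{[0,\frac{\alpha\log n}{r(n-1)}]}(e)|$ is really the number of continuous labels of $\mathcal{G}$ that fall in the interval $(0,p]$ with $p=\frac{\alpha\log n}{r(n-1)}$. The order discretization only reindexes labels, so the window should be read as the set of labels whose underlying time lies in $(0,p]$. Under this reading, we can drop the order discretization and work directly with the sample from $\Gamma_{[n]}(\psi)$ with $\psi$ degenerate at $r$.

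Each of the $\binom{n}{2}$ edges carries exactly $r$ labels, and these $N=r\binom{n}{2}$ labels are independent and uniform on $(0,1]$. Hence the count $X$ of labels in $(0,p]$ is distributed as $\Bin(N,p)$, with mean
\[
\Exp[X]=r\binom{n}{2}\cdot\frac{\alpha\log n}{r(n-1)}=\frac{\alpha n\log n}{2}.
\]
The two events in the statement are therefore exactly the lower and upper deviation events $X\le(1-\beta)\Exp[X]$ and $X\ge(1+\beta)\Exp[X]$.

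We then apply the standard multiplicative Chernoff bounds for a binomial random variable:
\[
\Prob[X\le(1-\beta)\mu]\le e^{-\beta^2\mu/2},\qquad \Prob[X\ge(1+\beta)\mu]\le e^{-\beta^2\mu/3}\quad(0<\beta\le 1).
\]
This is where the hypothesis that $\beta$ is small is used. With $\mu=\frac{\alpha n\log n}{2}$, both bounds equal $n^{-\Theta(\beta^2\alpha n)}$, which is $o(n^{-1})$, indeed superpolynomially small, since $\alpha$ and $\beta$ are fixed constants.

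The argument has no real obstacle; the only care needed is in justifying the identification of the windowed order-discretized count with a plain binomial count of continuous labels in $(0,p]$. If the window $[0,\frac{\alpha\log n}{r(n-1)}]$ is instead read literally as a range of order positions, the statement degenerates, so we adopt the continuous-time reading.
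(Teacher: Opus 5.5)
Your proposal is correct and follows essentially the same route as the paper. The paper likewise reads the window in continuous time, notes that each label lands in it independently with probability $\frac{\alpha\log n}{r(n-1)}$ (so the mean is $\frac{\alpha n\log n}{2}$), and applies the multiplicative Chernoff bounds to obtain $n^{-\alpha\beta^2 n/4}$ and $n^{-\alpha\beta^2 n/6}$, both $o(n^{-1})$.
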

    \begin{proof}
        Let $X$ be the number of labels falling before $\frac{\alpha\log{n}}{r(n-1)}$. Since each label falls there independently with probability $\frac{\alpha\log{n}}{r(n-1)}$, we find that,
        \[\Exp[X]=\frac{\alpha n \log{n}}{2} .\]
        Thus from a Chernoff bound (\cref{Chernoff Bound}) we find that,
        \[\Prob{[X<\frac{(1-\beta) \alpha n \log{n}}{2}]}=n^{-\frac{\alpha\beta^2n}{4}}.\]
        Similarly,
        \[\Prob{[X>\frac{\alpha(1+\beta)n\log{n}}{2}]}\leq n^{-\frac{\alpha\beta^2n}{6}}.\]
        In both cases these are $o(n^{-1})$.
    \end{proof}
    The proof of the result itself is then simply an application of \cref{lem: connectivity bridge} to the bounds implied by \cref{lem: order double lower} and \cref{lem: order double upper}.
    \begin{proof}[Proof of \cref{thm: continuous double}]
        It follows from \cref{lem: order double upper}, \cref{lem: order double lower} and a union bound that for any $\epsilon>0$,
        \[\Prob[(1-\epsilon)n\log{n}<\textbf{Double}(\mathcal{G}^O)<(1+\epsilon)n\log{n}]=1-o(1).\]
        Furthermore, we know from \cref{lem: connectivity bridge} that for any $\delta>0$ the $(1-\epsilon)n\log{n}$th label falls after $\frac{2(1-\delta)(1-\epsilon)\log{n}}{r(n-1)}$ and that the $(1+\epsilon)n\log{n}$th labels falls before $\frac{2(1+\delta)(1+\epsilon)\log{n}}{r(n-1)}$ with probability $1-o(1)$.
        Thus taking a union bound over the failure probabilities, and $\alpha>\delta+\epsilon+\delta\epsilon$, we obtain the claim.
    \end{proof}
\section{Auxiliary Results}
At various points in the paper we make use of the following series of well-known results.
\begin{lemma}[Folklore bound on the Harmonic Series]
    \label{lem: Harmonics}
    For any $\alpha>0$, for all $n$ large enough,
    \[(1-\alpha)\log{n}\leq \sum_{i=1}^n\frac{1}{i}\leq (1+\alpha)\log{n}.\]
\end{lemma}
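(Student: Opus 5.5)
This is the standard comparison of the partial sums $H_n=\sum_{i=1}^n 1/i$ with the integral of $1/x$. Because $1/x$ is decreasing, for every $i\geq 1$ we have
\[\int_i^{i+1}\frac{dx}{x}\leq \frac{1}{i}\leq \int_{i-1}^{i}\frac{dx}{x},\]
where the right-hand inequality is only valid for $i\geq 2$.

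For the lower bound, summing the left inequality over $i=1,\dots,n$ gives $H_n\geq \log(n+1)\geq \log n$. This already implies $(1-\alpha)\log n\leq H_n$ for every $n\geq 1$ and every $\alpha>0$.

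For the upper bound, we treat the term $i=1$ separately and sum the right inequality over $i=2,\dots,n$. This gives $H_n\leq 1+\int_1^n\frac{dx}{x}=1+\log n$.

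It remains to absorb the additive constant. We need $1+\log n\leq (1+\alpha)\log n$, which is equivalent to $\alpha\log n\geq 1$. This holds for all $n\geq e^{1/\alpha}$. This is where the ``for all $n$ large enough'' clause is needed, and the threshold depends on $\alpha$.

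There is no real obstacle here. The only point needing care is that the multiplicative error $(1\pm\alpha)$ must absorb the additive $O(1)$ error $H_n-\log n\in[0,1]$, which it does once $\log n\geq 1/\alpha$.

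In the applications above (for example in the proof of \cref{lem: order double upper}), the lemma is applied to harmonic numbers $H_m$ with $m=\Theta(n)$ or $m=n^{c}$. We can therefore also record that $H_m=\log m+O(1)$, and convert $\log m$ to $\log n$ up to a factor $1\pm\alpha$ for $n$ large.
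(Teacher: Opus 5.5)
Your proof is correct. The integral comparison gives $\log(n+1)\leq H_n\leq 1+\log n$, so the lower bound holds for every $n\geq 1$, and the additive $1$ is absorbed into $(1+\alpha)\log n$ once $n\geq e^{1/\alpha}$. The paper does not prove this lemma at all; it lists it among well-known auxiliary facts, so your argument supplies the standard justification that the paper takes for granted.

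One small correction to your closing aside, which the lemma itself does not need. For $m=n^{c}$ one has $\log m=c\log n$, so converting to $\log n$ introduces the factor $c$, not $1\pm\alpha$. The paper handles $H_m$ with $m=n^{1-\epsilon+\log(1+\epsilon)}\log n$ this way in the proof of \cref{lem: order double upper}, writing $\log m=\bigl(1-\epsilon+\log(1+\epsilon)+\tfrac{\log\log n}{\log n}\bigr)\log n$.
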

\begin{lemma}[Markov's Inequality (See for example Lemma 27.1. of \cite{frieze2015introduction})]
    \label{Markov's Inequality}
    Let $X$ be a non-negative random variable. Then for any $\delta>1$,
    \[\Prob{[X\geq \delta \Exp[X]]}\leq \frac{1}{\delta}.\]
\end{lemma}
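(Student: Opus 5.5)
This is the standard Markov bound, so I will prove it directly rather than invoke anything earlier in the paper. The plan is to compare $X$ pointwise with a scaled indicator and then take expectations.

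Set $a=\delta\Exp[X]$.

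The degenerate case comes first. If $\Exp[X]=0$, then since $X\ge 0$ we have $X=0$ almost surely. The event $\{X\geq 0\}$ then has probability $1$, so the right-hand side must be read as $\infty$, or the statement implicitly assumes $\Exp[X]>0$. Similarly, if $\Exp[X]=\infty$, the event $\{X\ge\infty\}$ has probability $0$ and the claim is trivial. So I assume $0<\Exp[X]<\infty$, which gives $a>0$.

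The key step is the pointwise inequality
\[a\,\mathbbm{1}_{\{X\geq a\}}\leq X.\]
To check it, split on the indicator. On the event $\{X\geq a\}$ the left side equals $a$, which is at most $X$. Off that event the left side is $0$, which is at most $X$ because $X$ is non-negative.

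Taking expectations and using monotonicity and linearity of expectation gives
\[a\,\Prob[X\geq a]\leq \Exp[X].\]
Dividing by $a=\delta\Exp[X]>0$ yields
\[\Prob[X\geq \delta\Exp[X]]\leq \frac{1}{\delta}.\]

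The restriction $\delta>1$ is not used anywhere in the argument; it only makes the bound non-trivial. The only point needing any care is the degenerate expectation case handled at the start.
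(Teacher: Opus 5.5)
Your proof is correct and is the standard argument (take expectations of the pointwise bound $a\,\mathbbm{1}_{\{X\geq a\}}\leq X$), which is what underlies the textbook lemma the paper cites; the paper gives no proof of its own beyond that citation. Your degenerate-case caveat is also right, though the phrase ``read as $\infty$'' is off: if $\Exp[X]=0$ then $\Prob[X\geq 0]=1>1/\delta$, so the statement as written is false there and genuinely needs the extra hypothesis $\Exp[X]>0$.
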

\begin{lemma}[Chernoff Bound (Adapted from Theorems 4.4 and 4.5 of \cite{mitzenmacher2017probability})]
    \label{Chernoff Bound}
    Let $\mathbf{X}=\sum_{i=0}^nX_i$, where $X_1,...,X_n$ are independent identically distributed Bernoulli random variables. Then for any $\delta>0$,
    \[\Prob{[\mathbf{X}\geq (1+\delta)\Exp[\mathbf{X}]]}\leq e^{-\frac{\delta^2\Exp[\mathbf{X}]}{3}}.\]
    Further, for any $0<\delta<1$,
    \[\Prob{[\mathbf{X}\leq (1-\delta)\Exp[\mathbf{X}]]}\leq e^{-\frac{\delta^2\Exp[\mathbf{X}]}{2}}.\]
\end{lemma}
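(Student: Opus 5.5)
We would follow the standard exponential-moment (Cramér--Chernoff) argument. Write $\mu=\Exp[\mathbf{X}]$ and $p=\Prob[X_i=1]$, so $\mu=np$ up to the harmless indexing of the sum. For the upper tail, fix $t>0$ and apply \cref{Markov's Inequality} to the non-negative variable $e^{t\mathbf{X}}$:
\[\Prob[\mathbf{X}\geq(1+\delta)\mu]\leq e^{-t(1+\delta)\mu}\Exp[e^{t\mathbf{X}}].\]
By independence, $\Exp[e^{t\mathbf{X}}]=\prod_i(1+p(e^t-1))$. Since $1+x\leq e^x$, this is at most $e^{\mu(e^t-1)}$. Choosing $t=\log(1+\delta)$ gives
\[\Prob[\mathbf{X}\geq(1+\delta)\mu]\leq\left(\frac{e^{\delta}}{(1+\delta)^{1+\delta}}\right)^{\mu}.\]
The lower tail is handled in the same way. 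Apply Markov's inequality to $e^{-t\mathbf{X}}$ with $t=-\log(1-\delta)>0$, and use $\Exp[e^{-t\mathbf{X}}]\leq e^{\mu(e^{-t}-1)}$. This gives the bound $\left(e^{-\delta}(1-\delta)^{-(1-\delta)}\right)^{\mu}$.

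It then remains to prove two scalar inequalities. For the upper tail we need $f(\delta)=\delta-(1+\delta)\log(1+\delta)+\delta^2/3\leq 0$. We have $f(0)=0$, $f'(\delta)=-\log(1+\delta)+2\delta/3$ with $f'(0)=0$, and $f''(\delta)=-1/(1+\delta)+2/3$. So $f'$ first decreases and then increases, and $f'(1)=2/3-\log 2<0$. Hence $f'\leq 0$ on $[0,1]$, and therefore $f\leq 0$ there. For the lower tail we need $-\delta-(1-\delta)\log(1-\delta)\leq-\delta^2/2$. This follows from the expansion $(1-\delta)\log(1-\delta)=-\delta+\sum_{k\geq2}\frac{\delta^k}{k(k-1)}\geq-\delta+\delta^2/2$ for $0<\delta<1$.

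The main obstacle is that the upper-tail inequality $f\leq 0$ holds only for $\delta\leq 1$ (roughly, up to $\delta\approx 1.8$), and it genuinely fails for large $\delta$. The true tail decays like $e^{-\Theta(\mu\delta\log\delta)}$, which is slower than $e^{-\mu\delta^2/3}$. So we would prove the first inequality for $0<\delta\leq1$, matching Theorem~4.4 of \cite{mitzenmacher2017probability}. For $\delta>1$ we would record the weaker bound $e^{-\delta\mu/3}$, obtained from the same generic estimate by checking that $\delta-(1+\delta)\log(1+\delta)\leq-\delta/3$ for $\delta\geq1$.

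The paper's only use of the upper tail is in \cref{lem: connectivity bridge}, where $\beta$ is small. There, either form suffices to give the required $o(n^{-1})$ bounds.
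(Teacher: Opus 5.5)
Your proof is correct, and it is the standard moment-generating-function argument behind Theorems 4.4 and 4.5 of \cite{mitzenmacher2017probability}. The paper only cites those theorems and gives no proof of its own.

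You are also right that the first inequality cannot hold for every $\delta>0$. For ten i.i.d.\ Bernoulli$(1/10)$ variables and $\delta=9$ one has $\Prob[\mathbf{X}\geq 10]=10^{-10}$, whereas $e^{-\delta^2\Exp[\mathbf{X}]/3}=e^{-27}<2\cdot 10^{-12}$. So restricting to $0<\delta\leq 1$, as in the cited theorem, and recording $e^{-\delta\Exp[\mathbf{X}]/3}$ for $\delta>1$ is the right repair.

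One correction to your final paragraph: the upper tail is not used only in \cref{lem: connectivity bridge}. It is also used in the proof of \cref{lem: no small cuts}, where the paper takes $k$ to be ``a sufficiently large constant''. That is exactly the regime where the stated $k^2/3$ exponent is false. Your fallback still rescues that step. With it, the union-bound exponent becomes $s\log n\,\bigl(1-\tfrac{2\alpha k(1+\gamma)}{9}\bigr)$ in place of the $k^2$ version, so one simply takes the constant $k$ larger. The lemma's conclusion, that every cut has received $o(|S|(n-|S|))$ labels, holds for any constant $k$.
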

\begin{lemma}[Tail Bound on the sums of Geometric Random Variables (Adapted from Theorems 2.1 and 3.1 of \cite{JANSON20181})]
    \label{lem: Janson}
    Let $\mathbf{X}=\sum_{i=0}^k X_i$ be the sum of the independent geometric random variables $X_i\sim \Geom(p_i)$ where $0<p_i\leq 1$. Then for any $0<\alpha\leq 1$,
    \[\Prob{\left[\mathbf{X}\leq \alpha \Exp\left[\mathbf{X}\right]\right]}\leq \exp\left[-(\alpha-1-\log{\alpha})\left(\min_{0\leq i \leq k}(p_i)\right)\Exp{[\mathbf{X}]}\right].\]
    Similarly, for any $\beta>1$,
    \[\Prob{\left[\mathbf{X}\geq \beta \Exp\left[\mathbf{X}\right]\right]}\leq \exp\left[-(\beta-1-\log{\beta})\left(\min_{0\leq i \leq k}(p_i)\right)\Exp{[\mathbf{X}]}\right].\]
\end{lemma}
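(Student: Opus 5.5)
The plan is the exponential-moment (Chernoff) method applied to each tail, in the style of \cite{JANSON20181}. Throughout, let $\mu=\Exp[\mathbf{X}]=\sum_i \frac{1}{p_i}$ and $p_*=\min_i p_i$, and take $X_i\sim\Geom(p_i)$ supported on $\{1,2,\dots\}$. The two moment generating functions needed are:
\[\Exp[e^{tX_i}]=\frac{p_i}{e^{-t}-1+p_i}\quad (0<t<-\log(1-p_i)), \qquad \Exp[e^{-tX_i}]=\frac{p_i}{e^{t}-1+p_i}\quad (t>0).\]
Both follow from summing a geometric series. In each tail we bound the moment generating function of a single $X_i$ by a function of $t/p_i$ alone. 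We then use a monotonicity (convexity) trick to replace every $p_i$ by $p_*$ while keeping the total weight $\sum_i 1/p_i=\mu$. This produces exactly the factor $p_*\mu$ in the exponent.

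\emph{Upper tail.} For $0<t<p_*$, the inequality $e^{-t}\geq 1-t$ gives $\Exp[e^{tX_i}]\leq \frac{p_i}{p_i-t}=(1-t/p_i)^{-1}$; this range of $t$ is admissible because $p_i>t$ forces $e^{-t}-1+p_i>0$. The function $x\mapsto -\log(1-x)/x$ is increasing on $(0,1)$. Since $t/p_i\leq t/p_*$, we get $-\log(1-t/p_i)\leq \frac{t}{p_i}\cdot\frac{-\log(1-t/p_*)}{t/p_*}$. Summing over $i$ yields $\log\Exp[e^{t\mathbf{X}}]\leq -\mu p_*\log(1-t/p_*)$. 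Markov's inequality then gives
\[\Prob[\mathbf{X}\geq \beta\mu]\leq \exp\left[-t\beta\mu-\mu p_*\log(1-t/p_*)\right].\]
Choose $t=p_*(1-1/\beta)$, which lies in $(0,p_*)$ since $\beta>1$. The exponent becomes $-\mu p_*(\beta-1)+\mu p_*\log\beta=-(\beta-1-\log\beta)p_*\mu$, which is the claimed bound.

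\emph{Lower tail.} The case $\alpha=1$ is trivial, because the exponent vanishes. For $0<\alpha<1$ and $t>0$, the inequality $e^{t}-1\geq t$ gives $\Exp[e^{-tX_i}]\leq (1+t/p_i)^{-1}$. The function $x\mapsto\log(1+x)/x$ is decreasing. Since $t/p_i\leq t/p_*$, we get $\log(1+t/p_i)\geq \frac{t}{p_i}\cdot\frac{\log(1+t/p_*)}{t/p_*}$, and hence $\log\Exp[e^{-t\mathbf{X}}]\leq -\mu p_*\log(1+t/p_*)$. Markov's inequality applied to $e^{-t\mathbf{X}}$ gives
\[\Prob[\mathbf{X}\leq\alpha\mu]\leq\exp\left[t\alpha\mu-\mu p_*\log(1+t/p_*)\right].\]
Choose $t=p_*(1/\alpha-1)$. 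The exponent becomes $\mu p_*(1-\alpha)+\mu p_*\log\alpha=-(\alpha-1-\log\alpha)p_*\mu$, as required.

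The only step needing care is the monotonicity step that moves from the individual $p_i$ to $p_*$. I would verify the two monotonicity facts directly. First, $-\log(1-x)/x=\sum_{k\geq1}x^{k-1}/k$ has nonnegative coefficients, so it is increasing on $(0,1)$. Second, $\log(1+x)/x$ is decreasing because $\log(1+x)$ is concave with value $0$ at $x=0$. Everything else is a routine calculation.
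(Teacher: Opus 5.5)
Your proposal is correct. The paper does not prove this lemma; it imports it from Janson. Your argument is essentially the standard Chernoff proof from that source: bound $\Exp[e^{\pm tX_i}]\leq(1\mp t/p_i)^{-1}$, use convexity of $-\log(1-x)$ (resp.\ concavity of $\log(1+x)$) to trade each $p_i$ for $p_*=\min_i p_i$ while keeping the total weight $\sum_i 1/p_i=\Exp[\mathbf{X}]$, and optimize at $t=p_*(1-1/\beta)$ (resp.\ $t=p_*(1/\alpha-1)$). Your choice of $\Geom(p)$ supported on $\{1,2,\dots\}$ with mean $1/p$ is the convention the statement requires, since the lower-tail bound fails for the $\{0,1,\dots\}$ convention, and it matches how the paper uses the lemma.
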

\begin{lemma}[Drift Theorem (Adapted from Theorem 2 of \cite{DriftTheorem})]
    \label{lem: Drift}
    Let $(X_t)_{t\in \mathbb{N}_0}$, be a stochastic process, adapted to a filtration $(\mathcal{F}_t)_{t\in \mathbb{N}_0}$, over some state space $S \subseteq \mathbb{R}_\geq 0$. For some $a\geq 0$, such that $\{x \in S|x\leq a\}$ is absorbing, let $T_a=\min\{t\geq 0:X_t\leq a\}$. Moreover, let $g:S\to \mathbb{R}_{\geq 0}$ be a function such that $g(0)=0$ and $g$ is increasing on $\mathbb{R}_{\geq 0}$. If there exists $\lambda>0$ and a value $\beta>0$ such that the following holds for all $t\in \mathbb{N}_{0}$,
    \[\Exp\left[e^{\lambda(g(X_t)-g(X_{t+1}))}\mathbbm{1}_{X_t>a}-\beta\middle|\mathcal{F}_{t}\right]\leq 0.\]
    Then for $t^*>0$ and $X_0>a$,
    \[\Prob{[T_{a}>t^*|\mathcal{F}_0]}<\beta^{t^*}e^{\lambda g(a)-\lambda g(X_0)}.\]
\end{lemma}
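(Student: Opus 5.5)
The plan is the standard exponential-supermartingale argument behind multiplicative and variable drift theorems. We iterate the one-step hypothesis along the event that the process has not yet been absorbed, and then pass from an expectation to a probability at the final time $t^*$.

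\textbf{Step 1 (one-step bound).} Since $\{x\le a\}$ is absorbing, $\{T_a>t\}=\{X_t>a\}\cap\{T_a>t-1\}$, and $\{T_a>t\}$ is $\mathcal{F}_t$-measurable. Rearranged, the hypothesis reads
\[\Exp\left[e^{-\lambda g(X_{t+1})}\middle|\mathcal{F}_t\right]\mathbbm{1}_{X_t>a}\le \beta\, e^{-\lambda g(X_t)}\mathbbm{1}_{X_t>a}.\]
Multiplying by $\mathbbm{1}_{T_a>t}$ and using $\mathbbm{1}_{T_a>t+1}\le \mathbbm{1}_{T_a>t}$ gives
\[\Exp\left[e^{-\lambda g(X_{t+1})}\mathbbm{1}_{T_a>t+1}\middle|\mathcal{F}_t\right]\le \beta\, e^{-\lambda g(X_t)}\mathbbm{1}_{T_a>t}.\]
Hence $M_t=\beta^{-t}e^{-\lambda g(X_t)}\mathbbm{1}_{T_a>t}$ is a non-negative supermartingale.

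\textbf{Step 2 (iteration).} Applying the tower property $t^*$ times from $\mathcal{F}_0$, and using $X_0>a$, yields
\[\Exp\left[e^{-\lambda g(X_{t^*})}\mathbbm{1}_{T_a>t^*}\middle|\mathcal{F}_0\right]\le \beta^{t^*}e^{-\lambda g(X_0)}.\]

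\textbf{Step 3 (from expectation to probability) — the main obstacle.} To reach $\Prob[T_a>t^*\mid\mathcal{F}_0]$ we need a deterministic lower bound on $e^{-\lambda g(X_{t^*})}$ on the event $\{T_a>t^*\}$, that is, an upper bound on $g(X_{t^*})$. On that event we only know $X_{t^*}>a$. Since $g$ is increasing, this gives $g(X_{t^*})\ge g(a)$, which is the wrong direction. I would therefore first look for an upper bound on the state. If the process is nonincreasing, as in the application, where $Y_t$ only loses vertices, then $g(X_{t^*})\le g(X_0)$, and the bound becomes $\Prob[T_a>t^*\mid\mathcal{F}_0]\le \beta^{t^*}$.

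Neither route produces the factor $e^{\lambda g(a)-\lambda g(X_0)}<1$ in the statement as worded. In fact the statement as written fails in general. Take a process frozen at $X_0>a$, with $\beta=1$ and any $\lambda>0$. The hypothesis holds with equality, yet $\Prob[T_a>t^*]=1$, which exceeds $e^{\lambda(g(a)-g(X_0))}$. So Step 3 cannot be completed for the inequality exactly as stated.

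What the supermartingale does give, and what the paper actually uses for $T_{\text{connect}}$, is the lower-tail form. Run Steps 1--2 with the reversed exponent $e^{\lambda g(X_t)}$, for which the hypothesis is the natural growth condition, and apply optional stopping at $T_a\wedge t^*$. On $\{T_a\le t^*\}$ we have $g(X_{T_a})\le g(a)$ by monotonicity, and Markov's inequality then gives
\[\Prob[T_a\le t^*\mid\mathcal{F}_0]\le \beta^{t^*}e^{\lambda g(a)-\lambda g(X_0)}\qquad(\beta\ge 1).\]
My plan is therefore to carry out Steps 1--2 for the statement as given, and at Step 3 to record the counterexample above. I would then indicate that the displayed bound holds precisely with the event $\{T_a>t^*\}$ replaced by $\{T_a\le t^*\}$, under the assumption $\beta\ge1$. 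This is the form invoked in the proof of \cref{lem: order double lower}.
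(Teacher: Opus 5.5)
Your diagnosis is correct, and it is the main point. The paper gives no proof of this lemma; it only cites an external drift theorem. As printed, the statement is false. Your frozen process ($X_t\equiv X_0>a$, $\beta=1$) satisfies the hypothesis with equality. Yet $\Prob[T_a>t^*\mid\mathcal{F}_0]=1$, which is not strictly below $e^{\lambda g(a)-\lambda g(X_0)}\le 1$. The hypothesis and the right-hand side are those of a \emph{lower}-tail bound, and that is the only form the paper uses. In the proof of \cref{lem: order double lower} it bounds $\Prob[T_{\text{connect}}<\frac{(1-\eta)n\log n}{2}]$ with $\beta=1+\frac{2(1+2\gamma)}{n}\ge 1$. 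So your corrected statement, with the event $\{T_a\le t^*\}$ and $\beta\ge 1$, is the right one.

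One step of your repair fails as written: you propose rerunning Steps 1--2 with the reversed exponent $e^{\lambda g(X_t)}$. The hypothesis controls $\Exp[e^{-\lambda g(X_{t+1})}\mid\mathcal{F}_t]$, not $\Exp[e^{\lambda g(X_{t+1})}\mid\mathcal{F}_t]$. Moreover, on $\{T_a\le t^*\}$ monotonicity only gives the upper bound $e^{\lambda g(X_{T_a})}\le e^{\lambda g(a)}$, so Markov's inequality cannot be applied.

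The fix is to keep the Step 1 weight and stop at $T_a$ instead of killing the process there. Set $N_t=\beta^{-(t\wedge T_a)}e^{-\lambda g(X_{t\wedge T_a})}$. By your Step 1 computation this is a nonnegative supermartingale, so $\Exp[N_{t^*}\mid\mathcal{F}_0]\le e^{-\lambda g(X_0)}$. On $\{T_a\le t^*\}$ you have $N_{t^*}\ge \beta^{-t^*}e^{-\lambda g(a)}$, since $\beta\ge1$ and $g(X_{T_a})\le g(a)$. Markov's inequality then gives $\Prob[T_a\le t^*\mid\mathcal{F}_0]\le \beta^{t^*}e^{\lambda g(a)-\lambda g(X_0)}$.

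The factor $\mathbbm{1}_{T_a>t}$ in your $M_t$ is exactly what discarded the mass on the event you now want to measure. This argument does not use the absorption assumption. It also needs the drift condition only for $t<t^*$, which matters because the paper verifies it only for $t<n\log n$.
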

\begin{lemma}
    \label{lem: dependent Bernoulli}
    Let $X$ be a random variable with a continuous cumulative distribution function supported entirely on $[0,1]$, such that $E[X|X>0]$ and $E[X^2|X>0]$ both exist. Further let $Y,Z \sim\text{Ber}(X)$ be independent samples from a Bernoulli random variable parameterized by the same sample from $X$, then \[\Prob{[Y=1|Z=1]}\geq \Prob{[Y=1|X>0]}\]
\end{lemma}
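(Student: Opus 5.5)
This is the standard fact that conditioning on one success of a mixed Bernoulli raises the posterior weight on large values of the parameter. I would prove it by direct computation, writing everything in terms of moments of $X$ conditioned on $X>0$.

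By construction $Y$ and $Z$ are conditionally independent given $X$, each with conditional success probability $X$. Hence
\[
\Prob[Z=1]=\Exp[X], \qquad \Prob[Y=1,Z=1]=\Exp[X^2].
\]
Since $X\geq 0$ and $X^2\geq 0$, the event $\{X=0\}$ contributes nothing to either expectation. Writing $p=\Prob[X>0]$, we therefore get $\Exp[X]=p\,\Exp[X\mid X>0]$ and $\Exp[X^2]=p\,\Exp[X^2\mid X>0]$.

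First, some care about well-definedness. We need $\Prob[Z=1]>0$, i.e.\ $\Exp[X]>0$, which is equivalent to $p>0$. This is implicit in the hypothesis that the conditional moments given $X>0$ exist, and it is the only situation in which the lemma is invoked: there $X$ is a conditional probability that is positive with positive probability. With that in place,
\[
\Prob[Y=1\mid Z=1]=\frac{\Exp[X^2]}{\Exp[X]}=\frac{\Exp[X^2\mid X>0]}{\Exp[X\mid X>0]}.
\]
On the other side, conditionally on $X>0$ the variable $Y$ is still Bernoulli with parameter $X$, so
\[
\Prob[Y=1\mid X>0]=\Exp[X\mid X>0].
\]

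The claimed inequality is thus $\Exp[X^2\mid X>0]\geq \Exp[X\mid X>0]^2$. This is Jensen's inequality, equivalently the nonnegativity of the conditional variance $\Var[X\mid X>0]$. Finiteness of both conditional moments is exactly what the hypotheses supply, and it makes the manipulation legitimate.

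I do not expect a real obstacle. The only delicate point is the division by $\Exp[X\mid X>0]$, which is positive because $X>0$ on the conditioning event. The continuity of the distribution function is not actually needed beyond making the conditioning events well behaved, so I would remark that the argument works for any $[0,1]$-valued $X$ with $\Prob[X>0]>0$.
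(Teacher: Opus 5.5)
Your proof is correct and is essentially the paper's argument: both reduce the claim to $\Exp[X^2\mid X>0]\geq \Exp[X\mid X>0]^2$ and then divide by $\Exp[X\mid X>0]$. The paper does this through the identity $\Prob[Y=1\mid Z=1]=\Prob[Y=1\mid Z=1,X>0]$, while you cancel the factor $\Prob[X>0]$ explicitly. Your remark that continuity of the distribution function is not needed is also right, and it is the relevant version for the paper's application, where $X$ can have an atom at $0$.
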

\begin{proof}
    We have the following,
    \[\Prob{[YZ=1|X>0]}=\Exp[X^2|X>0]\geq \Exp[X|X>0]^2=\Prob{[Y=1|X>0]}\Prob{[Z=1|X>0]}\]
    Thus dividing both sides by $\Prob{[Z=1|X>0]}$,
    \[\Prob{[Y=1|Z=1]}=\Prob{[Y=1|Z=1,X>0]}\geq \Prob{[Y=1|X>0]}\]
\end{proof}
\section{Deferred Proofs}
In this section we provide the less interesting proofs which have been excluded from the main body of the text.
\subsection{ Motifs}
We begin with the deferred proofs for the motifs.
\subsubsection{First Moment}
    \begin{proof}[Remaining proof for \cref{lem: expectation}]
    We first consider the case where $\beta>1$.
    By linearity, we obtain,
    \[\Exp[\DiscSum{\beta}{T}]=\sum_{S \in \mathcal{S}}\sum_{\tau_S \in \mathcal{T}_S}\Exp[\LabEx_{S,\tau}\DiscDel{\beta}{T}_{S,\tau}]\]
    By independence we have that,
    \[=\sum_{S \in \mathcal{S}}\sum_{\tau \in \mathcal{T}_S}\Exp[\LabEx_{S,\tau}]\Exp[\DiscDel{\beta}{T}_{S,\tau}]\]
    From \cref{lem: bridge 3} we have,
    \[\geq\left(\frac{\beta}{2|E_H|T}\right)^{|E_H|}\left(1+\frac{|E_H|(1-\frac{\beta}{2T})}{\frac{\beta}{2T}}\right)\sum_{S \in \mathcal{S}}\sum_{\tau \in \mathcal{T}_S}\Exp[\LabEx_{S,\tau}]\]
    By independence,
    \[=\left(\frac{\beta}{2|E_H|T}\right)^{|E_H|}\left(1+\frac{|E_H|(1-\frac{\beta}{2T})}{\frac{\beta}{2T}}\right)\sum_{S \in \mathcal{S}}\sum_{\tau \in \mathcal{T}_S}\prod_{e \in E_S}\Prob{\left[\tau(e) \leq l_e\right]}\]
    Since $\sum_{\tau \in \mathcal{T}_S}\Prob{[\tau(e)\leq l_e]}=\Exp[\psi]=r$ is both finite and absolutely convergent, by the repeated application of distributivity over absolutely convergent sequences, we obtain,
    \[=\left(\frac{\beta}{2|E_H|T}\right)^{|E_H|}\left(1+\frac{|E_H|(1-\frac{\beta}{2T})}{\frac{\beta}{2T}}\right)\sum_{S \in \mathcal{S}}\prod_{e \in E_S} r\]
    \[=\left(\frac{\beta r}{2|E_H|T}\right)^{|E_H|}\left(1+\frac{|E_H|(1-\frac{\beta}{2T})}{\frac{\beta}{2T}}\right)|V_H|!\binom{n}{|V_H|}\]
    Now for the case $\beta=1$, from the above we have that
    \[\Exp[\DiscSum{1}{T}]=\sum_{S \in \mathcal{S}}\sum_{\tau \in \mathcal{T}_S}\Exp[\LabEx_{S,\tau}]\Exp[\DiscDel{1}{T}_{S,\tau}]\]
    From \cref{lem: bridge 3} we have,
    \[=T^{1-|E_H|}\sum_{S \in \mathcal{S}}\sum_{\tau \in \mathcal{T}_S}\Exp[\LabEx_{S,\tau}]\]
    \[=T^{1-|E_H|}\sum_{S \in \mathcal{S}}\sum_{\tau \in \mathcal{T}_S}\prod_{e \in E_S}\Prob{\left[\tau(e) \leq l_e\right]}\]
    Since $\sum_{\tau \in \mathcal{T}_S}\Prob{[\tau(e)\leq l_e]}=\Exp[\psi]=r$ is both finite and absolutely convergent, by the repeated application of distributivity over absolutely convergent sequences,
    \[=T^{1-|E_H|}\sum_{S \in \mathcal{S}}\prod_{e \in E_S} r\]
    \[=T^{1-|E_H|}\binom{n}{|V_H|}|V_H|!\]
    \end{proof}
    
    \subsubsection{Second Moment}
    \begin{proof}[Proof of \cref{corr: Weak-disc-upper} in the case where $\beta>1$]
    Since $\Prob{[\DiscSum{\beta}{T}=0]}\leq \frac{\Var[\DiscSum{\beta}{T}]}{\Exp[\DiscSum{\beta}{T}]^2}$, it follows from \cref{lem: variance} and \cref{lem: expectation}
    \[\Prob{[\DiscSum{\beta}{T}=0]}\leq \frac{(\frac{\sqrt{2}\beta r}{T} )^{2|E_H|}n^{2|V_H|}(2|V_H|)^{4|V_H|}\left(1+\frac{|E_H|(T-\beta)}{\beta}\right)\sum_{I \subseteq H}n^{-|V_I|}
    \left(\frac{2\beta r}{T}\right)^{-|E_I|}}{\left(\left(\frac{\beta r}{2T|E_H| }\right)^{|E_H|}|V_H|! \binom{n}{|V_H|}\left(1+\frac{|E_H|(2T-\beta)}{\beta}\right)\right)^2}\]
    Using $\binom{a}{b}\geq \left(\frac{a}{b}\right)^b$ and $x!\leq x^x$
    \[=O(1)\frac{\sum_{I \subseteq H} n^{-|V_I|}\Theta(\frac{T}{\beta})^{|E_I|}}{\Theta{\left(\frac{T}{\beta}\right)}}\]
    \[=\Theta(\max_{I \subseteq H}n^{-|V_I|}\left(\frac{\beta}{T}\right)^{1-|E_I|})=o(1)+\Theta(\max_{I \sqsubseteq H}n^{-|V_I|}\left(\frac{\beta}{T}\right)^{1-|E_I|})\]
    Since we have that $\frac{\beta}{T}=\omega\left(n^{-\min_{I \sqsubseteq H}\frac{|V_I|}{|E_I|-1}}\right)$
    \[=o(1)\]
    \end{proof}
    \begin{proof}[Proof of \cref{corr: Weak-disc-upper} in the case where $\beta=1$]
        Since $\Prob{[\DiscSum{\beta}{T}=0]}\leq \frac{\Var[\DiscSum{\beta}{T}]}{\Exp[\DiscSum{\beta}{T}]^2}$, it follows from \cref{lem: variance} and \cref{lem: expectation}
        \[\Prob{[\DiscSum{1}{T}=0]}\leq \frac{(\frac{\sqrt{2} r}{T} )^{2|E_H|}n^{2|V_H|}(2|V_H|)^{4|V_H|}\left(1+|E_H|(T-1)\right)\sum_{I \subseteq H}n^{-|V_I|}
        \left(\frac{2 r}{T}\right)^{-|E_I|}}
        {\left(r^{|E_H|}T^{1-|E_H|}|V_H|! \binom{n}{|V_H|}\right)^2}\]
        Using $\binom{a}{b}\geq \left(\frac{a}{b}\right)^b$ and $x!\leq x^x$ this simplifies to,
        \[=O(1)\cdot\frac{1+|E_H|(T-1)}{T^{2}}\sum_{I \subseteq H}n^{-|V_I|}
        \left(\frac{2r}{T}\right)^{-|E_I|}\]
            \[=O(1)\cdot\sum_{I \subseteq H}n^{-|V_I|}
        \Theta(T^{|E_I|-1})\]
        \[=o(1)+O(1)\cdot\sum_{I \sqsubseteq H}n^{-|V_I|}
        \Theta(T^{|E_I|-1})\]
        Since we have that $\frac{1}{T(n)}=\omega\left(n^{-\min_{I \sqsubseteq H}\frac{|V_I|}{|E_I|-1}}\right)$
        \[=o(1)\]
    \end{proof}
    \subsection{Doubling Times}
    \begin{proof}[Proof of \cref{lem: no small cuts}]
        Fix a set $S \subseteq V$, such that $s=|S|\leq\frac{n}{2}$, and consider the number of labels sampled from $\text{Cut}=S\times V\setminus S$ by time $\alpha n \log{n}$, which we denote by $X_S$. On round $t$ the probability of sampling an edge from the cut is given by the following expression
        \[p=\frac{2}{rn(n-1)-2t} \cdot \left(rs(n-s)-\sum_{e \in \text{Cut}}\lambda^O_{[0,t)}(e)\right)\]
        For $n$ sufficiently large, any fixed $S$, $t\leq \alpha n \log{n}$ and $\gamma>0$, this is trivially upper bounded by,
        \[\leq \frac{2(1+\gamma)s(n-s)}{n(n-1)}\]
        Therefore, we have that $X_S$ is stochastically dominated by $Y_S\sim \Bin\left(\alpha n \log{n}, \frac{2(1+\gamma)s(n-s)}{n(n-1)}\right)$, and thus $\Exp[X_S]\leq \frac{2\alpha s(1+\gamma)(n-s)\log{n}}{n-1}$.
        Applying a Chernoff bound (\cref{Chernoff Bound}), we have that for any $k>0$,
        \[\Prob{\left[X_S\geq (1+k)\frac{2\alpha s(1+\gamma)(n-s)\log{n}}{n-1}\right]}\leq\Prob{[Y_S\geq (1+k)\Exp[Y]]}\]
        \[\leq \exp\left[-\frac{k^2}{3}\cdot \frac{2\alpha s(1+\gamma)(n-s)\log{n}}{n-1}\right]\]
        Since $s\leq \frac{n}{2}$,
        \[\leq \exp\left[-\frac{k^2}{9}\cdot 2\alpha s(1+\gamma)\log{n}\right]\]
        Now for any $s\leq \frac{n}{2}$ via a union bound,
        \[\Prob\left[\bigcup_{S \subseteq V:|S|=s}\left\{ X_S\geq (1+k)\frac{2\alpha s(1+\gamma)(n-s)\log{n}}{n-1}\right\}\right]\leq |\{S \subseteq V: |S|=s\}|\exp\left[-\frac{k^2}{9}\cdot 2\alpha s(1+\gamma)\log{n}\right]\]
        \[\leq \exp\left[s\log{n}\left(1-\frac{2\alpha k^2(1+\gamma)}{9}\right)\right]\]
        This is clearly maximized when $s=1$, however even in this case we can take $k$ to be a sufficiently large constant that,
        \[=o(n^{-2})\]
        Then via a second union bound over the failure probability for each $s \in [\frac{n}{2}]$ we obtain that,
        \[\Prob\left[\bigcup_{S \subseteq V} \left\{X_S\geq (1+k)\frac{2\alpha |S|(1+\gamma)(n-|S|)\log{n}}{n-1}\right\}\right]=o(n^{-1})\]
        The result then follows from a simple rearrangement, the fact that $r\geq 1$ and that \[(1+k)\frac{2\alpha |S|(1+\gamma)(n-|S|)\log{n}}{n-1}=o(|S|(n-|S|))\]
    \end{proof}
    \begin{proof}[Proof of \cref{lem: growth upper}]
    Our strategy shall be to bound the time between vertices being added to the reachability ball via suitably nice geometric random variables. 
    We will begin with the geometric random variables.
    For some fixed $\gamma>0$ and all $i \in [n-1]$, we define $X_i\sim \Geom\left(\frac{2i(1-\gamma)(n-i)}{n(n-1)}\right)$. Let $\mathbf{X}=\sum_{i=1}^{n-1}X_i$ and $\mathbf{\bar{X}}=\sum_{i=s}^{k-1}X_i$, where $s=|S|$.
    \begin{claim}
        \label{clm: well behaved geo}
        \[\Prob[\mathbf{X}\geq 10n\log{n}]=o(n^{-6})\]
        Furthermore for any $\delta>0$,
        \[\Prob{\left[\mathbf{\bar{X}}\geq (1+\delta)\frac{(n-1)}{2(1-\gamma)}\mathbf{H}\left(s,k\right)\right]}\leq \exp\left[\frac{\min((k-1)(n-k+1),s(n-s))}{n}\mathbf{H}(s,k)(\delta-\log{(1+\delta)})\right]\]
    \end{claim}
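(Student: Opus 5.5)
Both bounds come from the geometric tail bound \cref{lem: Janson}. All that is needed is the mean of each sum and the smallest success probability among its summands.

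\textbf{Means.} For each $i\in[n-1]$,
\[\Exp[X_i]=\frac{n(n-1)}{2(1-\gamma)i(n-i)}=\frac{n-1}{2(1-\gamma)}\left(\frac{1}{i}+\frac{1}{n-i}\right),\]
using partial fractions $\frac{n}{i(n-i)}=\frac1i+\frac1{n-i}$. Summing over $i=s,\dots,k-1$ telescopes into harmonic numbers:
\[\Exp[\mathbf{\bar X}]=\frac{n-1}{2(1-\gamma)}\bigl(H_{k-1}-H_{s-1}+H_{n-s}-H_{n-k}\bigr)=\frac{n-1}{2(1-\gamma)}\mathbf{H}(s,k).\]
Summing over all $i\in[n-1]$ instead gives $\Exp[\mathbf{X}]=\frac{n-1}{1-\gamma}H_{n-1}$. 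By \cref{lem: Harmonics}, this is at most $\frac{(1+o(1))\,n\log n}{1-\gamma}$.

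\textbf{Minimal success probabilities.} The map $i\mapsto i(n-i)$ is concave, so on an integer range it is minimised at an endpoint.
\begin{itemize}
\item For $\mathbf{\bar X}$ the range is $\{s,\dots,k-1\}$, so
\[p_{\min}=\frac{2(1-\gamma)\min\bigl(s(n-s),(k-1)(n-k+1)\bigr)}{n(n-1)},\qquad p_{\min}\Exp[\mathbf{\bar X}]=\frac{\min\bigl(s(n-s),(k-1)(n-k+1)\bigr)}{n}\mathbf{H}(s,k).\]
\item For $\mathbf{X}$ the minimum is at $i=1$, so $p_{\min}=\frac{2(1-\gamma)}{n}$ and $p_{\min}\Exp[\mathbf{X}]=2H_{n-1}\ge 2(1-o(1))\log n$.
\end{itemize}

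\textbf{Applying the upper tail of \cref{lem: Janson}.}
\begin{itemize}
\item For $\mathbf{\bar X}$, take the tail parameter $1+\delta$. Since $(1+\delta)-1-\log(1+\delta)=\delta-\log(1+\delta)$, this gives
\[\Prob\left[\mathbf{\bar X}\ge(1+\delta)\Exp[\mathbf{\bar X}]\right]\le\exp\left[-\frac{\min\bigl(s(n-s),(k-1)(n-k+1)\bigr)}{n}\mathbf{H}(s,k)\bigl(\delta-\log(1+\delta)\bigr)\right].\]
This is the second claim. The exponent must carry a minus sign; the displayed statement omits it, which is a typo, since otherwise the bound exceeds $1$.
\item For $\mathbf{X}$, choose a constant $b$ with $b\,\Exp[\mathbf{X}]\le 10n\log n$ for large $n$. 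Because $\Exp[\mathbf{X}]\le\frac{(1+o(1))n\log n}{1-\gamma}$, any $b\le 9$ works once $\gamma$ is small and $n$ is large; take $b=9$. Then
\[\Prob[\mathbf{X}\ge 10n\log n]\le\exp\bigl[-(b-1-\log b)\cdot 2(1-o(1))\log n\bigr]=n^{-2(8-\log 9)+o(1)}.\]
Here $2(8-\log 9)\approx 11.6>6$, so this is $o(n^{-6})$.
\end{itemize}

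\textbf{Main obstacle.} There is no deep step. The only care needed is in the two choices above:
\begin{itemize}
\item keeping the endpoint minimum over exactly the index range $\{s,\dots,k-1\}$, which is what produces the $\min\bigl(s(n-s),(k-1)(n-k+1)\bigr)$ factor;
\item choosing $b$ so that $b\,\Exp[\mathbf{X}]$ stays below $10n\log n$ while $b-1-\log b$ is still large enough to beat $n^{-6}$.
\end{itemize}
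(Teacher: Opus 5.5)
Your proof is correct and follows the paper's argument: compute both means by partial fractions and harmonic numbers, take the smallest success probability at an endpoint of the index range, and apply the upper tail of \cref{lem: Janson}. Your choice $b=9$ with the exact $p_{\min}=\frac{2(1-\gamma)}{n}$ is in fact slightly more careful than the paper, which uses the weaker $\frac{1-\gamma}{n}$ and compares against $\frac{n\log n}{1-\gamma}$ rather than the true mean. Your only slip is harmless: $p_{\min}\Exp[\mathbf{X}]=\frac{2(n-1)}{n}H_{n-1}$ rather than $2H_{n-1}$, which still gives your $2(1-o(1))\log n$ bound, and you are right that the exponent in the second bound should carry a minus sign.
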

    \begin{proof}
        First considering $\mathbf{X}$,
        \[\Exp[\mathbf{X}]=\frac{n(n-1)}{2(1-\gamma)}\sum_{i=1}^{n-1}\frac{1}{i(n-i)}=\frac{(n-1)}{2(1-\gamma)}\sum_{i=1}^{n-1}\frac{1}{i}+\frac{1}{n-i}\]
        \[=\frac{n-1}{1-\gamma}\sum_{i=1}^{n-1}i^{-1}\]
        By \cref{lem: Harmonics} and taking $n$ sufficiently large,
        \[n\log{n}\leq \Exp[\mathbf{X}]\leq \frac{n\log{n}}{(1-\gamma)^2}\]
        Applying the upper tail bound of \cref{lem: Janson} for $\delta>0$ we have that,
        \[\Prob[\mathbf{X}\geq (1+\delta)\frac{n\log{n}}{1-\gamma}]\leq \exp\left[-\left(\frac{(1-\gamma)}{n}\left(n\log{n}\right)\left(\delta-\log{(1+\delta)}\right)\right)\right]\]
        We will take $\delta=10(1-\gamma)-1$ and find that $\Prob[\mathbf{X}\geq 10n\log{n}]=o(n^{-6})$.
    
        Now, taking $s=|S|$, consider $\mathbf{\bar{X}}=\sum_{i=s}^{k-1}X_i$.
        By a similar approach to above we have that,
        \[\Exp[\mathbf{\bar{X}}]=\frac{(n-1)}{2(1-\gamma)}\sum_{i=s}^{k-1}\frac{1}{i}+\frac{1}{n-i}=\frac{(n-1)}{2(1-\gamma)}\left(H_{n-s}+H_{k-1}-H_{s-1}-H_{n-k}\right)\]
        Applying \cref{lem: Janson} again, we find that for any $\delta>0$ that,
        \[\Prob{[\mathbf{\bar{X}}\geq (1+\delta)\Exp[\mathbf{\bar{X}}]]}\leq \exp\left[-\left(\min_{i \in \{s,k-1\}}(\frac{2i(n-i)(1-\gamma)}{n(n-1)})\right)\left(\delta-\log{(1+\delta)}\right)\Exp[\mathbf{\bar{X}}]\right]\]
        In the case that $s(n-s)>(k-1)(n-k+1)$, this simplifies to,
        \[=\exp\left[-\frac{s(n-s)}{n}\mathbf{H}(s,k)(\delta-\log{(1+\delta)})\right]\]
        In the other case we have that,
        \[=\exp\left[-\frac{(k-1)(n-k+1)}{n}\mathbf{H}(s,k)(\delta-\log{(1+\delta)})\right]\]
        Which gives the claim.\\
    \end{proof}
    With that established, we can start working with the random temporal graph itself.
    We define $A_t=|\ball{\mathcal{G}^O_{[\alpha n \log{n}, \infty)}}{S}{\alpha n \log{n}+t}|$ to be the size of the reachability ball centered at $S$ after $t$ rounds.
    Further, we define $B_i=\inf\{t\geq 0: A_t\geq s+i\}$.
    We will take a slightly  different (but equivalent) view of the process sampling $\mathcal{G}^O$: Each edge begins with $r$ tokens. 
    Then on each round, an edge is chosen to become active and discard a token with probability proportional to its remaining tokens, the process concluding once all tokens have been sampled.

    % We use $\mathbf{C}_i$ to denote the history of this process up to and including round $\alpha n \log{n}+B_i$, and then for $C$ any valid history the following holds for each $t>B_i$ such that $\Prob[B_{i+1}=t+1|B_{i+1}>t, \mathbf{C}_i=C]< 1$
    % \[\Prob[B_{i+1}=t+1|B_{i+1}>t, \mathbf{C}_i=C]\leq \Prob[B_{i+1}=t+2|B_{i+1}>t+1, \mathbf{C}_i=C]\]
    % Thus, $B_{i+1}-B_{i}$ corresponds to the number of independent random trials up to and including the first success, where the probability of success in each successive trial is non-decreasing.
    % Therefore, $B_{i+1}-B_i$ is stochastically dominated by $Y_i$ the geometric random variable with a success probability equal to that of the first trial.

    % Unfortunately, the success parameter of $Y_i$ is itself a non-trivial random variable and the parameter of $Y_{i}$ depends on the history up to $B_i$.
    % What we shall show, however, is that $\sum_{i=0}^{k-s-1}Y_i\leq \mathbf{\bar{X}}$ with high probability under an appropriate coupling. 
    % However, in order to achieve this we shall have to approach this iteratively.
    We prove the following claim as both a base case and a warm up.
    \begin{claim}
        Under an appropriate coupling \[\Prob[B_1<X_s]=1-o(n^{-1})\]
    \end{claim}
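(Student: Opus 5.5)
The plan is to couple the waiting time $B_1$ with $X_s\sim\Geom\bigl(\frac{2s(1-\gamma)(n-s)}{n(n-1)}\bigr)$ round by round, using the token description of $\mathcal{G}^O$. The coupling inequality will hold on a single good event, namely the conclusion of \cref{lem: no small cuts}, whose failure probability is $o(n^{-1})$.

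The first step is to notice that while $A_t=s$, the ball equals $S$. So the ball grows at the first round after $\alpha n\log{n}$ in which the active edge lies in the cut $\text{Cut}=S\times V\setminus S$. Indeed, any label on a cut edge at a time $\ge\alpha n\log{n}$ extends a trivial time-respecting path from $S$ to the vertex on the other side, and the ball can only grow in this way. It follows that, up to round $B_1$, no token of $\text{Cut}$ is discarded after time $\alpha n\log{n}$. The number of cut tokens remaining at every round $\alpha n\log{n}+t$ with $t<B_1$ is therefore exactly
\[
rs(n-s)-\sum_{e\in\text{Cut}}\lambda^O_{[0,\alpha n\log{n}]}(e).
\]
Because $S$ may be chosen with knowledge of $\mathcal{G}^O_{[0,\alpha n\log{n}]}$, I will not use a bound for a fixed cut. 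Instead I invoke \cref{lem: no small cuts}, which controls the maximum over all $S\subseteq V$ simultaneously. With probability $1-o(n^{-1})$ it gives, for every $S$, at least $(1-\gamma')rs(n-s)$ remaining cut tokens, for any fixed small $\gamma'>0$.

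The second step bounds the per-round success probability. On the good event, at each round $t<B_1$ the probability that the next activated edge lies in $\text{Cut}$ is the number of remaining cut tokens divided by the total number of remaining tokens. The denominator is at most $rn(n-1)/2$, so this probability is at least
\[
\frac{2(1-\gamma')rs(n-s)}{rn(n-1)}\ \ge\ p_s:=\frac{2(1-\gamma)s(n-s)}{n(n-1)},
\]
after choosing $\gamma'\le\gamma$. This lower bound holds conditionally on the entire history, since the numerator is frozen and the denominator only decreases.

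The coupling is then explicit. Let $U_1,U_2,\dots$ be i.i.d.\ uniform on $(0,1]$. In round $t$, select a cut edge exactly when $U_t$ falls below the current conditional cut probability $p_t$, and otherwise choose a non-cut edge from its conditional law; this reproduces the law of $\mathcal{G}^O$. Define $X_s=\min\{t:U_t\le p_s\}$, which has law $\Geom(p_s)$. On the good event $p_t\ge p_s$ for every round before $B_1$, so a cut edge is activated no later than round $X_s$, giving $B_1\le X_s$. If a strict inequality is really intended, it can be recovered by the indexing convention, since $B_1$ counts rounds from $0$ while $X_s\ge1$. The failure probability is only that of \cref{lem: no small cuts}, hence $o(n^{-1})$.

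The main obstacle is the dependence created by choosing $S$ after seeing the initial segment. The uniform-over-cuts statement of \cref{lem: no small cuts} is exactly what neutralises it, so I expect the only delicate point to be checking that the conditional lower bound on the success probability holds for the whole relevant time window. This needs no time cutoff, because the cut tokens do not change before $B_1$.
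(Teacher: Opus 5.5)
Your proposal is correct and follows essentially the paper's argument: both couple $B_1$ with $X_s$ round by round so that every success of $X_s$ falls on a round where a cut edge is revealed. In both, the good event from \cref{lem: no small cuts} guarantees that the per-round crossing probability stays at least $p_s$ before $B_1$. Your uniform-threshold construction gives the same joint law as the paper's thinning, which accepts a revealed cut edge with probability $p_s$ divided by the current crossing probability; your presentation simply makes the acceptance probability being at most $1$ and the exact $\Geom(p_s)$ marginal automatic.
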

    \begin{proof}
    We construct the coupling as follows. If there exists a set $T$ such that $\sum_{e \in T \times V\setminus T}\frac{|\lambda^O_{[0,\alpha n \log{n}]}(e)|}{r|T|(n-|T|)}>\gamma$ we immediately terminate the coupling, declare it to have failed and sample $X_s$ independently of $\mathcal{G}$. If this is not the case, we begin on round $\alpha n \log{n}$ and work round by round. On the $t$th subsequent round,
    \begin{itemize}
        \item If no edge has been revealed crossing the cut. We reveal the next edge of $\mathcal{G}^O$.
        \begin{enumerate}
            \item If the edge revealed is from the cut, we sample the $t$th trial of $X_s$ to be a success with probability 
            \[q_t=\frac{n(n-1)-2\alpha n\log{n}-2t}{n(n-1)}\cdot\frac{(1-\gamma)rs(n-s)}{rs(n-s)-\sum_{e \in S\times V\setminus S}|\lambda^O_{[0,\alpha n \log{n}]}(e)|}\]
            \item Otherwise we sample a failure for $X_s$
        \end{enumerate}
        \item If an edge has been revealed crossing the cut, we sample the $t$th trial for $X_s$ to be a success independently at random according to its standard success probability.
    \end{itemize}
    By construction, we have that, unless the coupling fails, $X_s$ samples its first successful trial no earlier than an edge is revealed crossing the cut. However, the coupling only fails if \cref{lem: no small cuts} fails and so we have the claim.\\
    \end{proof}
    We will now extend this argument inductively, to obtain the following,
    \begin{claim}
        \label{clm: Upper coupling}
        Under an appropriate coupling, \[\Prob[B_{k-s}<\mathbf{\bar{X}}]= 1-o(n^{-1})\]
    \end{claim}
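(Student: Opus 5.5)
We extend the coupling of the base case inductively over $i=0,1,\dots,k-s-1$, building a single joint construction of $\mathcal{G}^O_{[\alpha n\log n,\infty)}$ together with independent geometric variables $X_s,X_{s+1},\dots,X_{k-1}$. The aim is that, unless the coupling fails, $B_{i+1}-B_i\le X_{s+i}$ for every $i$. Summing these gives $B_{k-s}\le\mathbf{\bar X}$, and strictness can be arranged by the same convention as in the base case.

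As in the base case, we first check the event of \cref{lem: no small cuts} with threshold $\gamma$. If some cut $T$ already carries more than a $\gamma$ fraction of its labels in $[0,\alpha n\log n]$, we declare failure and sample all the $X_j$ independently. Otherwise we proceed in phases. Phase $i$ begins at round $B_i$, when the ball $A=\ball{\mathcal{H}_\alpha}{S}{\alpha n\log n+B_i}$ has size at least $s+i$. In this phase we reveal edges of $\mathcal{G}^O$ one per round. At the $t$-th overall round, conditionally on the history, the next label lies in the cut $A\times V\setminus A$ with probability
\[\frac{2\bigl(r|A|(n-|A|)-\sum_{e\in A\times V\setminus A}|\lambda^O_{[0,\alpha n\log n+t)}(e)|\bigr)}{rn(n-1)-2\alpha n\log n-2t}.\]
Crossing the cut adds at least one vertex to the ball.

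The key requirement is that this probability is at least $\frac{2(1-\gamma)|A|(n-|A|)}{n(n-1)}$, which in turn is at least the success parameter of $X_{s+i}$ as long as $|A|\ge s+i$ and $|A|(n-|A|)\ge (s+i)(n-s-i)$. Because $A$ may overshoot, $|A|(n-|A|)$ need not dominate $(s+i)(n-s-i)$ once $|A|>n/2$. We handle this by indexing phases by the actual ball size. If a single crossing jumps the ball from size $m$ to $m'$, we let the successful trial of $X_m$ be coupled to that crossing, and we couple $X_{m+1},\dots,X_{m'-1}$ to be at least $1$ trivially. This keeps the inequality of partial sums, since those geometric variables are at least $1$ anyway. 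With this indexing, the deficit term (labels of the current cut already used) must be bounded uniformly over all cuts and over all rounds up to $\alpha n\log n+10n\log n$.

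This uniform bound is the main obstacle. \Cref{lem: no small cuts} controls labels in $[0,\alpha n\log n]$, but during the phase the labels consumed on the adaptively chosen cut keep growing. I would apply \cref{lem: no small cuts} with $\alpha$ replaced by $\alpha+10$, so that it covers the whole window up to $(\alpha+10)n\log n$. By the first part of \cref{clm: well behaved geo}, $\mathbf{X}\ge\mathbf{\bar X}$ exceeds $10n\log n$ only with probability $o(n^{-6})$, so on the complement the whole process stays inside this window. Since the bound in \cref{lem: no small cuts} is a maximum over all $S\subseteq V$, adaptivity is harmless, and the denominator correction $2(\alpha+10)n\log n/(n(n-1))=o(1)$ is absorbed into $\gamma$.

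On the good event, each trial of the current geometric is declared a success with probability $q_t/p_t\le1$ given that a cut edge appears, exactly as in the base case, and a failure otherwise. The trials remain independent Bernoulli with the correct parameter, so each $X_j$ has the right law, and every success of $X_{|A|}$ occurs no earlier than a crossing. A union bound over the two failure events, $o(n^{-1})$ from \cref{lem: no small cuts} and $o(n^{-6})$ from \cref{clm: well behaved geo}, gives $\Prob[B_{k-s}<\mathbf{\bar X}]=1-o(n^{-1})$.
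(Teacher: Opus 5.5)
Your proposal is correct and follows essentially the paper's own argument: chain the base-case coupling phase by phase, control every adaptively chosen cut via \cref{lem: no small cuts} on the enlarged window of length $(\alpha+10)n\log n$, use $\Prob[\mathbf{X}\ge 10n\log n]=o(n^{-6})$ with an induction on the first failing phase, and finish with a union bound. Two small remarks. First, the overshoot bookkeeping is unnecessary: $\mathcal{G}^O$ carries exactly one label per round, so the ball grows by at most one vertex per step. Moreover, the current cut's used-label count cannot change within a phase, because the first crossing label ends the phase. Second, to make the coupling well defined on every history, you should, as the paper does, re-check the cut condition at the start of each phase and fall back to independent sampling if it fails, rather than checking only $[0,\alpha n\log n]$ at the outset.
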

    \begin{proof}
    We shall chain together $k-s$ variations of the previous coupling and use the good behavior of each one to ensure the good behavior of the next. The coupling is divided into $k-s$ phases such that, for $i \in \{0,...,k-s-1\}$, the $i$th phase begins on round $B_i$. In the $i$th phase we couple $B_{i+1}-B_i$ and $X_{s+i}$, taking \[\text{Cross(i)}=\ball{\mathcal{G}^O_{[\alpha n \log{n},\infty)}}{S}{\alpha n \log{n}+B_i}\times (V \setminus \ball{\mathcal{G}^O_{[\alpha n \log{n},\infty)}}{S}{\alpha n \log{n}+B_i})\] to be the set of edges crossing the cut between the reached and unreached vertices.
    Firstly, if either of the following hold, we immediately terminate the coupling and sample $X_s+i$ independently at random:
    \begin{enumerate}
        \item There exists a set $T$ such that $\sum_{e \in T \times V\setminus T}\frac{|\lambda^O_{[0,\alpha n \log{n}+B_i]}(e)|}{r|T|(n-|T|)}>\gamma$ (i.e. the cut between $T$ and $V\setminus T$ has had too many labels sampled from it).
        \item $B_i>(\alpha+10)n\log{n}$ (i.e. the process has taken improbably long already).
    \end{enumerate}
    Then, if neither of these occur, we work round by round starting at $\alpha n\log{n}+B_i$ and on the $t$th subsequent round.
    \begin{itemize}
        \item If no edge has been revealed from $\text{Cross}(i)$ we reveal the next edge of $\mathcal{G}^O$.
        \begin{enumerate}
            \item If the edge revealed is from $\text{Cross}(i)$, we sample the $t$th trial for $X_{s+i}$ to be a success with probability
            \[q_t=\frac{n(n-1)-2\alpha n\log{n}-2t-2B_i}{n(n-1)}\cdot\frac{(1-\gamma)rs(n-s)}{rs(n-s)-\sum_{e \in S\times V\setminus S}|\lambda^O_{[0,\alpha n \log{n}+B_i]}(e)|}\]
            \item Otherwise, we sample a failure for the $t$th trial of $X_{s+i}$.
        \end{enumerate}
        \item If an edge has been revealed from $\text{Cross}(i)$, we sample the $t$th trial for $X_{s+i}$ according to its standard success probability, independently of $\mathcal{G}$.
    \end{itemize}
    Now note that unless the coupling fails for some $i \in \{0,...,k-s-1\}$ we have that $B_{i+1}-B_i\leq X_{s+i}$ for all of the same $i$, and thus $B_{k-s}\leq\mathbf{\bar{X}}$.
    It therefore remains to show that the coupling does not fail with high probability.
    
    There are two possibilities, either the coupling fails in phase $0$ or it fails in a later phase.
    Since we know from the previous lemma that the coupling succeeds in the zeroth phase with probability $1-o(1)$, any substantial failure probability must occur later.
    Accordingly, let $j \in \{1,...,k-s-1\}$ be the first phase in which the coupling fails.
    This requires that either, $B_j>10n\log{n}$ or that there is a cut that has been over-sampled by time $\alpha n\log{n}+B_j$.
    In turn, this requires that either $B_{j-1}+X_{s+j-1}>10n\log{n}$ (and therefore $\mathbf{X}>10n\log{n}$) or that there is a cut that has been over-sampled by time $(\alpha+10n\log{n})$.
    However, from \cref{clm: well behaved geo} and \cref{lem: no small cuts} respectively, we know that both occur with probability at most $o(n^{-1})$.
    Therefore, we have that in order for the coupling to fail, at least one of three events must occur, each of which does so with probability at most $o(n^{-1})$.
    It thus follows via a union bound over the failure probabilities, that the coupling succeeds with probability $1-o(n^{-1})$.
    As such, we have the claim.

    % We shall proceed inductively, assuming that $B_i<10n\log{n}$ and that no cut has been over sampled by time $\alpha n \log{n}+B_i)$.
    % From the previous claim and \cref{clm: well behaved geo}, we have that $B_1<{X_s}<10n\log{n}$ with probability $1-o(1)$.
    % We begin by observing that it holds that with probability $1-o(n^{-1})$, we have that $\mathbf{\bar{X}}<\mathbf{X}\leq 10n\log{n}$ (\cref{clm: well behaved geo}) and that no cut has had more than a $1-\gamma$ proportion of its labels used by $(\alpha+10)n \log{n}$ (\cref{lem: no small cuts}). From the previous claim, we know that the coupling does not fail in the $0$th phase with probability $1-o(n^{-1})$.
    % Now assume that the coupling first fails in phase $i$. This requires that either $B_i>10n\log{n}$ or that there is a sufficiently over taxed cut by time $\alpha n \log{n}+B_i$.
    % In the former case, since all previous phases did not fail we must have that $10n\log{n}\leq \sum_{j=s}^{s+i}X_j\leq \mathbf{X}$. In the latter case, either the former case must be true, or more than a $(1-\beta)$ fraction of edges must have been taken from some cut before $(\alpha+10)n\log{n}$ rounds.
    % Iterating this argument, we find that for any phase to fail, we require that either $\mathbf{X}\geq 10n\log{n}$ or that there exists a set $T$ such that $\sum_{e \in T \times V\setminus T}\frac{|\lambda^O_{[0,(\alpha+10) n \log{n}]}(e)|}{r|T|(n-|T|)}>\gamma$, the probability of which is at most $o(n^{-1})$ via \cref{clm: well behaved geo}, \cref{lem: no small cuts} and a union bound.

    Thus with probability $1-o(n^{-1})$ we find that $B_{k-s}<\mathbf{\bar{X}}$.
    \end{proof}
    A final union bound over the failure probabilities of \cref{clm: Upper coupling} and \cref{clm: well behaved geo} gives the result.
    \end{proof}

    \begin{proof}[Proof of \cref{lem: growth lower}]
        As in the equivalent upper bound, our strategy is to bound the time between vertices being added to the reachability ball via suitably nice geometric random variables.
        Similar to before, we define $A_t=|\ball{\mathcal{G^O}}{S}{t}|$ to be the size of the reachability ball centered at $S$ after $t$ rounds. 
        Further, we define $B_i=\inf\{t>0:A_t\geq \frac{n}{2}+i\}$.
        Our goal is to lower bound $B_{i+1}-B_i$ via the independent random variable $Y_i \sim \Geom\left(p_i\right)$ for $p_i=(1+\beta)\frac{2(\frac{n}{2}+i)(\frac{n}{2}-i)}{n(n-1)}$ 
        \begin{claim}
            \label{clm: lower coupling}
            $\Prob[B_{\frac{n-2}{2}}>\sum_{i=0}^{\frac{n-2}{2}}Y_i]=1-o(n^{\frac{\log{2}-1}{5}})$
        \end{claim}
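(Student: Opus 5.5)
We need to show that, with the stated probability, $B_{\frac{n-2}{2}}$ stochastically exceeds $\sum_i Y_i$. We mirror the phase-by-phase coupling of \cref{clm: Upper coupling}, but with the inequality reversed: in each phase we want every success of the true crossing process to imply a success of $Y_i$.

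\textbf{Coupling in phase $i$.} Phase $i$ begins at round $B_i$, when the ball has size $\frac{n}{2}+i$ and the cut $\text{Cross}(i)$ has $(\frac n2+i)(\frac n2-i)$ footprint edges. Each footprint edge starts with $r$ tokens, so on any round the probability that the revealed label lies in $\text{Cross}(i)$ is
\[
\frac{2\left(r(\tfrac n2+i)(\tfrac n2-i)-\#\text{used cut labels}\right)}{rn(n-1)-2t}.
\]
This is at most $(1+\beta)\frac{2(\frac n2+i)(\frac n2-i)}{n(n-1)}=p_i$ provided the total number of labels used so far is at most $\frac{\beta}{2}\cdot\frac{rn(n-1)}{2}$. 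We then couple round by round. Whenever a cut label is revealed we declare a success of $Y_i$; otherwise we declare a success with the conditional probability needed to make each trial of $Y_i$ a $\text{Ber}(p_i)$ success independently of the past. This gives $B_{i+1}-B_i\ge Y_i$ as long as the coupling has not failed.

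\textbf{Bounding the failure probability.} The coupling fails only if too many labels have been consumed before the last phase, that is, only if $B_{\frac{n-2}{2}}$ (or equivalently the running sum) exceeds roughly $\beta r n^2/4$. We may truncate at time $(1-\alpha)n\log n$, since \cref{lem: growth lower} only concerns times below this. By then only $O(n\log n)=o(\beta n^2)$ labels have been used, so for large $n$ the failure event above never occurs before time $(1-\alpha)n\log n$. The coupled statement should therefore be read as $\min(B_{\frac{n-2}{2}},\,(1-\alpha)n\log n)\ge\min(\sum_i Y_i,\,(1-\alpha)n\log n)$ deterministically.

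\textbf{Main obstacle.} The claimed error $O(n^{\frac{\log 2-1}{5}})$ suggests the authors have an extra source of failure. One natural candidate is the deviation of the remaining label count. A more likely one is that the proof handles a subtlety in the definition of ``reach'': on the last vertex, the final cut has size $\frac n2$ rather than $1$. Alternatively, the term may come from applying a Janson lower tail to the first few geometric variables. I would first try the truncation argument above, which appears to give failure probability $0$ for large $n$. If the later argument genuinely needs the non-truncated form, I would instead bound $\Prob[\sum_i Y_i>10n\log n]$ via \cref{lem: Janson}, exactly as in \cref{clm: well behaved geo}, and absorb that probability into the stated error term.
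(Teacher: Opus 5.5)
Your per-round coupling is the paper's: a revealed cut label forces a success of $Y_i$, and otherwise $Y_i$ succeeds with probability $(p_i-q)/(1-q)$, which is valid while the crossing probability $q$ stays below $p_i$. The two arguments differ in how they handle the failure event.

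The paper caps its coupling at time $2n\log n$. It therefore has to pay $\Prob[B_{n/2}>2n\log n]$, the probability that the \emph{real} ball grows slowly, and it bounds this with the upper-tail estimate \cref{lem: growth upper}. That cap is the only source of the $n^{\frac{\log 2-1}{5}}$ term. It does not come from the cut-size or Janson-lower-tail effects you speculate about.

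You observe instead that $q\le p_i$ persists up to a time $T^*=\Theta(\beta r n^2)$, so the cap can be placed there. Then nothing about the real process being fast is needed. For $L=(1-\alpha)n\log n$, your truncated inequality $\min(B_{n/2},L)\ge\min(\sum_i Y_i,L)$ holds surely for large $n$, and it is exactly what the proof of \cref{lem: growth lower} uses. Your route is therefore self-contained and removes the $n^{\frac{\log 2-1}{5}}$ loss entirely. The paper's route reuses its upper-bound machinery at the price of a weak polynomial error.

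For the claim as literally stated, your fallback works, but the reason you give (``or equivalently the running sum'') is wrong. The coupling breaks when $B_{n/2}$ passes $T^*$, which is not equivalent to $\sum_i Y_i$ passing it. What you need is the one-sided containment
\[\Bigl\{B_{n/2}<\sum_i Y_i\Bigr\}\subseteq\Bigl\{\sum_i Y_i>T^*\Bigr\}.\]
It holds because on $\{B_{n/2}\le T^*\}$ every phase is fully coupled, giving $\sum_i Y_i\le B_{n/2}$. On $\{B_{n/2}>T^*\}$ the inequality can only fail if $\sum_i Y_i>B_{n/2}>T^*$. With this line in place, \cref{lem: Janson} bounds the failure probability far below $n^{\frac{\log 2-1}{5}}$. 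Plain Markov also suffices, since $\Exp[\sum_i Y_i]=O(n\log n)$ while $T^*=\Theta(n^2)$.

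One further point: the coupling yields $B_{n/2}\ge\sum_{i=0}^{n/2-1}Y_i$. The claim's indices and its strict inequality are loose in the paper as well, and only this form is used later.
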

        \begin{proof}
        We construct the following coupling between $B_{i+1}-B_i$ and $Y_i$, where we denote by $\text{Cross}(i)=\ball{\mathcal{G}^O}{S}{B_i}\times (V \setminus \ball{\mathcal{G}^O}{S}{B_i})$ the cut between the reachability ball of $S$ at time $B_i$ and the rest of the vertices. Firstly, if $B_i>2n\log{n}$ we immediately terminate the coupling and sample trials for $Y_i$ independently at random.
        Otherwise, we define $\text{CrossCount}(i)=\sum_{e \in \text{Cross}(i)}r-|\lambda^O_{[0,B_i]}|$ to be the number of labels left crossing the cut and reveal each snapshot of $\mathcal{G}^O$ in turn.
        On the $t$th subsequent round,
        \begin{itemize}
            \item If $t>2n \log{n}$, we declare the coupling to have failed.
            \item Otherwise, if the edge revealed from $\mathcal{G}^O$ crosses the cut we sample a success for $Y_i$.
            \item If the edge revealed from $\mathcal{G}^O$ does not cross the cut, we sample a success for $Y_i$ with probability
            $\frac{p_i-\frac{\text{2CrossCount(i)}}{rn(n-1)-2t}}{1-\frac{2\text{CrossCount(i)}}{rn(n-1)-2t}}$, and otherwise sample a failure. (Note that the success probability is well defined as long as both the numerator and denominator are positive. For $n$ sufficiently large as long as the coupling does not fail, this will always hold.)
        \end{itemize}
        By construction we have that, either, the coupling fails or the number of trials before a success for $Y_i$ is less than or equal to the number of edges revealed until one crossing the cut $(B_{i+1}-B_i)$ is. Therefore, either we have that $Y_i<B_{i+1}-B_i$ or, alternatively, $B_{i+1}>2n\log{n}$ (from \cref{lem: order double upper} probability $o(1)$). Thus, we find that $Y_i<B_{i+1}-B_i$ with high probability via a union bound over the failure of the coupling.\\
    
        We will now consider the mass coupling formed by sampling each $Y_i$ based on the same sample of $\mathcal{G}$, and show that with sufficiently high probability $Y_{i}<B_{i+1}-B_i$ for all $i \in \{0,...,\frac{n-2}{2}\}$.
        Immediately, by construction we find that each $Y$ random variable is independent of the rest but there is a very high degree of dependence between the failure condition for each individual coupling.
        In particular, for all $i \in \{0,...,\frac{n-2}{2}\}$, we have that  $2n\log{n}<B_{i+1}$ implies that $B_{\frac{n}{2}}>2n\log{n}$.
        Thus, applying \cref{lem: growth upper}, we have that the probability of ANY of the couplings failing is at most $n^{\frac{\log{2}-1}{5}}$.
        We therefore have the claim.
        \end{proof}
        It remains to bound $\sum_{i=0}^{\frac{n-2}{2}}Y_{i}$, which we do in the following claim.
        \begin{claim}
            \label{clm: Lower Geo}
            For any $0<\delta<1$,
            \[\Prob\left[\mathbf{Y}\leq \frac{(1-\beta)(1-\delta)n\log{n}}{2(1+\beta)}\right]=1-O(n^{2(1-\beta)(\delta+\log{(1-\delta)})}).\]
        \end{claim}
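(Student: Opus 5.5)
The plan is to treat $\mathbf{Y}=\sum_{i=0}^{\frac{n-2}{2}}Y_i$ as a sum of independent geometric random variables and read off the stated probability from the tail bound of \cref{lem: Janson}. The work splits into three steps: compute $\Exp[\mathbf{Y}]$ up to $(1\pm o(1))$ factors, identify $\min_i p_i$, and plug both into Janson's inequality with a parameter matched to the threshold $\frac{(1-\beta)(1-\delta)n\log n}{2(1+\beta)}$. I should flag at the outset a possible inconsistency in the claim as worded, which I explain below and which is the main obstacle.

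\textbf{Expectation.} Since $\Exp[Y_i]=p_i^{-1}$ and $p_i=(1+\beta)\frac{2(\frac n2+i)(\frac n2-i)}{n(n-1)}$, partial fractions give
\[\Exp[\mathbf{Y}]=\frac{n-1}{2(1+\beta)}\sum_{i=0}^{\frac{n-2}{2}}\left(\frac{1}{\frac n2+i}+\frac{1}{\frac n2-i}\right).\]
This sum equals $H_{n-1}$ up to an $O(1/n)$ boundary term. By \cref{lem: Harmonics}, for $n$ large we get
\[\frac{(1-\beta)\,n\log n}{2(1+\beta)}\le \Exp[\mathbf{Y}]\le \frac{(1+\beta)\,n\log n}{2(1+\beta)}.\]
Here I use the slack $\beta$ to absorb the harmonic error and the $(n-1)/n$ factor.

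\textbf{Minimal parameter.} The smallest parameter is at $i=\frac{n-2}{2}$, where $(\frac n2+i)(\frac n2-i)=n-1$. Hence $\min_i p_i=\frac{2(1+\beta)}{n}$, and
\[\min_i p_i\cdot\Exp[\mathbf{Y}]\ge (1-\beta)\log n.\]
The factor $2$ in the stated exponent should come from the two halves of the partial-fraction sum, each contributing about $\log n$ weighted at the extreme index. I will track this constant carefully rather than trust it.

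\textbf{Tail bound, and the obstacle.} The threshold is a $(1-\delta)$-fraction of the lower estimate of $\Exp[\mathbf{Y}]$. So the deviation event between $\mathbf{Y}$ and the threshold is controlled by the lower-tail part of \cref{lem: Janson} with $\alpha=1-\delta$. That bound gives
\[\Prob\!\left[\mathbf{Y}\le(1-\delta)\Exp[\mathbf{Y}]\right]\le\exp\!\left[(\delta+\log(1-\delta))\min_i p_i\,\Exp[\mathbf{Y}]\right]=n^{O((1-\beta)(\delta+\log(1-\delta)))},\]
using $\alpha-1-\log\alpha=-\delta-\log(1-\delta)>0$. Note this controls the probability that $\mathbf{Y}$ is \emph{at most} the threshold, i.e.\ it shows that event is polynomially unlikely, not that it holds with probability $1-O(\cdot)$.

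This is where I expect the main difficulty. Since the threshold sits strictly below $\Exp[\mathbf{Y}]\approx\frac{n\log n}{2(1+\beta)}$, an upper-tail argument cannot show $\mathbf{Y}\le$ threshold with probability $1-O(\cdot)$. Both the expectation estimate and the way \cref{lem: growth lower} uses this claim indicate the intended reading is that the event has probability $O(n^{2(1-\beta)(\delta+\log(1-\delta))})$, which is exactly what the lower tail delivers. I will therefore carry out the lower-tail computation above. I will also check whether the constant $2$ in the exponent survives an honest evaluation of $\min_i p_i\,\Exp[\mathbf{Y}]$; my own estimate gives $(1-\beta)\log n$, which would yield exponent $(1-\beta)(\delta+\log(1-\delta))$ rather than the stated $2(1-\beta)(\delta+\log(1-\delta))$.
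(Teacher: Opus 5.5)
Your argument is the paper's argument. It computes $\Exp[\mathbf{Y}]=\frac{n-1}{2(1+\beta)}\bigl(H_{n-1}+\frac{2}{n}\bigr)\geq\frac{(1-\beta)n\log n}{2(1+\beta)}$, uses $\min_i p_i=\frac{2(1+\beta)}{n}$, and applies the lower tail of \cref{lem: Janson} with $\alpha=1-\delta$.

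Both of your flags are correct. First, the paper's proof actually bounds $\Prob\bigl[\mathbf{Y}<\frac{(1-\beta)(1-\delta)n\log n}{2(1+\beta)}\bigr]$, which is the form \cref{lem: growth lower} needs. So the ``$1-O(\cdot)$'' in the statement belongs to the complementary event. Second, the paper's own penultimate expression $\exp\bigl[\frac{2(1+\beta)}{n}\cdot\frac{(1-\beta)n\log n}{2(1+\beta)}(\delta+\log(1-\delta))\bigr]$ equals $n^{(1-\beta)(\delta+\log(1-\delta))}$. The factor $2$ in the stated exponent is therefore a slip in the paper, not something your computation is missing.

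Two corrections to your write-up:
\begin{itemize}
\item Discard the ``two halves'' heuristic. We have $\sum_{i=0}^{(n-2)/2}\frac{1}{n/2+i}\to\log 2$, and only $\sum_{i=0}^{(n-2)/2}\frac{1}{n/2-i}=H_{n/2}$ carries the $\log n$. Hence $\min_i p_i\,\Exp[\mathbf{Y}]=(1+o(1))\log n$, with no hidden $2$.
\item Write the exponent explicitly instead of $n^{O(\cdot)}$. Since $\delta+\log(1-\delta)<0$, it is the lower bound $\min_i p_i\,\Exp[\mathbf{Y}]\geq(1-\beta)\log n$ that gives $n^{(1-\beta)(\delta+\log(1-\delta))}$.
\end{itemize}

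The bound with the $2$ (indeed a much stronger one) is true, but it needs a sharper use of \cref{lem: Janson}. For example, apply it only to the terms with $\frac{n}{2}-i\geq n^{\epsilon}$ for some $0<\epsilon<\delta$. That sub-sum has mean $(1+o(1))\frac{(1-\epsilon)n\log n}{2(1+\beta)}$ but minimal parameter about $\frac{2(1+\beta)n^{\epsilon}}{n}$, which gives $\exp(-\Omega(n^{\epsilon}\log n))$. Downstream, however, any negative exponent suffices, so your version is enough.
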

        \begin{proof}
        For this we will return to our old friend \cref{lem: Janson}, for which we require a bound on the expectation,
        \[\Exp[\mathbf{Y}]=\frac{n(n-1)}{2(1+\beta)}\sum_{i=\frac{n}{2}}^{n-1} \frac{1}{i(n-i)}=\frac{n-1}{2(1+\beta)}\sum_{i=\frac{n}{2}}^{n-1} \frac{1}{i}+\frac{1}{n-i}=\frac{n-1}{2(1+\beta)} \left(H_{n-1}+\frac{2}{n}\right).\]
        For any $\beta>0$ sufficiently small, for all $n$ large enough, by \cref{lem: Harmonics},
        \[\geq \frac{(1-\beta)n\log{n}}{2(1+\beta)}.\]
        Now for any $0<\delta<1$, we have that,
        \[\Prob\left[\mathbf{Y}<\frac{(1-\beta)(1-\delta)n\log{n}}{2(1+\beta)}\right]\leq \Prob[\mathbf{Y}\leq (1-\delta)\Exp[\mathbf{Y}]].\]
        Applying \cref{lem: Janson},
        \[\leq \exp\left[\left(\frac{2(1+\beta)}{n}\right)\left(\frac{(1-\beta)n\log{n}}{2(1+\beta)}\right)(\delta+\log{(1-\delta)})\right]\]
        \[=n^{2(1-\beta)(\delta+\log{(1-\delta)})}.\]
        \end{proof}
        Combining \cref{clm: lower coupling} and \cref{clm: Lower Geo}, we have the result via a union bound over the failure probabilities.
    \end{proof}

\bibliography{References-icalp}

\end{document}